%% LyX 2.0.5.1 created this file.  For more info, see http://www.lyx.org/.
%% Do not edit unless you really know what you are doing.
\documentclass[english]{llncs}
\pdfoutput=1
\usepackage[T1]{fontenc}
\usepackage{color}
\definecolor{note_fontcolor}{rgb}{1, 0, 0}
\usepackage{babel}
\usepackage{float}
\usepackage{amsmath}
\usepackage{amssymb}
\usepackage{xargs}[2008/03/08]
\usepackage[unicode=true,
 bookmarks=true,bookmarksnumbered=false,bookmarksopen=false,
 breaklinks=false,pdfborder={0 0 0},backref=false,colorlinks=true]
 {hyperref}
\hypersetup{pdftitle={Decision Problems for Additive Regular Functions},
 pdfauthor={Rajeev Alur and Mukund Raghothaman}}

\makeatletter

%%%%%%%%%%%%%%%%%%%%%%%%%%%%%% LyX specific LaTeX commands.
%% The greyedout annotation environment
\newenvironment{lyxgreyedout}
  {\textcolor{note_fontcolor}\bgroup\ignorespaces}
  {\ignorespacesafterend\egroup}
\floatstyle{ruled}
\newfloat{algorithm}{tbp}{loa}
\providecommand{\algorithmname}{Algorithm}
\floatname{algorithm}{\protect\algorithmname}

%%%%%%%%%%%%%%%%%%%%%%%%%%%%%% Textclass specific LaTeX commands.
\numberwithin{equation}{section}
\numberwithin{figure}{section}

%%%%%%%%%%%%%%%%%%%%%%%%%%%%%% User specified LaTeX commands.
\usepackage{amssymb}
\usepackage{bussproofs}
\usepackage{centernot}
\usepackage{datetime}
\usepackage{filecontents}
\usepackage{microtype}
\usepackage{tikz}
\usepackage{pgfplots}
\usepackage{stmaryrd}

\usetikzlibrary{automata}
\usetikzlibrary{backgrounds}
\usetikzlibrary{fit}
\usetikzlibrary{positioning}

% For page count only. Remove on Final
\pagestyle{plain}

\@ifundefined{showcaptionsetup}{}{%
 \PassOptionsToPackage{caption=false}{subfig}}
\usepackage{subfig}
\makeatother

\begin{document}

\title{Decision~Problems~for~Additive~Regular~Functions}

\author{Rajeev Alur \and Mukund Raghothaman}

\institute{University of Pennsylvania\\
\email{\{\href{mailto:alur@cis.upenn.edu}{alur}, \href{mailto:rmukund@cis.upenn.edu}{rmukund}\}@cis.upenn.edu}}

\maketitle

\global\long\def\autobox#1{#1}

\global\long\def\opname#1{\autobox{\operatorname{#1}}}

\global\long\def\dontcare{\_}

\global\long\def\bbracket#1{\left\llbracket #1\right\rrbracket }

\global\long\def\cnot#1{\centernot#1}

\global\long\def\proofsep{\mathbin{\vdash}}

\global\long\def\fCenter{\proofsep}

\global\long\def\binaryprimitive#1#2{\BinaryInf$#1\fCenter#2$}

\global\long\def\axiomprimitive#1#2{\Axiom$#1\fCenter#2$}

\global\long\def\unaryprimitive#1#2{\UnaryInf$#1\fCenter#2$}

\global\long\def\trinaryprimitive#1#2{\TrinaryInf$#1\fCenter#2$}

\global\long\def\bussproof#1{#1\DisplayProof}

\global\long\def\binaryinfc#1#2#3#4{#1#2\RightLabel{\ensuremath{{\scriptstyle \textrm{#4}}}}\BinaryInfC{\ensuremath{#3}}}

\global\long\def\trinaryinfc#1#2#3#4#5{#1#2#3\RightLabel{\ensuremath{{\scriptstyle \textrm{#5}}}}\TrinaryInfC{\ensuremath{#4}}}

\global\long\def\unaryinfc#1#2#3{#1\RightLabel{\ensuremath{{\scriptstyle \textrm{#3}}}}\UnaryInfC{\ensuremath{#2}}}

\global\long\def\axiomc#1{\AxiomC{\ensuremath{#1}}}

\global\long\def\binaryinf#1#2#3#4#5{#1#2\RightLabel{\ensuremath{{\scriptstyle \textrm{#5}}}}\binaryprimitive{#3}{#4}}

\global\long\def\trinaryinf#1#2#3#4#5#6{#1#2#3\RightLabel{\ensuremath{{\scriptstyle \textrm{#6}}}}\trinaryprimitive{#4}{#5}}

\global\long\def\unaryinf#1#2#3#4{#1\RightLabel{\ensuremath{{\scriptstyle \textrm{#4}}}}\unaryprimitive{#2}{#3}}

\global\long\def\axiom#1#2{\axiomprimitive{#1}{#2}}

\global\long\def\axrulesp#1#2#3#4{\bussproof{\unaryinf{\axiomc{\vphantom{#4}}}{#1}{#2}{#3}}}

\global\long\def\axrule#1#2#3{\axrulesp{#1}{#2}{#3}{\proofsep\Gamma,\Delta}}

\global\long\def\unrule#1#2#3#4#5{\bussproof{\unaryinf{\axiom{#1}{#2}}{#3}{#4}{#5}}}

\global\long\def\unrulec#1#2#3{\bussproof{\unaryinfc{\axiomc{#1}}{#2}{#3}}}

\global\long\def\binrule#1#2#3#4#5#6#7{\bussproof{\binaryinf{\axiom{#1}{#2}}{\axiom{#3}{#4}}{#5}{#6}{#7}}}

\global\long\def\binrulec#1#2#3#4{\bussproof{\binaryinfc{\axiomc{#1}}{\axiomc{#2}}{#3}{#4}}}

\global\long\def\roset#1{\left\{  #1\right\}  }

\global\long\def\ruset#1#2{\roset{#1\;\middle\vert\;#2}}

\global\long\def\union#1#2{#1\cup#2}

\global\long\def\bigunion#1#2{\bigcup_{#1}#2}

\global\long\def\intersection#1#2{#1\cap#2}

\global\long\def\bigintersection#1#2{\bigcap_{#1}#2}

\global\long\def\bigland#1#2{\bigwedge_{#1}#2}

\global\long\def\powerset#1{2^{#1}}

\global\long\def\cart#1#2{#1\times#2}

\global\long\def\tuple#1{\left(#1\right)}

\global\long\def\quoset#1#2{\left.#1\middle/#2\right.}

\global\long\def\equivclass#1{\left[#1\right]}

\global\long\def\refltransclosure#1{#1^{*}}

\newcommandx\funcapphidden[3][usedefault, addprefix=\global, 1=]{#2#1#3}

\global\long\def\funcapplambda#1#2{\funcapphidden[\,]{#1}{#2}}

\global\long\def\funcapptrad#1#2{\funcapphidden{#1}{\tuple{#2}}}

\global\long\def\funccomp#1#2{#1\circ#2}

\global\long\def\arrow#1#2{#1\to#2}

\global\long\def\func#1#2#3{#1:\arrow{#2}{#3}}

\global\long\def\N{\mathbb{N}}

\global\long\def\Z{\mathbb{Z}}

\global\long\def\Q{\mathbb{Q}}

\global\long\def\R{\mathbb{R}}

\global\long\def\D{\mathbb{D}}

\global\long\def\Zmod#1{\quoset{\Z}{#1\Z}}

\global\long\def\vector#1{\mathbf{#1}}

\global\long\def\dotprod#1#2{#1\cdot#2}

\global\long\def\bool{{\tt bool}}

\global\long\def\true{{\tt true}}

\global\long\def\false{{\tt false}}

\global\long\def\inlinemod#1#2{#1\bmod{#2}}

\global\long\def\parenmod#1#2{#1\pmod{#2}}

\global\long\def\strempty{\epsilon}

\newcommandx\strcat[3][usedefault, addprefix=\global, 1=]{#2#1#3}

\global\long\def\strrev#1{#1^{\autobox{rev}}}

\global\long\def\strlen#1{\left|#1\right|}

\global\long\def\strlenp#1#2{\strlen{#1}_{#2}}

\global\long\def\prefix#1#2{\funcapptrad{\texttt{\ensuremath{\opname{pre}}}}{#1,#2}}

\global\long\def\suffix#1#2{\funcapptrad{\texttt{suf}}{#1,#2}}

\global\long\def\regexor#1#2{#1+#2}

\newcommandx\regexconcat[3][usedefault, addprefix=\global, 1=]{\strcat[#1]{#2}{#3}}

\global\long\def\kstar#1{#1^{*}}

\global\long\def\bigoh#1{\funcapptrad O{#1}}

\global\long\def\littleoh#1{\funcapptrad o{#1}}

\global\long\def\bigomega#1{\funcapptrad{\Omega}{#1}}

\global\long\def\littleomega#1{\funcapptrad{\omega}{#1}}

\global\long\def\bigtheta#1{\funcapptrad{\Theta}{#1}}

\global\long\def\logspace{\textsc{logspace}}

\global\long\def\nlogspace{\textsc{nlogspace}}

\global\long\def\np{\textsc{np}}

\global\long\def\conp{\textsc{co}\np}

\global\long\def\pspace{\textsc{pspace}}

\global\long\def\npspace{\textsc{npspace}}

\global\long\def\exptime{\textsc{exptime}}

\global\long\def\complete{\mbox{-complete}}

\global\long\def\hard{\mbox{-hard}}

\global\long\def\npc{\np\complete}

\global\long\def\nph{\np\hard}

\global\long\def\conpc{\conp\complete}

\global\long\def\pspacec{\pspace\complete}

\global\long\def\pspaceh{\pspace\hard}

\global\long\def\exptimec{\exptime\complete}

\global\long\def\exptimeh{\exptime\hard}

\global\long\def\crasimp{\operatorname{\mbox{CRA}}}

\global\long\def\cra#1{\funcapptrad{\crasimp}{#1}}

\global\long\def\ccrasimp{\operatorname{\mbox{CCRA}}}

\global\long\def\ccra#1{\funcapptrad{\ccrasimp}{#1}}

\global\long\def\sst{\operatorname{\mbox{SST}}}

\global\long\def\acra{\operatorname{\mbox{ACRA}}}

\global\long\def\dacra#1{\funcapptrad{\acra}{#1}}

\global\long\def\wa{\operatorname{\mbox{WA}}}

\newcommandx\oracleeval[1][usedefault, addprefix=\global, 1=\cdot]{\operatorname{\textsc{EVAL}}\left(#1\right)}

\newcommandx\oracleequiv[1][usedefault, addprefix=\global, 1=\cdot]{\operatorname{\textsc{EQUIV}}\left(#1\right)}

\global\long\def\bud{k}

\global\long\def\strat{\theta}

\begin{abstract}
Additive Cost Register Automata ($\acra$) map strings to integers
using a finite set of registers that are updated using assignments
of the form ``$x:=y+c$'' at every step. The corresponding class
of \emph{additive regular functions} has multiple equivalent characterizations,
appealing closure properties, and a decidable equivalence problem.
In this paper, we solve two decision problems for this model. First,
we define the \emph{register complexity} of an additive regular function
to be the minimum number of registers that an $\acra$ needs to compute
it. We characterize the register complexity by a necessary and sufficient
condition regarding the largest subset of registers whose values can
be made far apart from one another. We then use this condition to
design a $\pspace$ algorithm to compute the register complexity of
a given $\acra$, and establish a matching lower bound. Our results
also lead to a machine-independent characterization of the register
complexity of additive regular functions. Second, we consider \emph{two-player
games over $\acra$s}, where the objective of one of the players is
to reach a target set while minimizing the cost. We show the corresponding
decision problem to be $\exptimec$ when costs are non-negative integers,
but undecidable when costs are integers.
\end{abstract}
\begin{lyxgreyedout}

\newcommand{\branchshortorfull}[1]{}

\branchshortorfull{Neither branch enabled.}%
\end{lyxgreyedout}

\section{\label{sec:Introduction} Introduction}

Consider the following scenario: a customer frequents a coffee shop,
and each time purchases a cup of coffee costing $\$2$. At any time,
he may fill a survey, for which the store offers to give him a discount
of $\$1$ for each of his purchases that month (including for purchases
already made). We model this by the machine $M_{1}$ shown in figure
\ref{fig:Intro:M1:Coffee}. There are two states $q_{S}$ and $q_{\lnot S}$,
indicating whether the customer has filled out the survey during the
current month. There are three events to which the machine responds:
$C$ indicates the purchase of a cup of coffee, $S$ indicates completion
of the survey, and $\#$ indicates the end of a month. The registers
$x$, $y$ track how much money the customer owes the establishment:
in state $q_{\lnot S}$, the amount in $x$ assumes that he will not
fill out a survey that month, and the amount in $y$ assumes that
he will fill out a survey before the end of the month. At any time
the customer wishes to settle his account, the machine outputs the
amount of money owed, which is always the value in register $x$.

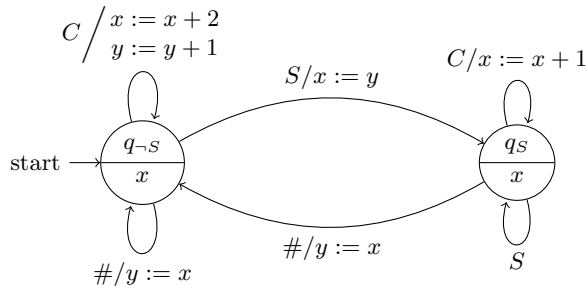
\begin{figure}
\begin{centering}
\pgfmathwidth{"$\quoset{C}{x := x + 2}$"}
\def\stringswidth{\pgfmathresult}
\pgfmathparse{2.0 * \stringswidth}
\global\edef\stringswidth{\pgfmathresult}

\begin{tikzpicture}[node distance=\stringswidth pt]

  \node [state with output, initial] (qnS) {$q_{\lnot S}$ \nodepart{lower} $x$};
  \node [state with output, right=of qnS] (qS) {$q_{S}$ \nodepart{lower} $x$};

  \path [->] (qnS) edge [loop above] node {$\quoset C {\begin{array}{c} x := x + 2\\ y := y + 1\end{array}}$} (qnS);
  \path [->] (qnS) edge [bend left, above] node {$\quoset S {x := y}$} (qS);
  \path [->] (qnS) edge [loop below] node {$\quoset \# {y := x}$} (qnS);

  \path [->] (qS) edge [loop above] node {$\quoset C {x := x + 1}$} (qS);
  \path [->] (qS) edge [loop below] node {$S$} (qS);
  \path [->] (qS) edge [bend left, below] node {$\quoset \# {y := x}$} (qnS);

\end{tikzpicture}
\par\end{centering}

\caption{\label{fig:Intro:M1:Coffee} $\acra$ $M_{1}$ models a customer in
a coffee shop. It implements a function $\func{f_{1}}{\kstar{\roset{C,S,\#}}}{\Z}$
mapping the purchase history of the customer to the amount he owes
the store.}
\end{figure}

The automaton $M_{1}$ has a finite state space, and a finite set
of integer-valued registers. On each transition, each register (say
$u$) is updated by an expression of the form ``$u:=v+c$'', for
some register $v$ and constant $c\in\Z$. Which of these registers
will eventually contribute to the output is determined by future events,
and so the cost of an event depends not only on the past, but also
on the future. Indeed, it can be shown that these machines are \emph{closed
under regular lookahead}, i.e. the register updates can be conditioned
on regular properties of an as-yet-unseen suffix, for no gain in expressivity.
The important limitation is that register updates are test-free, and
cannot examine the register contents.

The motivation behind the model is generalizing the idea of regular
languages to quantitative properties of strings. A language $L\subseteq\kstar{\Sigma}$
is regular when there is an accepting DFA. Regular languages are a
robust class, permitting multiple equivalent representations as regular
expressions and as formulas in monadic second-order logic. Recently
in \cite{CRA-Arxiv}, we proposed the model of regular functions:
they are the MSO-definable transductions from strings to expression
trees over some pre-defined grammar. The class of functions thus defined
depends on the grammar allowed; the simplest is when the underlying
domain is the set of integers $\Z$, and expressions involve constants
and binary addition, and we call these additive regular functions.
Additive regular functions have appealing closure properties, such
as under linear combination, input reversal, and regular lookahead,
and several analysis problems are efficiently decidable -- such as
containment, shortest paths and equivalence checking. $\acra$s correspond
to this class of additive regular functions.

Observe that machine $M_{1}$ has two registers, and it is not immediately
clear how (if it is even possible) to reduce this number. This is
the first question that this paper settles: Given an Additive Cost
Register Automaton ($\acra$) $M$, how do we determine the minimum
number of registers needed by any $\acra$ to compute $\bbracket M$?
We describe a phenomenon called register separation, and show that
any equivalent $\acra$ needs at least $k$ registers iff the registers
of $M$ are $k$-separable. It turns out that the registers of $M_{1}$
are $2$-separable, and hence two registers are necessary. We then
go on to show that determining $k$-separability is $\pspacec$. Determining
the register complexity is the natural analogue of the state minimization
problem for DFAs \cite{Hopcroft-Automata}.

The techniques used to analyse register complexity allow us to state
a result similar to the pumping lemma for regular languages: The register
complexity of $f$ is at least $k$ iff for some $m$, we have strings
$\sigma_{0}$, \ldots{}, $\sigma_{m}$, $\tau_{1}$, \ldots{}, $\tau_{m}$,
suffixes $w_{1}$, \ldots{}, $w_{k}$, and $k$ distinct coefficient
vectors $\mathbf{c}_{1},\ldots,\mathbf{c}_{k}\in\Z^{m}$ so that for
all vectors $\mathbf{x}\in\N^{m}$, $\funcapptrad f{\strcat{\strcat{\sigma_{0}}{\strcat{\tau_{1}^{x_{1}}}{\strcat{\sigma_{1}}{\strcat[\ldots]{\tau_{2}^{x_{2}}}{\sigma_{m}}}}}}{w_{i}}}=\sum_{j}c_{ij}x_{j}+d_{i}$.
Thus, depending on the suffix $w_{i}$, at least one of the cycles
$\tau_{1}$, \ldots{}, $\tau_{k}$ contributes differently to the
final cost.

Next, we consider $\acra$s with turn-based alternation. These are
games where several objective functions are simultaneously computed,
but only one of these objectives will eventually contribute to the
output, based on the actions of both the system and its environment.
Alternating $\acra$s are thus related to multi-objective games and
Pareto optimization \cite{MultiObjective}, but are a distinct model
because each run evaluates to a single value. We study the reachability
problem in $\acra$ games: Given a budget $\bud$, is there a strategy
for the system to reach an accepting state with cost at most $\bud$?
We show that this problem is $\exptimec$ when the registers assume
values from $\N$, and undecidable when the registers are integer-valued.

\subsubsection*{Related work}

\noindent The traditional model of string-to-number transducers has
been (non-deterministic) weighted automata (WA). Additive regular
functions are equivalent to unambiguous weighted automata over the
tropical semiring, and are therefore strictly sandwiched between weighted
automata and deterministic $\wa$s in expressiveness. Deterministic
$\wa$s are $\acra$s with one register, and algorithms exist to compute
the \emph{state complexity} and for minimization \cite{Mohri-EarliestNormalForm}.
Mohri \cite{Mohri-WA} presents a nice survey of the field. While
the determinizability of weighted automata remains an open problem
\cite{Kirsten-Twins,Kupferman-WA}, it has been solved in polynomial
time for the specific case of unambiguous weighted automata. There
is a polynomial translation from unambiguous $\wa$s to $\acra$s,
and the algorithm of subsection \ref{sub:RegMin:Pspace} runs in polynomial
time when the number of registers $k=2$. Thus, to the extent to which
they are relevant, we match the bounds available in the literature.
Recent work on the quantitative analysis of programs \cite{Chatterjee-QuantLang}
also uses weighted automata, but does not deal with minimization or
with notions of regularity. Data languages \cite{KaminskiFrancez-FiniteMemoryAutomata}
are concerned with strings over a (possibly infinite) data domain
$\D$. Recent models \cite{Bojanczyk-DataAutomata} have obtained
Myhill-Nerode characterizations, and hence minimization algorithms,
but the models are intended as acceptors, and not for computing more
general functions. Turn-based weighted games \cite{Markey-WeightedGames}
are $\acra$ games with a single register, and in this special setting,
it is possible to solve non-negative optimal reachability in polynomial
time. Of the techniques used in the paper, difference bound invariants
are a standard tool. However when we need them, in section \ref{sec:SepReg},
we have to deal with disjunctions of such constraints, and show termination
of invariant strengthening -- to the best of our knowledge, the relevant
problems have not been solved before.

\subsubsection*{Outline of the paper}

\noindent We define the automaton model in section \ref{sec:Prelim}.
In section \ref{sec:SepReg}, we introduce the notion of separability,
and establish its connection to register complexity. In section \ref{sec:RegMin},
we show that determining the register complexity is $\pspacec$. Finally,
in section \ref{sec:Games}, we study $\acra$ reachability games
-- in particular, that $\dacra{\Z}$ games are undecidable, and that
$\dacra{\N}$ reachability games are $\exptimec$.

\section{\label{sec:Prelim} Additive Regular Functions}

We will use additive cost register automata as the working definition
of additive regular functions, i.e. a function%
\footnote{By convention, we represent a partial function $\func fAB$ as a total
function $\func fA{B_{\bot}}$, where $B_{\bot}=\union B{\roset{\bot}}$,
and $\bot\notin B$ is the ``undefined'' value.%
} $\func f{\kstar{\Sigma}}{\Z_{\bot}}$ is regular iff it is implemented
by an $\acra$. An $\acra$ is a deterministic finite state machine,
supplemented by a finite number of integer-valued registers. Each
transition specifies, for each register $u$, a test-free update of
the form ``$u:=v+c$'', for some register $v$, and constant $c\in\Z$.
Accepting states are labelled with output expressions of the form
``$v+c$''.
\begin{definition}
\label{defn:ACRA}  An $\acra$ is a tuple $M=\tuple{Q,\Sigma,V,\delta,\mu,q_{0},F,\nu}$,
where $Q$ is a finite non-empty set of states, $\Sigma$ is a finite
input alphabet, $V$ is a finite set of registers, $\func{\delta}{\cart Q{\Sigma}}Q$
is the state transition function, $\func{\mu}{\cart Q{\cart{\Sigma}V}}{\cart V{\Z}}$
is the register update function, $q_{0}\in Q$ is the start state,
$F\subseteq Q$ is the non-empty set of accepting states, and $\func{\nu}F{\cart V{\Z}}$
is the output function.

The configuration of the machine is a pair $\gamma=\tuple{q,\autobox{val}}$,
where $q$ is the current state, and $\func{\autobox{val}}V{\Z}$
maps each register to its value. Define $\tuple{q,\autobox{val}}\opname{\to}^{a}\tuple{q^{\prime},\autobox{val}^{\prime}}$
iff $\funcapptrad{\delta}{q,a}=q^{\prime}$ and for each $u$, if
$\funcapptrad{\mu}{q,a,u}=\tuple{v,c}$, then $\funcapptrad{\autobox{val}^{\prime}}u=\funcapptrad{\autobox{val}}v+c$.

Machine $M$ then implements a function $\func{\bbracket M}{\kstar{\Sigma}}{\Z_{\bot}}$
defined as follows. For each $\sigma\in\kstar{\Sigma}$, let $\tuple{q_{0},\autobox{val}_{0}}\opname{\to}^{\sigma}\tuple{q_{f},\autobox{val}_{f}}$,
where $\funcapptrad{\autobox{val}_{0}}v=0$ for all $v$. If $q_{f}\in F$
and $\funcapptrad{\nu}{q_{f}}=\tuple{v,c}$, then $\funcapptrad{\bbracket M}{\sigma}=\funcapptrad{\autobox{val}_{f}}v+c$.
Otherwise $\funcapptrad{\bbracket M}{\sigma}=\bot$.
\end{definition}
We will write $\funcapptrad{\autobox{val}}{u,\sigma}$ for the value
of a register $u$ after the machine has processed the string $\sigma$
starting from the initial configuration. 

\begin{figure}[t]
\subfloat[\label{fig:Prelim:CRA:Examples:M2}$M_{2}$.]{\pgfmathwidth{"$\quoset{a}{x := x + 1}$"}
\def\stringswidth{\pgfmathresult}
\pgfmathparse{2.0 * \stringswidth}
\global\edef\stringswidth{\pgfmathresult}

\begin{tikzpicture}[above, sloped, node distance=\stringswidth pt]

  \node [state with output, initial] (q0) {$q_{0}$ \nodepart{lower} $x$};
  \node [state with output, right=of q0] (q1) {$q_{1}$ \nodepart{lower} $y$};

  \path [->] (q0) edge [loop above]
    node {$\quoset{a}{
      \begin{array}{l}
        x :=  x + 1\\
        y := y
      \end{array}}$}
    (q0);
  \path [->] (q0) edge [bend left]
    node {$\quoset{b}{
      \begin{array}{l}
        x := x\\
        y := y + 1
      \end{array}}$}
    (q1);

  \path [->] (q1) edge [loop above]
    node {$\quoset{b}{
      \begin{array}{l}
        x := x\\
        y := y + 1
      \end{array}}$}
    (q1);
  \path [->] (q1) edge [bend left]
    node [below] {$\quoset{a}{
      \begin{array}{l}
        x := x + 1\\
        y := y
      \end{array}}$}
    (q0);

\end{tikzpicture}

}\hfill{}\subfloat[\label{fig:Prelim:CRA:Examples:M3}$M_{3}$.]{\pgfmathwidth{"$\quoset {a}{x := 2}$"}
\def\stringswidth{\pgfmathresult}
\pgfmathparse{1.5 * \stringswidth}
\global\edef\stringswidth{\pgfmathresult}

% \begin{tikzpicture}[above, sloped, node distance=\stringswidth pt]
\begin{tikzpicture}[above, sloped]

  \node [state with output, initial] (q0) {$q_{0}$ \nodepart{lower} $x$};

  \path [->] (q0) edge [loop above]
    node {$\quoset{a}{
      \begin{array}{l}
        x := y + 1\\
        y := y + 1\\
        z := z
      \end{array}}$}
    (q0);
  \path [->] (q0) edge [loop below]
    node {$\quoset{b}{
      \begin{array}{l}
        x := z + 1\\
        y := y\\
        z := z + 1
      \end{array}}$}
    (q0);

\end{tikzpicture}

}

\caption{\label{fig:Prelim:CRA:Examples} $\acra$s $M_{2}$ and $M_{3}$ operate
over the input alphabet $\Sigma=\roset{a,b}$. Both implement the
function defined as $\funcapptrad{f_{2}}{\strempty}=0$, and for all
$\sigma$, $\funcapptrad{f_{2}}{\protect\strcat{\sigma}a}=\strlen{\protect\strcat{\sigma}a}_{a}$,
and $\funcapptrad{f_{2}}{\protect\strcat{\sigma}b}=\strlen{\protect\strcat{\sigma}b}_{b}$.
Here $\strlen{\sigma}_{a}$ is the number of occurrences of the symbol
$a$ in the string $\sigma$.}
\end{figure}
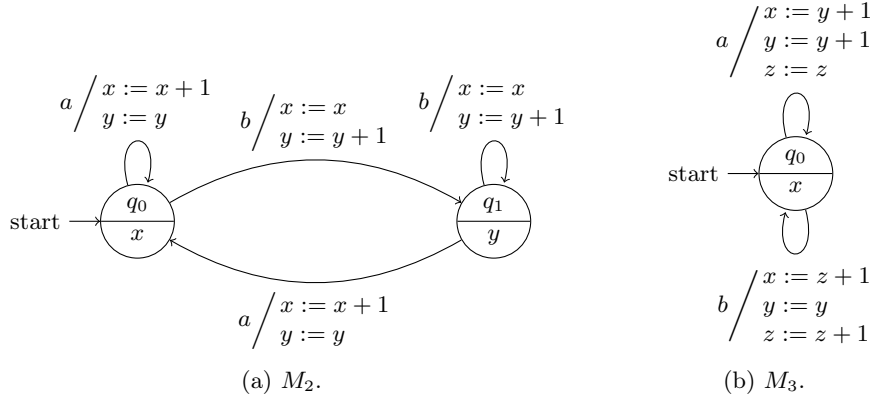

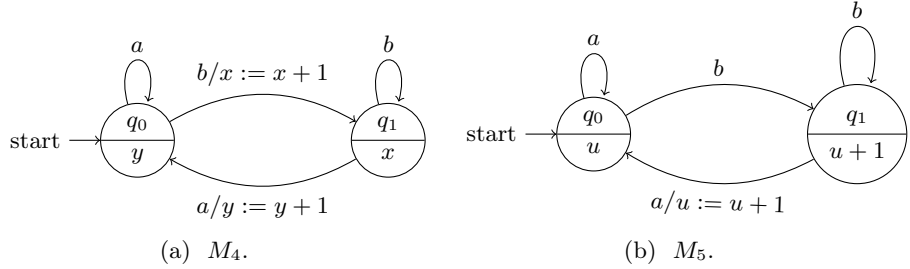
\begin{figure}[t]
\subfloat[\label{fig:Prelim:CRA:Examples:M4} $M_{4}$.]{\pgfmathwidth{"$\quoset{a}{x := x + 1}$"}
\def\stringswidth{\pgfmathresult}
\pgfmathparse{1.25 * \stringswidth}
\global\edef\stringswidth{\pgfmathresult}

\begin{tikzpicture}[above, sloped, node distance=\stringswidth pt]

  \node [state with output, initial] (q0) {$q_{0}$ \nodepart{lower} $y$};
  \node [state with output, right=of q0] (q1) {$q_{1}$ \nodepart{lower} $x$};

  \path [->] (q0) edge [loop above] node {$a$} (q0);
  \path [->] (q0) edge [bend left] node {$\quoset{b}{x := x + 1}$} (q1);
  \path [->] (q1) edge [loop above] node {$b$} (q1);
  \path [->] (q1) edge [bend left] node [below] {$\quoset{a}{y := y + 1}$} (q0);

\end{tikzpicture}

}\hfill{}\subfloat[\label{fig:Prelim:CRA:Examples:M5} $M_{5}$.]{\pgfmathwidth{"$\quoset{a}{x := x + 1}$"}
\def\stringswidth{\pgfmathresult}
\pgfmathparse{1.25 * \stringswidth}
\global\edef\stringswidth{\pgfmathresult}

\begin{tikzpicture}[above, sloped, node distance=\stringswidth pt]

  \node [state with output, initial] (q0) {$q_{0}$ \nodepart{lower} $u$};
  \node [state with output, right=of q0] (q1) {$q_{1}$ \nodepart{lower} $u + 1$};

  \path [->] (q0) edge [loop above] node {$a$} (q0);
  \path [->] (q0) edge [bend left] node {$b$} (q1);
  \path [->] (q1) edge [loop above] node {$b$} (q1);
  \path [->] (q1) edge [bend left] node [below] {$\quoset{a}{u := u + 1}$} (q0);

\end{tikzpicture}}

\caption{\label{fig:Prelim:CRA:Examples2} $\acra$s $M_{4}$ and $M_{5}$
operate over $\Sigma=\roset{a,b}$, and implement $f_{4}$ so that
if $\sigma$ ends in an $a$, then $\funcapptrad{f_{4}}{\sigma}=\mbox{number of }a\mbox{-s immediately following a }b$,
and otherwise $\funcapptrad{f_{4}}{\sigma}=\mbox{number of }b\mbox{-s immediately following an }a$.
When we omit the update for a register, say $v$, it is understood
to mean, ``$v:=v$''.}
\end{figure}

\begin{remark}
\label{rem:ACRA:Simpl:Trim}  Any given $\acra$ $M$ can easily
be \emph{trimmed} so that every state $q$ is reachable from the initial
state. All claims made in this paper assume that the machines under
consideration are trimmed.
\end{remark}
An important precondition when we define $k$-separability will be
that the registers be \emph{live}. Informally, a register $v$ is
live in state $q$ if for some suffix $\sigma\in\kstar{\Sigma}$,
on processing $\sigma$ starting $q$, the initial value of $v$ is
what influences the final output. For example, $M_{1}$ could be augmented
with a third register $z$ tracking the length of the string processed.
However, the value of $z$ would be irrelevant to the computation
of $f_{1}$, and $z$ would thus not be live. A straightforward way
of defining live registers is through \emph{suffix summaries}. Let
$q$ be a state, and $\sigma\in\kstar{\Sigma}$ be a string. Then
the suffix summary of $\sigma$ in $q$ is either a register-offset
pair $\cart V{\Z}$, or $\bot$, and which summarizes the effect of
processing $\sigma$ starting from state $q$. If the suffix summary
of $\sigma$ in $q$ is $\tuple{v,c}$, then it would be informally
read as: ``The result of processing suffix $\sigma$ if the machine
is currently in $q$ is the current value of $v$ plus $c$.'' Formally,
\begin{definition}
\label{defn:Suffix-Summary}  Let $q$ and $q^{\prime}$ be states
so that $\funcapptrad{\delta}{q,\sigma}=q^{\prime}$.
\begin{enumerate}
\item If $q^{\prime}\notin F$, then the suffix summary of $\sigma$ in
$q$ is $\bot$, and
\item (otherwise if $q^{\prime}\in F$) if $\funcapptrad{\nu}{q^{\prime}}=\tuple{u,c}$,
and $\funcapptrad{\mu}{q,\sigma,u}=\tuple{v,c^{\prime}}$, then the
suffix summary of $\sigma$ in $q$ is $\tuple{v,c+c^{\prime}}$.
\end{enumerate}
A register $v$ is \emph{live} in a state $q$ if for some $\sigma\in\kstar{\Sigma}$,
$c\in\Z$, the suffix summary of $\sigma$ in $q$ is $\tuple{v,c}$.\end{definition}
\begin{remark}
\label{rem:ACRA:Simpl:Live}  Whether a register $v$ is live in
a state $q$ is a static property of the state. At each state $q$,
pick a register $v_{q}$ which is live in $q$. If no such register
exists, then arbitrarily choose $v_{q}\in V$. On all transitions
into $q$, reset all non-live registers $v$ to the value of $v_{q}$.
This rewrite does not affect $\bbracket M$, and can be performed
in linear time. All claims made in this paper assume that this rewrite
has been performed.
\end{remark}
We recall the following properties of $\acra$s \cite{CRA-Arxiv}:

\subsubsection*{Equivalent characterizations}

Additive regular functions are equivalent to unambiguous weighted
automata \cite{Mohri-WA} over the tropical semiring. These are non-deterministic
machines with a single counter. Each transition increments the counter
by an integer $c$, and accepting states have output increments, also
integers. The unambiguous restriction requires that there be a single
accepting path for each string in the domain, thus the ``$\min$''
operation of the tropical semiring is unused. Consider the class of
MSO-definable string-to-integer transductions, with the successor
and predecessor operations allowed over integers. This class of functions
coincides with additive regular functions. Recently, streaming tree
transducers \cite{Loris-STT} have been proposed as the regular model
for string-to-tree transducers -- $\acra$s are equivalent in expressiveness
to regular string-to-term transducers with binary addition as the
base grammar.

\subsubsection*{Closure properties}

\noindent What makes additive%
\footnote{We will often drop the adjective ``additive'', and refer simply
to regular functions.%
} regular functions interesting to study is their robustness to various
manipulations:
\begin{enumerate}
\item for all $c\in\Z$, if $f_{1}$ and $f_{2}$ are regular functions,
then so are $f_{1}+f_{2}$ and $cf_{1}$,
\item if $f$ is a regular function, then $f_{rev}$ defined as $\funcapptrad{f_{rev}}{\sigma}=\funcapptrad f{\strrev{\sigma}}$
is also regular, and
\item if $f_{1}$ and $f_{2}$ are regular functions, and $L$ is a regular
language, then the function $f$ defined as $\funcapptrad f{\sigma}=\mbox{if }\sigma\in L,\mbox{ then }\funcapptrad{f_{1}}{\sigma},\mbox{ else }\funcapptrad{f_{2}}{\sigma}$
is also regular.
\item $\acra$s are closed under regular lookahead, i.e. even if the machine
were allowed to make decisions based on a regular property of the
suffix rather than simply the next input symbol, there would be no
increase in expressiveness.
\end{enumerate}

\subsubsection*{Analysis problems}

Given $\acra$s $M_{1}$ and $M_{2}$, equivalence-checking and the
min-cost problem ($\min_{\sigma\in\kstar{\Sigma}}\funcapptrad{\bbracket M}{\sigma}$)
can be solved in polynomial time. It follows then that containment
(for all $\sigma$, $\funcapptrad{\bbracket{M_{1}}}{\sigma}\leq\funcapptrad{\bbracket{M_{2}}}{\sigma}$)
also has a polynomial time algorithm.

\section{\label{sec:SepReg} Characterizing the Register Complexity}

The register complexity of a function $f$ is the minimum number of
registers an $\acra$ needs to compute it. For example the register
complexity of both $\bbracket{M_{1}}$ in figure \ref{fig:Intro:M1:Coffee}
and $\bbracket{M_{2}}$ in figure \ref{fig:Prelim:CRA:Examples:M2}
is $2$, while the register complexity of $\bbracket{M_{4}}$ is $1$.
Computing the register complexity is the first problem we solve, and
will occupy us for this section and the next.
\begin{definition}
\label{defn:RegComplexity}  Let $\func f{\kstar{\Sigma}}{\Z_{\bot}}$
be a regular function. The register complexity of $f$ is the smallest
number $k$ so there is an $\acra$ $M$ implementing $f$ with only
$k$ registers.
\end{definition}
Informally, the registers of $M$ are separable in some state $q$
if their values can be pushed far apart. For example, consider the
registers $x$, $y$ of $M_{1}$ in state $q_{0}$. For any constant
$c$, there is a string $\sigma=C^{c}$ leading to $q_{0}$ so that
$\left|\funcapptrad{\autobox{val}}{x,\sigma}-\funcapptrad{\autobox{val}}{y,\sigma}\right|\geq c$.

\begin{definition}
\label{defn:SepReg}  Let $M=\tuple{Q,\Sigma,V,\delta,\mu,q_{0},\nu}$
be an $\acra$. The registers of $M$ are \emph{$k$-separable} if
there is some state $q$, and a collection $U\subseteq V$ so that
\begin{enumerate}
\item $\left|U\right|=k$, all registers $v\in U$ are live in $q$, and
\item for all $c\in\Z$, there is a string $\sigma$, so that $\funcapptrad{\delta}{q_{0},\sigma}=q$
and for all distinct $u,v\in U$, $\left|\funcapptrad{\autobox{val}}{u,\sigma}-\funcapptrad{\autobox{val}}{v,\sigma}\right|\geq c$.
\end{enumerate}
\end{definition}
The registers of a machine $M$ are not $k$-separable if at every
state $q$, and collection $U$ of $k$ live registers, there is a
constant $c$ so for all strings $\sigma$ to $q$, $\left|\funcapptrad{\autobox{val}}{u,\sigma}-\funcapptrad{\autobox{val}}{v,\sigma}\right|<c$,
for some distinct $u,v\in U$. Note that the specific registers which
are close may depend on $\sigma$. For example, in machine $M_{3}$
from figure \ref{fig:Prelim:CRA:Examples:M3}, if the last symbol
was $a$, then $x$ and $y$ will be close, while if the last symbol
was a $b$, then $x$ and $z$ are guaranteed to be equal.
\begin{theorem}
\label{thm:SepReg} Let $\func f{\kstar{\Sigma}}{\Z_{\bot}}$ be a
function defined by an $\acra$ $M$. Then the register complexity
of $f$ is at least $k$ iff the registers of $M$ are $k$-separable.
\end{theorem}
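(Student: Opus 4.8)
The plan is to prove the two implications separately: the direction ``$k$-separable $\Rightarrow$ register complexity $\geq k$'' is a clean counting argument, while its converse ``register complexity $\geq k \Rightarrow k$-separable'' (which I treat in contrapositive form) is the technical core and will need difference-bound reasoning. For the first direction, suppose the registers of $M$ are $k$-separable, witnessed by a state $q$ and a live set $U=\{v_1,\dots,v_k\}$, and suppose for contradiction that some $\acra$ $N$ with at most $k-1$ registers computes $f=\bbracket M$. Since each $v_i$ is live in $q$, Definition~\ref{defn:Suffix-Summary} supplies a suffix $w_i$ whose suffix summary in $q$ is $\tuple{v_i,c_i}$; in particular $\delta(q,w_i)\in F$, so $f(\sigma w_i)=\mathrm{val}_M(v_i,\sigma)+c_i$ for every $\sigma$ with $\delta(q_0,\sigma)=q$. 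Using separability I would pick, for each $t\in\N$, a string $\sigma_t$ reaching $q$ along which all pairwise differences $\left|\mathrm{val}_M(u,\sigma_t)-\mathrm{val}_M(v,\sigma_t)\right|$ (for $u\neq v\in U$) exceed $t$. As $N$ has finitely many states, infinitely many of the $\sigma_t$ drive $N$ into one common state $p$, and I restrict attention to this subsequence.

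Because $N$ is deterministic and all its updates are ``copy-plus-constant'', reading $w_i$ from $p$ makes the output trace back to a single register: there are an index $\theta(i)$ and a constant $d_i$, both depending only on $p$ and $w_i$, with $\bbracket N(\sigma_t w_i)=\mathrm{val}_N(\theta(i),\sigma_t)+d_i$ for every retained $t$. Since $\theta$ maps the $k$ indices into the at most $k-1$ registers of $N$, there exist $i\neq j$ with $\theta(i)=\theta(j)$, whence $\bbracket N(\sigma_t w_i)-\bbracket N(\sigma_t w_j)=d_i-d_j$ is a fixed constant. On the other hand $f(\sigma_t w_i)-f(\sigma_t w_j)=\bigl(\mathrm{val}_M(v_i,\sigma_t)-\mathrm{val}_M(v_j,\sigma_t)\bigr)+(c_i-c_j)$ is unbounded in $t$, contradicting $\bbracket N=f$. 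Hence no such $N$ exists and the register complexity is at least $k$.

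For the converse I would argue that if the registers of $M$ are \emph{not} $k$-separable then there is an equivalent $\acra$ with only $k-1$ registers. Taking the maximum of the finitely many bounds over all states and all size-$k$ live sets, I first extract a single constant $c$ such that no reachable configuration has $k$ live registers that are pairwise at distance $\geq c$. A greedy left-to-right selection on the sorted register values then shows that at every reachable configuration the live registers can be covered by at most $k-1$ ``representatives'', each other live register lying within a bounded offset of its representative. The intended machine $N$ stores these $\leq k-1$ representative values in its registers and keeps in its finite control the $M$-state together with the current clustering data: the partition of the live registers into groups and each register's bounded offset from its group representative (finitely much information, and the example $M_3$ shows the partition genuinely depends on the input). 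The output is read off directly, since the live output register equals its representative plus a stored offset plus the original output constant, and each ``copy-plus-constant'' update acts on a representative by adjusting offsets or reassigning groups.

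The main obstacle is making this abstract transition function well defined and finite-state. Offsets inside a group, and their changes under an update, are determined by the stored data, but the relative positions of \emph{distinct} representatives are unbounded and unknown, so it is a priori unclear whether an update merges two groups or splits one; worse, a merge can occur precisely when an inter-group gap falls in a narrow band just above $c$, so a single threshold does not suffice and the invariant one must maintain is a \emph{disjunction} of difference-bound constraints. I would therefore phrase correctness as an invariant-strengthening procedure: start from the bound $c$, and whenever some reachable abstract configuration admits an update whose merge/split behaviour is not determined by the finite data, raise the threshold and refine the constraint set, repeating until the set of profiles (partition plus offsets) reachable at each state is closed under the abstract transition. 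The crux, and the step I expect to be hardest, is proving that this strengthening terminates, i.e. that only finitely many profiles ever arise; once that is established the abstract transitions become deterministic copy-plus-constant updates, $N$ uses $k-1$ registers with $\bbracket N=f$, and the register complexity is at most $k-1$, completing the contrapositive.
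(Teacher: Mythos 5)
Your first direction is correct and essentially coincides with the paper's Lemma~\ref{lem:SepReg:101}: liveness supplies suffixes that extract the separated registers' values, and a pigeonhole over the at most $k-1$ registers of the candidate machine yields two suffixes whose outputs must differ by a fixed constant, while $k$-separability forces their difference under $f$ to be unbounded. Your variant --- fixing a single state $p$ of $N$ by an infinite pigeonhole over the witnesses $\sigma_t$, instead of taking a maximum of suffix-summary constants over all states of $N$ --- is a harmless (if anything cleaner) reorganization of the same argument.

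The converse, however, contains a genuine gap, and it is exactly the step you yourself flag as the one you ``expect to be hardest'': termination of the invariant strengthening. Your plan --- rewrite non-separability as a disjunction of difference-bound constraints at each state, strengthen these until they are inductive under weakest preconditions, then build a machine with at most $k-1$ registers whose finite control stores the partition into groups and the bounded offsets --- is precisely the paper's plan (Algorithm~\ref{alg:SepReg:Saturate} followed by the product construction of Lemma~\ref{lem:SepReg:101C}), and the paper explicitly identifies proving that this strengthening terminates as the main technical problem of the section; a proof that stops short of it has not proved the theorem. The missing idea is a well-foundedness argument for difference-bound constraints (Claim~\ref{clm:SepReg:Strengthening} and Lemma~\ref{lem:SepReg:NoInfChains}): each constraint $C$ induces an equivalence relation on registers, where $u\equiv_C v$ iff $C$ bounds $u-v$ above and below by integers, and any strict strengthening either strictly enlarges this relation (possible at most $\left|V\right|^{2}$ times) or leaves it fixed and strictly shrinks the finite integer interval $a<u-v<b$ certified for some already-related pair (possible at most $b-a$ times once that pair first becomes related). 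Hence every strictly increasing chain of difference-bound constraints is finite. Because the invariants are disjunctions, strengthening produces not a chain but a finitely-branching tree of disjuncts, each strictly stronger than its parent, and K\"{o}nig's lemma is then needed to convert ``no infinite branch'' into ``finite tree'', which is what makes saturation --- and hence your computation of reachable profiles --- terminate. Without some such well-founded measure, your procedure of ``raising the threshold and refining'' whenever a merge or split is ambiguous has no evident termination guarantee (thresholds could a priori grow forever), so this lemma is not a detail to be deferred but the crux of this direction of the equivalence.
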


The two directions of the proof are presented separately in the following
subsections.

\subsection{\label{full:sub:SepReg:Sep} $k$-separability implies a lower bound
on the register complexity}

Consider machine $M_{1}$ from figure \ref{fig:Intro:M1:Coffee}.
Here $k=2$, and registers $x$, $y$ are separated in state $q_{\lnot S}$.
Let $\sigma_{1}=\strempty$, i.e. the empty string, and $\sigma_{2}=S$
-- these are suffixes which, when starting from $q_{\lnot S}$, ``extract''
the values currently in $x$, $y$.

Now suppose an equivalent counter-example machine $M^{\prime}$ is
proposed with only one register $v$. At each state $q^{\prime}$
of $M^{\prime}$, observe the ``effect'' of processing suffixes
$\sigma_{1}$, $\sigma_{2}$. Each of these can be summarized by an
expression of the form $v+c_{q^{\prime}i}$ for $i\in\roset{1,2}$,
the current value of register $v$, and $c_{q^{\prime}i}\in\Z$. Thus,
the outputs differ by no more than $\left|\left(v+c_{q^{\prime}1}\right)-\left(v+c_{q^{\prime}2}\right)\right|\leq\left|c_{q^{\prime}1}\right|+\left|c_{q^{\prime}2}\right|$.
Fix $n=\funcapptrad{\max_{q^{\prime}}}{\left|c_{q^{\prime}1}\right|+\left|c_{q^{\prime}2}\right|}$,
and observe that for all $\sigma$, $\left|\funcapptrad{\bbracket{M^{\prime}}}{\strcat{\sigma}{\sigma_{1}}}-\funcapptrad{\bbracket{M^{\prime}}}{\strcat{\sigma}{\sigma_{2}}}\right|\leq n$.
For $\sigma=C^{n+1}$, $\left|\funcapptrad{f_{1}}{\strcat{\sigma}{\sigma_{1}}}-\funcapptrad{f_{1}}{\strcat{\sigma}{\sigma_{2}}}\right|>n$,
so $M^{\prime}$ cannot be equivalent to $M_{1}$. In general, by
a straightforward application of the pigeon-hole principle, we conclude:
\begin{lemma}
\label{lem:SepReg:101} Let $M$ be an $\acra$ whose registers are
$k$-separable. Then the register complexity of the implemented function
$f$ is at least $k$.
\end{lemma}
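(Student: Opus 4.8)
The plan is to argue by contradiction: assume some $\acra$ $M'$ computes $f$ using strictly fewer than $k$ registers, and derive a violation of the separation guaranteed by Definition~\ref{defn:SepReg}. First I would fix the witnessing data: a state $q$ of $M$ and a set $U=\roset{v_{1},\ldots,v_{k}}$ of $k$ distinct registers, all live in $q$. Because each $v_{i}$ is live in $q$, liveness (Definition~\ref{defn:Suffix-Summary}) supplies a suffix $w_{i}$ whose suffix summary in $q$ is a pair $\tuple{v_{i},c_{i}}$. Hence for every $\sigma$ with $\funcapptrad{\delta}{q_{0},\sigma}=q$ we have $\funcapptrad f{\strcat{\sigma}{w_{i}}}=\funcapptrad{\autobox{val}}{v_{i},\sigma}+c_{i}$, and in particular $f$ is defined on each $\strcat{\sigma}{w_{i}}$. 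These $k$ suffixes are the tools that ``read off'' the individual register values of $M$.

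Next I would extract a uniform bound forced by the scarcity of registers in $M'$. Fix any state $q'$ of $M'$. Each suffix $w_{i}$ has a suffix summary in $q'$, say $\tuple{v'_{q',i},d_{q',i}}$ with $v'_{q',i}$ a register of $M'$; this is well-defined (not $\bot$) since $\bbracket{M'}=f$ is defined on the relevant strings. As $M'$ has fewer than $k$ registers, the map $i\mapsto v'_{q',i}$ on $\roset{1,\ldots,k}$ is not injective, so by the pigeonhole principle some pair $i\neq j$ collides. For any $\sigma$ reaching $q'$ in $M'$ the two suffixes then read the same register, so $\bbracket{M'}(\strcat{\sigma}{w_{i}})$ and $\bbracket{M'}(\strcat{\sigma}{w_{j}})$ differ by exactly $|d_{q',i}-d_{q',j}|$, \emph{independent of $\sigma$}. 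Taking $n=\max_{q'}\max_{i\neq j}|d_{q',i}-d_{q',j}|$ over the finitely many states $q'$ yields one constant such that for every $\sigma$ some colliding pair $(i,j)$ satisfies $|\funcapptrad f{\strcat{\sigma}{w_{i}}}-\funcapptrad f{\strcat{\sigma}{w_{j}}}|\le n$.

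The observation that closes the argument is that $k$-separability pushes \emph{all} pairs of $U$ apart simultaneously, so it is irrelevant which pair happens to collide. Concretely, put $m=\max_{i\neq j}|c_{i}-c_{j}|$ and apply condition~2 of Definition~\ref{defn:SepReg} with $c=n+m+1$ to get a string $\sigma$ reaching $q$ with $|\funcapptrad{\autobox{val}}{v_{i},\sigma}-\funcapptrad{\autobox{val}}{v_{j},\sigma}|\ge c$ for every distinct pair. Whatever colliding pair $(i,j)$ the state reached by $M'$ on $\sigma$ forces, I expand using the summaries from $M$: $\funcapptrad f{\strcat{\sigma}{w_{i}}}-\funcapptrad f{\strcat{\sigma}{w_{j}}}=(\funcapptrad{\autobox{val}}{v_{i},\sigma}-\funcapptrad{\autobox{val}}{v_{j},\sigma})+(c_{i}-c_{j})$, giving $|\funcapptrad f{\strcat{\sigma}{w_{i}}}-\funcapptrad f{\strcat{\sigma}{w_{j}}}|\ge c-m=n+1>n$, which contradicts the uniform bound of the previous paragraph.

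The step I expect to be the main obstacle is the bookkeeping in the middle paragraph: the colliding pair depends on the state $\sigma$ reaches in $M'$, hence on $\sigma$ itself, which at first appears to block a single clean bound. The resolution, worth stating explicitly, is that one never needs to control \emph{which} pair collides, precisely because separability separates every pair at once; consequently the contradiction requires only one sufficiently large value of $c$, with no auxiliary pigeonhole over the infinitely many separating strings. This is exactly the general-$k$ version of the $k=2$ calculation sketched for $M_{1}$ before the lemma, where a single register forces both suffixes onto it and the colliding pair is automatically $(1,2)$.
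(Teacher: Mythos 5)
Your proof is correct and follows essentially the same route as the paper's: extract $k$ suffixes from liveness, apply the pigeonhole principle to their suffix summaries at each state of the smaller machine $M'$, and instantiate separability with one sufficiently large constant to contradict the resulting uniform bound. The only differences are cosmetic — you track the two offset bounds ($n$ and $m$) separately and note that colliding outputs differ by an exact constant, where the paper lumps both into a single bound $c_{p}$ and uses $c=1+4c_{p}$.
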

\begin{proof}
Assume otherwise, so we have a machine $M^{\prime}$ with only $k-1$
registers and equivalent to $M$. Let $q$ be that state of $M$ where
separation is achieved. For each $v\in U$, there is a suffix $\sigma_{v}\in\kstar{\Sigma}$
and constant $c_{v}$ so that the suffix summary of $\sigma_{v}$
in $q$ is $\tuple{v,c_{v}}$.

For each state $q^{\prime}$ of the proposed counter-example machine
$M^{\prime}$, and each register $v\in U$ of $M$, record the suffix
summary of $\sigma_{v}$ in $q^{\prime}$ -- $\tuple{v_{q^{\prime}v}^{\prime},c_{q^{\prime}v}^{\prime}}$,
or $\bot$. Define $c_{p}$ as:

\begin{align*}
c_{p} & =\funcapptrad{\max}{\max_{v\in U}\left|c_{v}\right|,\max_{q^{\prime},v\in U}\left|c_{q^{\prime}v}^{\prime}\right|}.
\end{align*}
Consider the state of the machine $M^{\prime}$ after processing some
prefix $\sigma_{pre}$. For each suffix $\sigma_{v}$, there must
be a register $v^{\prime}$ so that $\left|\funcapptrad{\bbracket{M^{\prime}}}{\strcat{\sigma_{pre}}{\sigma_{v}}}-\funcapptrad{\autobox{val}}{v^{\prime},\sigma_{pre}}\right|\leq c_{p}$.
Since there are only $k-1$ registers in $M^{\prime}$ and $k$ suffixes
$\sigma_{i}$, it must either be the case that for some pair $u,v\in U$,
this condition holds offset from the same register $v^{\prime}$.

We assumed the condition: for each $c\in\N$, there is a path $\sigma$
to $q$ so that $\left|\funcapptrad{\autobox{val}}{u,\sigma}-\funcapptrad{\autobox{val}}{v,\sigma}\right|\geq c$
(for all distinct $u,v\in V$). Instantiate this condition with $c=1+4c_{p}$,
and let $\sigma_{pre}=\sigma$ be the witness prefix. Let $\sigma_{u}$,
$\sigma_{v}$ be the pair of suffixes for which the suffix summaries
in $q^{\prime}$ depend on the same $v^{\prime}$. Since $\left|\funcapptrad{\autobox{val}}{u,\sigma_{pre}}-\funcapptrad{\autobox{val}}{v,\sigma_{pre}}\right|\geq c$,
it follows that $\left|\funcapptrad f{\strcat{\sigma_{pre}}{\sigma_{u}}}-\funcapptrad f{\strcat{\sigma_{pre}}{\sigma_{v}}}\right|\geq c-2c_{p}\geq1+2c_{p}$.
However, from our closeness condition, it follows that $\left|\funcapptrad{\bbracket{M^{\prime}}}{\strcat{\sigma_{pre}}{\sigma_{u}}}-\funcapptrad{\bbracket{M^{\prime}}}{\strcat{\sigma_{pre}}{\sigma_{v}}}\right|\leq2c_{p}$,
leading to a contradiction.\end{proof}

\subsection{\label{full:sub:SepReg:NecSep} Non-separability permits register
elimination}

\subsubsection{Intuition \label{ssub:SepReg:NecSep:Intuition}}

Say we are given an $\acra$ $M$, and told that its registers are
not $k$-separable. This can be rewritten in the form of an invariant
at each state: for each state $q$, there is a constant $c_{q}$ so
for every collection $U\subseteq V$ with $\left|U\right|=k$, and
for every string $\sigma$ with $\funcapptrad{\delta}{q_{0},\sigma}=q$,
there must exist distinct $u,v\in U$ with $\left|\funcapptrad{\autobox{val}}{u,\sigma}-\funcapptrad{\autobox{val}}{v,\sigma}\right|<c$.
For example, with $3$ registers $x$, $y$, $z$, this invariant
would be $\exists c,\left|x-y\right|<c\lor\left|y-z\right|<c\lor\left|z-x\right|<c$.
Now, if we know that $\left|x-y\right|<c$, then it suffices to explicitly
maintain the value of only one register, and the (bounded) difference
can be stored in the state.

Consider machines $M_{4}$, $M_{5}$ in figure \ref{fig:Prelim:CRA:Examples2}.
While $M_{4}$ is the intuitive first solution to the problem of implementing
$f_{4}$, the difference between registers $x$, $y$ is always bounded.
In both states, the non-separability invariant states $\left|x-y\right|\leq1$,
or $-1\leq x-y\leq1$. We exploit this to construct $M_{5}$, which
uses just one register $u$.

Since we need to track these register differences during execution,
the invariants must be inductive: if $D_{q}$ and $D_{q^{\prime}}$
are the invariants at states $q$, $q^{\prime}$, and $q\to^{a}q^{\prime}$
is a transition in the machine, then it must be the case that $D_{q}\implies\funcapptrad{\autobox{\textsc{wp}}}{D_{q^{\prime}},q,a}$.
Here $\autobox{\textsc{wp}}$ refers to the standard notion of the
weakest precondition from program analysis: the invariant $D_{q^{\prime}}$
identifies a set of variable valuations. $\funcapptrad{\autobox{\textsc{wp}}}{D_{q^{\prime}},q,a}$
is exactly that set of variable valuations $\autobox{val}$ so that
$\tuple{q,\autobox{val}}\to^{a}\tuple{q^{\prime},\autobox{val}^{\prime}}$
for some $D_{q^{\prime}}$-satisfying valuation $\autobox{val}^{\prime}$.

The standard technique to make a collection of invariants inductive
is strengthening: if $D_{q}\cnot{\implies}\funcapptrad{\autobox{\textsc{wp}}}{D_{q^{\prime}},q,a}$,
then $D_{q}$ is replaced with $D_{q}\land\funcapptrad{\autobox{\textsc{wp}}}{D_{q^{\prime}},q,a}$,
and this process is repeated at every pair of states until fixpoint.
This procedure is seeded with the invariants asserting non-separability.
However, before the result of this back-propagation can be used in
our arguments, we must prove that the method terminates -- this is
the main technical problem solved in this section.

We now sketch a proof of this termination claim for a simpler class
of invariants. Consider the class of difference-bound constraints
-- assertions of the form $C=\bigland{u,v\in V}{a_{uv}<u-v<b_{uv}}$,
where for each $u$, $v$, $a_{uv},b_{uv}\in\Z$ or $a_{uv},b_{uv}\in\roset{-\infty,\infty}$.
Observe that $C$ induces an equivalence relation $\equiv_{C}$ over
the registers: $u\equiv_{C}v$ iff $a_{uv},b_{uv}\in\Z$. Let $C$
and $C^{\prime}$ be some pair of constraints so that $C\cnot{\implies}C^{\prime}$,
so that the assertion $C\land C^{\prime}$ is strictly stronger than
$C$. Either $C\land C^{\prime}$ relates a strictly larger set of
variables -- $\equiv_{C}\subsetneq\equiv_{C\land C^{\prime}}$ --
or (if $\equiv_{C}=\equiv_{C\land C^{\prime}}$) for some pair of
registers $u$, $v$, the bounds $a_{uv}^{\prime}<u-v<b_{uv}^{\prime}$
imposed by $C\land C^{\prime}$ are a strict subset of the bounds
$a_{uv}<u-v<b_{uv}$ imposed by $C$. Observe that the first type
of strengthening can happen at most $\left|V\right|^{2}$ times, while
the second type of strengthening can happen only after $a_{uv}$,
$b_{uv}$ are established for a pair of registers $u$, $v$, and
can then happen at most $b_{uv}-a_{uv}$ times. Thus the process of
repeated invariant strengthening must terminate. However, the statements
asserting non-separability are disjunctions of difference-bound constraints.
We show that the above insight is sufficient even for this generalization.

The rest of this subsection is devoted to formalizing the intuition
presented above.

\subsubsection{Difference bound constraints and well-formed invariants \label{ssub:SepReg:NecSep:Defn}}
\begin{definition}
\label{defn:SepReg:DiffBound} A \emph{difference bound constraint}
is a conjunction of constraints of the form $a<u-v<b$, for $a,b\in\union{\Z}{\roset{-\infty,\infty}}$
(and either $a$, $b$ are both finite, or both infinite), and $u,v\in V$.
\emph{Well-formed invariants} are finite disjunctions of difference
bound constraints.
\end{definition}
Note that if there is a non-trivial term corresponding to $u-v$ in
a difference bound constraint, then the difference is bounded both
from above and below, i.e. $a<u-v<b$, and $a,b\in\Z$. For example,
$0<u-v<\infty$ is not a difference bound constraint. The trivial
constraint $-\infty<u-v<\infty$ holds of every pair of registers.
Given a difference bound constraint $C$, it can be set in \emph{closed
form} where whenever $C$ contains the term $a<u-v<b$ it also contains
$-b<v-u<-a$, and if $C$ contains the terms $a<u-v<b$ and $a^{\prime}<v-w<b^{\prime}$,
then it also contains the term $a^{\prime\prime}<u-w<b^{\prime\prime}$,
for some $a+a^{\prime}\leq a^{\prime\prime}\leq b^{\prime\prime}\leq b+b^{\prime}$.
A difference bound constraint establishes an equivalence relation
over the registers of $V$, where $u\equiv v$ iff there is a constant
$c$ so that $C\implies\left|u-v\right|<c$. This is the same as saying
that $u\equiv v$ iff $C$ in closed form contains a non-trivial term
corresponding to $u-v$. The following proposition describes exactly
the cases when a difference-bound constraint $C$ is strictly stronger
than another constraint $C^{\prime}$:
\begin{claim}
\label{clm:SepReg:Strengthening} Let $C=c_{1}\land c_{2}\land\ldots\land c_{k}$
and $C^{\prime}=c_{1}^{\prime}\land c_{2}^{\prime}\land\ldots\land c_{k^{\prime}}^{\prime}$
be difference bound constraints. If $C$ is strictly stronger than
$C^{\prime}$, i.e. $C\implies C'$ but $C^{\prime}\cnot{\implies}C$,
then either
\begin{enumerate}
\item $\equiv^{\prime}\subsetneq\equiv$, where $\equiv$, $\equiv^{\prime}$
are the equivalence relations over $V$ generated by $C$ , $C^{\prime}$,
or
\item (otherwise if $\equiv^{\prime}=\equiv$) for some registers $u,v\in V$,
the best bounds $a<u-v<b$ and $a^{\prime}<u-v<b^{\prime}$ implied
by $C$ and $C^{\prime}$ are related as $\roset{a,a+1,a+2,\ldots,b}\subsetneq\roset{a^{\prime},a^{\prime}+1,a^{\prime}+2,\ldots,b^{\prime}}$.
\end{enumerate}
\end{claim}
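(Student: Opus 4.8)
The plan is to reduce the statement to a comparison of the \emph{best bounds} implied by the two constraints, exploiting the transitive ``shortest path'' closure rule described above. First I would put both $C$ and $C'$ in closed form and, for each ordered pair of registers $u,v$, write $(a_{uv},b_{uv})$ and $(a'_{uv},b'_{uv})$ for the tightest intervals with $C\implies a_{uv}<u-v<b_{uv}$ and $C'\implies a'_{uv}<u-v<b'_{uv}$; these are well defined because closure computes the tightest implied bound on every difference. The one structural fact I would isolate and prove first is that every such best bound is \emph{two-sided}: it is either the trivial $-\infty<u-v<\infty$, or finite at both ends. This is where the restriction to difference bound constraints (as opposed to arbitrary one-sided inequalities) is essential -- a finite upper bound on $u-v$ arises from a finite chain of conjuncts, and reversing that chain (using the companion terms $-b<v-u<-a$ present in closed form) produces a finite bound on $v-u$, hence a finite lower bound on $u-v$. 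With this in hand, $u\equiv v$ holds exactly when $(a_{uv},b_{uv})$ is finite, and likewise for $\equiv^{\prime}$.

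Next I would extract the easy consequence of $C\implies C'$. For every pair, $C$ implies $C'$'s own best bound, so by minimality $(a_{uv},b_{uv})\subseteq(a'_{uv},b'_{uv})$; in particular a finite $C'$-bound forces a finite $C$-bound, giving $\equiv^{\prime}\subseteq\equiv$ outright. If this inclusion is strict, case~(1) holds and I am done, so the remaining work is entirely under the assumption $\equiv^{\prime}=\equiv$, where I must exhibit a single pair witnessing case~(2).

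To find that pair I would use $C^{\prime}\cnot{\implies}C$ concretely: pick a valuation $\mathbf{y}$ satisfying $C'$ but not $C$, so $\mathbf{y}$ falsifies some conjunct $a<u-v<b$ of $C$. Then $y_u-y_v$ lies in $(a'_{uv},b'_{uv})$ but outside $(a,b)\supseteq(a_{uv},b_{uv})$, which upgrades the inclusion at this pair to a strict one, $(a_{uv},b_{uv})\subsetneq(a'_{uv},b'_{uv})$. Finally I would rule out the degenerate possibility that both bounds are trivial: since $\equiv^{\prime}=\equiv$, the pair is bounded under $C$ iff under $C'$, and if it were unbounded then two-sidedness forces both best bounds to equal $(-\infty,\infty)$, contradicting strictness -- so both are finite and $\{a_{uv},\ldots,b_{uv}\}\subsetneq\{a'_{uv},\ldots,b'_{uv}\}$, which is exactly case~(2). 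I expect the main obstacle to be the two-sidedness lemma together with the attendant claim that closure really yields the tightest implied bound on each difference; once those are pinned down, the case split and the violating-conjunct argument are routine. A minor point to dispatch is the possibility that $C$ is unsatisfiable, which does not arise for the invariants attached to reachable states and can be excluded at the outset.
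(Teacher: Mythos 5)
The first thing to note is that the paper contains no proof of this claim at all: it is asserted (``The following proposition describes exactly the cases\ldots'') and then used in lemma~\ref{lem:SepReg:NoInfChains}, so there is no official argument to compare yours against. Your proposal is a correct filling-in of that gap. The skeleton you chose is the right one: (i) best implied bounds are two-sided, because each conjunct is two-sided by definition~\ref{defn:SepReg:DiffBound} and any bounding chain of conjuncts can be reversed; (ii) $C\implies C'$ forces interval inclusion $\left(a_{uv},b_{uv}\right)\subseteq\left(a'_{uv},b'_{uv}\right)$ at every pair, hence $\equiv'\subseteq\equiv$; (iii) a valuation witnessing $C'\cnot{\implies}C$ must violate a specific conjunct of $C$, which upgrades the inclusion at that pair to a strict one; (iv) under $\equiv'=\equiv$, two-sidedness rules out the trivial bounds, so both intervals are finite and case~(2) holds. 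You also correctly identify why the ``both finite or both infinite'' restriction in the definition is essential: with one-sided constraints allowed, a pair could be strengthened $\left(0,\infty\right),\left(1,\infty\right),\left(2,\infty\right),\ldots$ forever without changing $\equiv$, and the well-ordering lemma would fail.

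Two fine points to tighten. First, as you flag yourself, the load-bearing auxiliary fact is that for satisfiable $C$ the tightest implied bound on $u-v$ is realized by a chain of conjuncts (i.e., shortest-path closure computes semantically tightest bounds); this is a standard difference-bound-matrix lemma, but it must be proved or cited, since both your two-sidedness argument and the minimality step in (ii) lean on it. Second, your stated reason for excluding unsatisfiable $C$ is slightly wrong in context: the constraints to which the claim is actually applied are the individual disjuncts manipulated by algorithm~\ref{alg:SepReg:Saturate}, and a disjunct $C\land\left(\cdots\right)$ can be unsatisfiable even though the full invariant $D_{q}$ holds at a reachable state. The exclusion is nevertheless harmless, for a different reason: nothing is strictly stronger than an unsatisfiable constraint, so unsatisfiable nodes are necessarily leaves of the trees $T_{q}$ and the claim is never invoked with such a $C$ as the weaker side, while taking $C$ unsatisfiable as the stronger side makes case~(2) hold degenerately. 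One sentence to this effect would make the proof airtight where it is used.
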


\subsubsection{Well-formed invariants are well-ordered \label{ssub:SepReg:NecSep:Strengthening}}
\begin{lemma}
\label{lem:SepReg:NoInfChains} Let $T$ be a labeled tree, where
each node $u$ is labeled with a difference bound constraint $C_{u}$,
and is of finite degree. Say also that the constraint at each node
is strictly stronger than the constraint at its parent. Then $T$
cannot be infinite.\end{lemma}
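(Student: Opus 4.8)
The plan is to reduce the statement to a well-foundedness claim about strengthening chains and then invoke K\"onig's lemma. Since every node of $T$ has finite degree, $T$ is finitely branching; by K\"onig's lemma, if $T$ were infinite it would contain an infinite path $u_{0},u_{1},u_{2},\ldots$ descending from the root. Along such a path the labels would form an infinite sequence of difference bound constraints $C_{u_{0}},C_{u_{1}},C_{u_{2}},\ldots$ in which each $C_{u_{i+1}}$ is strictly stronger than $C_{u_{i}}$. So it suffices to prove that no such infinite strictly-strengthening chain of difference bound constraints can exist; the finiteness of $T$ then follows by contraposition.

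First I would apply Claim \ref{clm:SepReg:Strengthening} to each consecutive pair $C_{u_{i}},C_{u_{i+1}}$ of this hypothetical chain, classifying every step into one of its two kinds. A step of the first kind strictly enlarges the equivalence relation $\equiv$ generated by the constraint, since the stronger constraint relates strictly more register pairs. As $\equiv$ is an equivalence relation over the \emph{finite} set $V$, and these relations are non-decreasing along the chain, a step of the first kind can occur at most $\left|V\right|^{2}$ times. Hence after some index $N$ the equivalence relation is fixed, say at $\equiv$, and every subsequent step is of the second kind.

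After index $N$, for each of the finitely many pairs $u\equiv v$ the constraint pins down a genuine finite integer interval $a<u-v<b$, whereas for pairs that are not $\equiv$-related the bound stays trivial ($-\infty,\infty$) and is never touched, because $\equiv$ no longer changes. By Claim \ref{clm:SepReg:Strengthening}, each second-kind step strictly shrinks the integer set $\roset{a,\ldots,b}$ of at least one $\equiv$-related pair while leaving every other pair's interval nested inside its previous one. For a single fixed pair these intervals therefore form a strictly decreasing chain of finite subsets of $\Z$, all contained in the interval it held at index $N$, so its interval can shrink only finitely often -- at most once per integer it contains. Summing over the finitely many $\equiv$-related pairs bounds the total number of second-kind steps occurring after $N$, which contradicts the infinitude of the chain, hence of the path.

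The only delicate points are the appeal to K\"onig's lemma, which is exactly what licenses passing from an infinite finitely-branching tree to an infinite branch, and the observation that once $\equiv$ stabilizes every relevant bound is a real finite integer, so that the interval-narrowing measure is well-defined and bounded. The substance of the argument is thus the two-phase decomposition handed to us by Claim \ref{clm:SepReg:Strengthening}: finitely many coarsening steps on the finite equivalence lattice over $V$, followed by finitely many narrowing steps on finitely many finite integer intervals.
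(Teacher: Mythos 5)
Your proof is correct and takes essentially the same route as the paper's: K\"onig's lemma extracts an infinite strictly-strengthening chain from the finitely-branching tree, and then the two counting arguments---at most $\left|V\right|^{2}$ steps that enlarge the equivalence relation $\equiv$, followed by finitely many interval-narrowing steps per $\equiv$-related pair---show no such chain exists. Your two-phase decomposition (stabilization of $\equiv$ after some index $N$, then nested finite integer intervals) is just a more explicit rendering of the paper's terser argument.
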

\begin{proof}
Assume otherwise. By K\"{o}nig's lemma, there must be an infinite
path through this tree, and the constraints along this path strictly
increase in strength. We now argue that such a path cannot exist.

Observe that the equivalence relation $\equiv$ associated with a
difference bound constraint $C$ can have no more than $\left|V\right|^{2}$
elements. Also, once we have a pair of registers constrained as $a<u-v<b$,
(with both $a$, $b$ finite), the constraint can be tightened only
$b-a$ times. Furthermore, such tightening can only happen after $u\equiv v$,
by the equivalence relation $\equiv$ associated with $C$. Thus,
every sequence of difference bound constraints strictly increasing
in strength must be finite. This completes the proof.\end{proof}
\begin{definition}
\label{defn:SepReg:WP} Let $\funcapptrad{\varphi}{\autobox{val}}$
be an arbitrary formula that identifies sets of states. Let $q,q^{\prime}\in Q$
be two states so that $q^{\prime}\to^{a}q$ for some symbol $a\in\Sigma$.
Then, the weakest precondition of $\varphi$ at $q$ with respect
to the transition from $q^{\prime}$ on $a$, written as $\varphi^{\prime}=\funcapptrad{\autobox{\textsc{wp}}}{\varphi,q^{\prime},a}$
is $\funcapptrad{\varphi^{\prime}}{\autobox{val}^{\prime}}\iff\forall\autobox{val},\tuple{q^{\prime},\autobox{val}^{\prime}}\to^{a}\tuple{q,\autobox{val}}\implies\funcapptrad{\varphi}{\autobox{val}}$.
\end{definition}
It can be shown that $\funcapptrad{\autobox{\textsc{wp}}}{\varphi,q^{\prime},a}$
can be obtained by simultaneously replacing every occurrence of each
register with its update expression over the transition: $\varphi^{\prime}=\varphi\left[v\mapsto\funcapptrad{\mu}{q^{\prime},a,v}\right]_{v}$,
where the update expression $\funcapptrad{\mu}{q^{\prime},a,v}=\tuple{u,c}$
is read as ``$u+c$''. For example, consider machine $M_{4}$ in
figure \ref{fig:Prelim:CRA:Examples:M4}: the weakest precondition
of the assertion $-2<x-y<2$ in state $q_{1}$ with respect to the
transition on $b$ from $q_{0}$ is the assertion $-2<x+1-y<2$, or
$-3<x-y<1$. It can be shown that:
\begin{claim}

\begin{enumerate}
\item Let $D_{q^{\prime}}$ be a well-formed invariant in some state $q^{\prime}$
of an $\acra$ $M$. Let $q\in Q$ and $a\in\Sigma$ so $\funcapptrad{\delta}{q,a}=q^{\prime}$.
Then $\funcapptrad{\autobox{\textsc{wp}}}{D_{q^{\prime}},q,a}$ is
also a well-formed invariant.
\item Let $D$ and $D^{\prime}$ be well-formed invariants. Then so is $D\land D^{\prime}$.
\end{enumerate}
\end{claim}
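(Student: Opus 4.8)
The plan is to reduce both parts to closure statements about a \emph{single} difference bound constraint, exploiting that the weakest precondition — viewed via the substitution characterization $\varphi[v\mapsto\funcapptrad{\mu}{q,a,v}]_{v}$ noted just above — and conjunction both distribute over disjunction. Throughout I treat the logical constants $\top$ and $\bot$ as degenerate well-formed invariants: $\top$ is the empty conjunction (equivalently the trivial term $-\infty<u-v<\infty$), which is a difference bound constraint, and $\bot$ is the empty disjunction of difference bound constraints.

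For part~1, write $D_{q^{\prime}}=\bigvee_{j}C_{j}$ as a disjunction of difference bound constraints. Since substitution commutes with the boolean connectives, $\funcapptrad{\autobox{\textsc{wp}}}{D_{q^{\prime}},q,a}=\bigvee_{j}\funcapptrad{\autobox{\textsc{wp}}}{C_{j},q,a}$, so it suffices to handle a single $C=\bigland{i}{a_{i}<v_{i}-w_{i}<b_{i}}$. Each update has the form $v\mapsto u_{v}+c_{v}$ for a register $u_{v}$ and a constant $c_{v}\in\Z$, so substituting turns the $i$-th term into $a_{i}<(u_{v_{i}}+c_{v_{i}})-(u_{w_{i}}+c_{w_{i}})<b_{i}$. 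I then split into two cases. If $u_{v_{i}}\neq u_{w_{i}}$, the term rearranges to $a_{i}+c_{w_{i}}-c_{v_{i}}<u_{v_{i}}-u_{w_{i}}<b_{i}+c_{w_{i}}-c_{v_{i}}$, again a legal term: shifting by integer constants keeps finite bounds finite and infinite bounds infinite, so the ``both finite or both infinite'' discipline is preserved. If $u_{v_{i}}=u_{w_{i}}$, the term collapses to the constant assertion $a_{i}<c_{v_{i}}-c_{w_{i}}<b_{i}$, which is either vacuously true (drop it) or false (the conjunct becomes $\bot$). Collecting the surviving variable terms yields a difference bound constraint (or $\bot$), and the disjunction over $j$ is therefore well-formed.

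For part~2, write $D=\bigvee_{i}C_{i}$ and $D^{\prime}=\bigvee_{j}C^{\prime}_{j}$; distributivity gives $D\land D^{\prime}=\bigvee_{i,j}(C_{i}\land C^{\prime}_{j})$, so it suffices that the conjunction of two difference bound constraints is again one. Conjoining unions the two lists of terms; the only bookkeeping arises when both constraints bound the same pair $u-v$, say by $(a,b)$ and $(a^{\prime},b^{\prime})$, in which case I replace the pair by the intersection $\max(a,a^{\prime})<u-v<\min(b,b^{\prime})$. I again verify the parity condition: intersecting two finite intervals gives a finite interval, intersecting a finite interval with the trivial $(-\infty,\infty)$ returns the finite one, and intersecting two trivial intervals stays trivial, so both endpoints remain simultaneously finite or infinite. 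If the intersection is empty ($\max(a,a^{\prime})\geq\min(b,b^{\prime})$) the conjunct is $\bot$ and can be dropped from the disjunction.

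Neither part is deep; all the work is in the degenerate cases. The step I expect to need the most care is part~1's non-injective update, where two distinct registers $v_{i},w_{i}$ are assigned from a common source $u_{v_{i}}=u_{w_{i}}$: this is exactly where a genuine difference constraint degenerates into a constant test that can force the weakest precondition to $\top$ or $\bot$, and it is the reason one must admit these logical constants as (degenerate) well-formed invariants rather than insisting that every disjunct mention variables.
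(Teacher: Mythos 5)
Your proof is correct. The paper never actually proves this claim (it is introduced with ``It can be shown that:''), and your argument --- pushing the substitution characterization of $\textsc{wp}$ through the disjuncts, rearranging each term $a<(u_{v}+c_{v})-(u_{w}+c_{w})<b$, handling the degenerate constant cases via $\top$/$\bot$, and using distributivity of $\land$ over $\lor$ for part 2 --- is precisely the routine verification the paper intends, as confirmed by the surrounding text (the substitution characterization stated just before the claim, and the $\textsc{saturate}$ algorithm's explicit appeal to distributivity of $\land$ over $\lor$).
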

\begin{algorithm}
\begin{enumerate}
\item \label{enu:SepReg:Saturate:TreeInit} At each state $q$, initialize
a tree $T_{q}$. Nodes of this tree are labeled with difference bound
constraints. The root of each tree $T_{q}$ is $true$, and its immediate
children are the difference bound constraints $C$ in $D_{q}$.
\item \label{enu:SepReg:Saturate:While} While there exist states $q,q^{\prime}\in Q$
and symbols $a\in\Sigma$, so that $\funcapptrad{\delta}{q,a}=q^{\prime}$,
but $D_{q}\cnot{\implies}\funcapptrad{\autobox{\textsc{wp}}}{D_{q^{\prime}},q,a}$.
For each difference bound constraint $C\in D_{q}$ so that $C\cnot{\implies}\funcapptrad{\autobox{\textsc{wp}}}{D_{q^{\prime}},q,a}$:

\begin{enumerate}
\item Calculate $C\land\funcapptrad{\autobox{\textsc{wp}}}{D_{q^{\prime}},q,a}$,
by the distributivity of the logical $\textsc{and}$ operator over
$\textsc{or}$.
\item \label{enu:SepReg:Saturate:TreeGrow} For the node corresponding to
$C$ in $T_{q}$, create children corresponding to each disjunct in
$C\land\funcapptrad{\autobox{\textsc{wp}}}{D_{q^{\prime}},q,a}$.
\item Replace $C$ at $D_{q}$ with the disjuncts in $C\land\funcapptrad{\autobox{\textsc{wp}}}{D_{q^{\prime}},q,a}$.
\end{enumerate}
\item Return, for each state $q$, the well-formed constraint $D_{q}$.
\end{enumerate}
\caption{\label{alg:SepReg:Saturate} $\textsc{saturate}$. Given an $\acra$
$M$, and a well-formed invariant $D_{q}$ at each state $q\in Q$.
The algorithm returns an inductive strengthening of these invariants.}
\end{algorithm}

\begin{lemma}
\label{lem:SepReg:Saturate:Term} For every input $\tuple{M,D_{q\in Q}}$,
algorithm \ref{alg:SepReg:Saturate} terminates.\end{lemma}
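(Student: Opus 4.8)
The plan is to show that the trees $T_{q}$ maintained by the algorithm can only grow finitely, and then to argue that non-termination would force one of them to become infinite, contradicting Lemma \ref{lem:SepReg:NoInfChains}. First I would record the structural invariant preserved by the loop: at every point of the execution $D_{q}$ is exactly the disjunction of the labels of the current \emph{leaves} of $T_{q}$, and every node expanded in step \ref{enu:SepReg:Saturate:While} is a leaf at the moment it is processed and acquires \emph{all} of its children in that single step (step \ref{enu:SepReg:Saturate:TreeGrow}). Hence no node ever gains children twice, and the degree of a node $C$ equals the number of disjuncts of $C\land\funcapptrad{\autobox{\textsc{wp}}}{D_{q^{\prime}},q,a}$ after distributing $\textsc{and}$ over $\textsc{or}$. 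By the closure claim (that the weakest precondition of a well-formed invariant is well-formed, and that conjunctions of well-formed invariants are well-formed), each such disjunct is a genuine difference bound constraint, and there are only finitely many of them since $D_{q^{\prime}}$ is a finite disjunction. So every $T_{q}$ is finitely branching and its nodes are labeled by difference bound constraints, exactly as Lemma \ref{lem:SepReg:NoInfChains} requires.

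Next I would verify the crucial strict-strengthening property along every parent–child edge created in step \ref{enu:SepReg:Saturate:TreeGrow}. A node $C$ is expanded only when $C\cnot{\implies}\funcapptrad{\autobox{\textsc{wp}}}{D_{q^{\prime}},q,a}$, and its children are the disjuncts $C\land C_{i}$, where the $C_{i}$ are the difference bound constraints comprising $\funcapptrad{\autobox{\textsc{wp}}}{D_{q^{\prime}},q,a}$. Each $C\land C_{i}$ plainly implies $C$; and if some $C\land C_{i}$ were equivalent to $C$, we would have $C\implies C_{i}$, hence $C\implies\funcapptrad{\autobox{\textsc{wp}}}{D_{q^{\prime}},q,a}$, contradicting the expansion guard. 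Thus every child is strictly stronger than its parent, and Claim \ref{clm:SepReg:Strengthening} tells us exactly how: either the induced equivalence relation enlarges, or some pairwise bound tightens. The only edges for which strictness might fail are those from the root $true$ to the initial disjuncts of $D_{q}$; I would dispense with these by applying Lemma \ref{lem:SepReg:NoInfChains} separately to each of the finitely many subtrees hanging below the root, where the property does hold at every edge.

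Finally I would close by contradiction. Each iteration of the while loop expands at least one leaf into at least one child, so it strictly increases the total node count $\sum_{q}\left|T_{q}\right|$; were the algorithm never to terminate, this count would grow without bound, and since $Q$ is finite at least one tree $T_{q}$ would become infinite. But applying Lemma \ref{lem:SepReg:NoInfChains} to each of the finitely many subtrees below the root of $T_{q}$—each finitely branching and strictly strengthening at every edge—shows every such subtree, and hence $T_{q}$ itself, is finite, the desired contradiction. I expect the middle step to be the main obstacle: one must convert the \emph{syntactic} expansion guard $C\cnot{\implies}\funcapptrad{\autobox{\textsc{wp}}}{D_{q^{\prime}},q,a}$ into genuine strict strengthening of each \emph{individual} disjunct $C\land C_{i}$ (and not merely of the disjunction), while simultaneously checking that the disjuncts remain difference bound constraints so that Lemma \ref{lem:SepReg:NoInfChains} applies; the remaining bookkeeping—the leaf/node invariant and the finite-degree count—is routine.
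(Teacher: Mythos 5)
Your proof is correct and takes essentially the same route as the paper's: each loop iteration grows some tree $T_{q}$, so non-termination would force an infinite, finitely-branching tree whose labels strictly strengthen along every edge, contradicting Lemma \ref{lem:SepReg:NoInfChains}. The paper merely asserts the invariants (finite degree, strict strengthening at each node) that you verify in detail---your derivation of strictness of each disjunct $C\land C_{i}$ from the expansion guard, and your separate treatment of the root edges, only fill in steps the paper leaves implicit.
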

\begin{proof}
Observe that with each iteration of the loop in step \ref{enu:SepReg:Saturate:While},
the size of $T_{q}$ increases, for some $q$. If the algorithm were
to not terminate, then for some $q$, $T_{q}$ would be infinite.
We maintain the invariant that each node in $T_{q}$ has finite degree,
and the difference bound constraint at each node is strictly stronger
than that at its predecessor. But lemma \ref{lem:SepReg:NoInfChains}
tells us that no such infinite tree $T_{q}$ can exist.
\end{proof}

\subsubsection{Putting it all together: Constructing $M^{\prime}$ \label{ssub:SepReg:NecSep:Knockout}}
\begin{lemma}
\label{lem:SepReg:101C} Consider an $\acra$ $M$ whose registers
are not $k$-separable. Then, we can effectively construct an equivalent
machine $M^{\prime}$ with only $k-1$ registers.
\end{lemma}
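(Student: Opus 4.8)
The plan is to turn the hypothesis ``$M$ is not $\bud$-separable'' into a family of sound, inductive, well-formed invariants whose every disjunct partitions the registers into at most $\bud-1$ equivalence classes, and then to simulate $M$ by a machine that explicitly stores only one representative per class (in its $\bud-1$ registers) and keeps the bounded intra-class offsets in its finite control. First I would fix, for each state $q$, the uniform bound coming from non-$\bud$-separability: since $M$ is not $\bud$-separable, for every $q$ and every $\bud$-subset $U$ of live registers (we work with the normalization of Remark~\ref{rem:ACRA:Simpl:Live}) there is a constant $c_{q,U}$ with the property that every $\sigma$ reaching $q$ has some close pair in $U$; let $c_q=\max_U c_{q,U}$ over the finitely many $\bud$-subsets. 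I then define $D_q$ to be the disjunction, over all ways of sorting the registers and cutting the sorted order into at most $\bud-1$ contiguous blocks of diameter below $\strlen V\cdot c_q$, of the corresponding difference bound constraint. The combinatorial heart of the seeding step is the observation that a valuation with no $\bud$ pairwise-$c_q$-far registers must, after sorting the register values, have at most $\bud-2$ gaps of size $\ge c_q$, hence at most $\bud-1$ runs of bounded diameter; this shows $D_q$ is sound (every reachable valuation satisfies it) and that every disjunct of $D_q$ has at most $\bud-1$ classes (Definition~\ref{defn:SepReg:DiffBound}).

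Next I would run $\textsc{saturate}$ (Algorithm~\ref{alg:SepReg:Saturate}) on $\tuple{M,D_{q\in Q}}$; it terminates by Lemma~\ref{lem:SepReg:Saturate:Term} and returns inductive strengthenings, still sound since strengthening only removes valuations that were unreachable anyway. The key invariant to carry through saturation is that every disjunct retains at most $\bud-1$ classes. This is immediate from the shape of the algorithm: each surviving disjunct is a conjunction of one original seed disjunct with finitely many weakest-precondition terms, and conjunction of difference bound constraints can only merge classes (if either conjunct bounds $u-v$ then so does the conjunction), so the class count never rises above that of the seed disjunct, namely $\bud-1$. Finally I would refine each disjunct by enumerating the exact integer values of all intra-class differences — finitely many, as these differences range over the finite bounds guaranteed by Claim~\ref{clm:SepReg:Strengthening} — so that each refined disjunct pins every bounded difference to a constant while leaving inter-class differences free. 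This refinement preserves both soundness and the $\le \bud-1$-class bound, and it is what makes the subsequent simulation deterministic: because all within-class offsets are pinned, the weakest precondition relation becomes functional, i.e. each pinned disjunct at $q$ is carried by any transition $\funcapptrad{\delta}{q,a}=q'$ into a uniquely determined pinned disjunct at $q'$.

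With these invariants in hand I would construct $M^{\prime}$. Its states are the pairs $\tuple{q,C}$ where $C$ is a pinned disjunct of $D_q$, together with a fixed assignment of the $\bud-1$ registers $z_1,\dots,z_{\bud-1}$ of $M^{\prime}$ to the (at most $\bud-1$) classes of $C$; this state set is finite because the pinned offsets lie in a fixed finite range. The register $z_i$ holds the value of the chosen representative of class $i$, and every register $v$ of $M$ is recovered as $z_{\text{class}(v)}$ plus the pinned offset recorded by $C$. On input $a$, with $q'=\funcapptrad{\delta}{q,a}=q'$ and $\funcapptrad{\mu}{q,a,v}=\tuple{s(v),c_v}$, the new value of each representative $r$ is $\funcapptrad{\autobox{val}}{s(r)}+c_r=z_{\text{class}(s(r))}+o(s(r))+c_r$, which is a legal update ``register $+$ constant'' with a bounded constant; the target pinned disjunct $C'$ of $D_{q'}$ and the new offset assignment are read off deterministically from $C$ and the updates, using the functionality established above. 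Accepting states are those with $q\in F$, and the output at $\tuple{q,C}$ with $\funcapptrad{\nu}{q}=\tuple{u,c}$ is $z_{\text{class}(u)}+o(u)+c$, again of the required form; the start state is $\tuple{q_0,C_0}$ with all offsets $0$. Equivalence $\bbracket{M^{\prime}}=\bbracket M$ then follows by a routine induction on the input length showing that the recovered valuation of $M^{\prime}$ equals $\funcapptrad{\autobox{val}}{\cdot,\sigma}$ on all live registers.

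The main obstacle is exactly the deterministic, finite-state tracking of offsets across cycles: a priori the intra-class offset of two registers whose sources lie in a common class could grow by a constant on every transition and diverge, which would both destroy finiteness and force a class to split and thereby exceed $\bud-1$ classes. The resolution is that this cannot happen while remaining consistent with the invariants — if an offset provably diverges along a cycle then those registers are genuinely far-apart-able and cannot sit in a common class of any satisfied disjunct, so soundness of $D_{q'}$ forces the correct (split) pinned disjunct, while at the same time soundness guarantees the satisfied disjunct still has at most $\bud-1$ classes, so no split can ever produce a $\bud$-th class. Thus offsets stay within the fixed finite bounds and $\bud-1$ registers always suffice, which is precisely where non-$\bud$-separability, expressed through the inductive $\le \bud-1$-class invariants, does the essential work.
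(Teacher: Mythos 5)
Your proposal is correct and takes essentially the same route as the paper's proof: convert non-$k$-separability into well-formed invariants whose disjuncts relate all but at most $k-1$ registers, make them inductive via $\textsc{saturate}$ (terminating by Lemma~\ref{lem:SepReg:Saturate:Term}), and simulate $M$ with states recording the current state, the satisfied difference-bound disjunct, and the exact bounded offsets, keeping only one representative register per equivalence class. Your ``pinned disjuncts'' are precisely the paper's states $\tuple{q,C,\mathbf{v}}$, and your sorting/gap seeding argument and the observation that conjunction never increases the number of mutually unrelated registers just make explicit steps the paper leaves implicit.
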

\begin{proof}
The idea is that the difference bounds allow us to track all but $k-1$
registers in the state. So some registers $u$ are represented in
the state as a pair $\tuple{v,c}$, and we simulate the effect of
register $u$ by the expression $v+c$.

Since the registers of $M$ are not $k$-separable, at each state
$q$, and collection of $k$ registers $U$, there is a constant $c$
so for all paths $\sigma$ going to $q$, there is some pair of distinct
registers $u,v\in U$ so that $\left|\funcapptrad{\autobox{val}}{u,\sigma}-\funcapptrad{\autobox{val}}{v,\sigma}\right|<c$
(or equivalently, $-c<u-v<c$). Since $U\in\powerset V$ is drawn
from a finite set, and any instantiation of $c$ can be replaced by
a larger constant $c^{\prime}\geq c$, we can change the order of
quantifiers: at each state $q$, there is a constant $c$, so for
all paths $\sigma$ going to $q$ and collections of $k$ registers
$U\subseteq V$, there exist distinct $u,v\in U$ so that $\left|\funcapptrad{\autobox{val}}{u,\sigma}-\funcapptrad{\autobox{val}}{v,\sigma}\right|<c$.
Simplifying this, we obtain at each state $q$, a well-formed invariant
$D_{q}$. In each disjunct $C$ in $D_{q}$, there is never a collection
of more than $k-1$ mutually unrelated registers. Run $\autobox{\textsc{saturate}}$
on these constraints to make them inductive.

Now construct $M^{\prime}$ as follows. Consider some state $q$ and
some difference bound constraint $C\in D_{q}$. Now arbitrarily pick
a maximal set $V_{q,C}\subsetneq V$ of registers so no two elements
$u,v\in V_{q,C}$ are constrained by $C$. Since this set is maximal,
for every register $u\in V\setminus V_{q,C}$, there is a register
$v\in V_{q,C}$ so we have $C\implies a_{q,C,u}\leq u-v\leq b_{q,C,u}$,
for $a_{q,C,u},b_{q,C,u}\in\Z$. Define the state space $Q^{\prime}$
of $M^{\prime}$ as: 
\begin{align*}
Q^{\prime} & =\bigunion{q,C\in D_{q}}{\left(\cart{\roset{\tuple{q,C}}}{\prod_{u\in V\setminus V_{q,C}}\left[a_{q,C,u},b_{q,C,u}\right]}\right)},
\end{align*}
where $\left[a_{q,C,u},b_{q,C,u}\right]$ is the set of integers $a_{q,C,u}\leq z\leq b_{q,C,u}$.
Thus, for example, if we have $3$ registers $x$, $y$, $z$, and
at state $q$, we have the invariant that $-2\leq x-y\leq3$, and
$0\leq z\leq1$, then $q$ would produce states $\roset{\tuple{q,-2,0},\tuple{q,-2,1},\tuple{q,-1,0},\tuple{q,-1,1},\tuple{q,0,0},\tuple{q,0,1},\ldots,\tuple{q,3,1}}$.
Also, $V_{q,C}$ never has more than $k-1$ registers.

Now define $\func{\delta^{\prime}}{\cart{Q^{\prime}}{\Sigma}}{Q^{\prime}}$.
Let $\tuple{q,C,\mathbf{v}}\in Q^{\prime}$ be a state, where $\mathbf{v}$
refers to the values of the offsets. Let $a$ be a symbol, and let
$\funcapptrad{\delta}{q,a}=q^{\prime}$. Since the invariants are
inductive, it follows that there is a difference bound constraint
$C^{\prime}$ at $q^{\prime}$ which holds when the machine makes
this transition with this precondition. Also, there is enough information
to determine statically the values of the offsets $\mathbf{v}^{\prime}$.
Define $\funcapptrad{\delta^{\prime}}{\tuple{q,C,\mathbf{v}},a}=\tuple{q^{\prime},C^{\prime},\mathbf{v}^{\prime}}$.

Let $k^{\prime}=\max_{q,C}\left|V_{q,C}\right|$. Define $V^{\prime}$
to have $k^{\prime}$ registers. At each state-constraint pair $q$,
$C$, choose an arbitrary mapping scheme which maps registers $v^{\prime}\in V^{\prime}$
to registers $v\in V_{q,C}$. The invariant is that for all paths
to $\tuple{q,C,\mathbf{v}}$, $v^{\prime}$ holds the value of the
corresponding register $v$. For every register $u\in V\setminus V_{q,C}$,
the offsets in $\mathbf{v}$ provide enough information to simulate
its value by the expression $v+c$. Because the invariants are inductive,
there is enough local information to define the register update function
$\mu^{\prime}$, and the output function $\nu^{\prime}$.

The start state $q^{\prime}$ is any triple $\tuple{q_{0},C,\mathbf{0}}$,
where $C$ is any constraint at $q_{0}$ satisfied initially. All
registers start at $0$, so all register differences start at $0$
also. Observe that the machine $M^{\prime}$ is equivalent to $M$
by construction, and has $k^{\prime}<k$ registers. This completes
the proof.
\end{proof}
It should be noted that there is considerable freedom when defining
the reduced machine $M^{\prime}$ above: the start state $\tuple{q_{0},C,\vector 0}$
is not necessarily unique -- any difference-bound constraint $C\in D_{q_{0}}$
which is initially satisfied will work. Also, there may be multiple
difference-bound constraints $C_{1}^{\prime}$, $C_{2}^{\prime}$,
\ldots{}, that are satisfied at $q^{\prime}$ when making a transition
on symbol $a$ from $\tuple{q,C,\vector x}$. The choice in such cases
can be made arbitrarily.
\begin{example}
Consider machine $M_{3}$ in figure \ref{fig:Prelim:CRA:Examples:M3}.
By construction, we know that register $x$ always holds the same
value as one of the registers $y$, $z$. In particular, we have $\left|x-y\right|\leq0\lor\left|y-z\right|\leq0\lor\left|z-x\right|\leq0$
as the non-separation invariant. The weakest precondition with respect
to the transition from $q$ on $a$ is $\left|\left(y+1\right)-\left(y+1\right)\right|\leq0\lor\left|\left(y+1\right)-z\right|\leq0\lor\left|z-\left(y+1\right)\right|\leq0$,
which is always true. Thus, $D_{q}\implies\funcapptrad{\autobox{\textsc{wp}}}{D_{q},q,a}$,
and similarly $D_{q}\implies\funcapptrad{\autobox{\textsc{wp}}}{D_{q},q,b}$.
Algorithm \ref{alg:SepReg:Saturate} returns immediately. We then
construct the $3$ state machine shown in figure \ref{fig:SepReg:101C:Example}.
State $q_{xy}$ encodes the triple $\tuple{q_{0},x=y,0}$, and similarly
for $q_{yz}$ and $q_{zx}$. The machine maintains $2$ registers
$u$, $v$. The state-specific mapping of these to the original registers
are: in $q_{xy}$, $u$, $v$ hold $x$, $z$, in $q_{yz}$, $u$,
$v$ hold $x$, $y$, and in $q_{zx}$, $u$, $v$ hold $z$, $y$
respectively. Any of the states could be marked as the start state.

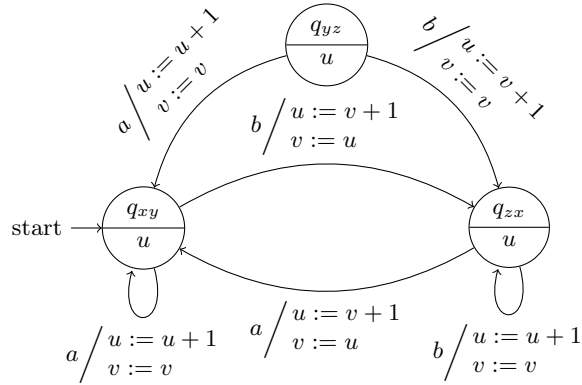
\begin{figure}
\begin{centering}
\pgfmathwidth{"$\quoset{a}{x := x + 1}$"}
\def\stringswidth{\pgfmathresult}
\pgfmathparse{1.25 * \stringswidth}
\global\edef\stringswidth{\pgfmathresult}

\begin{tikzpicture}[above, sloped, node distance=\stringswidth pt]

  \node [state with output, initial] (q0xy) {$q_{xy}$ \nodepart{lower} $u$};
  \node [state with output, above right=of q0xy] (q0yz) {$q_{yz}$ \nodepart{lower} $u$};
  \node [state with output, below right=of q0yz] (q0zx) {$q_{zx}$ \nodepart{lower} $u$};

  \path [->] (q0xy) edge [loop below] node {$\quoset{a}{\begin{array}{l} u := u + 1\\ v := v\end{array}}$} (q0xy);
  \path [->] (q0xy) edge [bend left] node {$\quoset{b}{\begin{array}{l} u := v + 1\\ v := u\end{array}}$} (q0zx);
  \path [->] (q0zx) edge [loop below] node {$\quoset{b}{\begin{array}{l} u := u + 1\\ v := v\end{array}}$} (q0zx);
  \path [->] (q0zx) edge [bend left, below] node {$\quoset{a}{\begin{array}{l} u := v + 1\\ v := u\end{array}}$} (q0xy);

  \path [->] (q0yz) edge [bend right] node {$\quoset{a}{\begin{array}{l} u := u + 1\\ v := v\end{array}}$} (q0xy);
  \path [->] (q0yz) edge [bend left] node {$\quoset{b}{\begin{array}{l} u := v + 1\\ v := v\end{array}}$} (q0zx);

\end{tikzpicture}
\par\end{centering}

\caption{\label{fig:SepReg:101C:Example} An example application of lemma \ref{lem:SepReg:101C}
to $M_{3}$.}

\end{figure}
\end{example}

\section{\label{sec:RegMin} Computing the Register Complexity}

\subsection{\label{sub:RegMin:Pspace} Computing the register complexity is in
$\pspace$}

\subsubsection{Intuition \label{ssub:RegMin:Pspace:Intuition}}

We reduce the problem of determining the register complexity of $\bbracket M$
to one of determining reachability in a directed ``register separation''
graph with $\bigoh{\left|Q\right|2^{\left|V\right|^{2}}}$ nodes.
The presence of an edge in this graph can be determined in polynomial
space, and thus we have a $\pspace$ algorithm to determine the register
complexity. Otherwise, if polynomial time algorithms are used for
graph reachability and $1$-counter $0$-reachability, the procedure
runs in time $\bigoh{c^{3}\left|Q\right|^{4}2^{4\left|V\right|^{2}}}$,
where $c$ is the largest constant in the machine.

We first generalize the idea of register separation to that of separation
relations: an arbitrary relation $\opname{\parallel}\subseteq\cart VV$
separates a state $q$ if for every $c\in\Z$, there is a string $\sigma$
so that $\funcapptrad{\delta}{q_{0},\sigma}=q$, and whenever $u\parallel v$,
$\left|\funcapptrad{\autobox{val}}{u,\sigma}-\funcapptrad{\autobox{val}}{v,\sigma}\right|\geq c$.
Thus, the registers of $M$ are $k$-separable iff for some state
$q$ and some subset $U$ of live registers at $q$, $\left|U\right|=k$
and $\ruset{\tuple{u,v}}{u,v\in U,u\neq v}$ separates $q$.

Consider a string $\tau\in\kstar{\Sigma}$, so for some $q$, $\funcapptrad{\delta}{q,\tau}=q$.
Assume also that:
\begin{enumerate}
\item For every register $u$ in the domain or range of $\parallel$, $\funcapptrad{\mu}{q,\tau,u}=\tuple{u,c_{u}}$,
for some $c_{u}\in\Z$, and
\item for some pair of registers $x$, $y$, $\funcapptrad{\mu}{q,\tau,x}=\tuple{x,c}$
and $\funcapptrad{\mu}{q,\tau,y}=\tuple{y,c^{\prime}}$ for distinct
$c$, $c^{\prime}$.
\end{enumerate}
Thus, every pair of registers that is already separated is preserved
during the cycle, and some new pair of registers is incremented differently.
We call such strings $\tau$ ``separation cycles'' at $q$. They
allow us to make conclusions of the form: If $\parallel$ separates
$q$, then $\union{\parallel}{\roset{\tuple{x,y}}}$ also separates
$q$.

Now consider a string $\sigma\in\kstar{\Sigma}$, so for some $q$,
$q^{\prime}$, $\funcapptrad{\delta}{q,\sigma}=q^{\prime}$. Pick
arbitrary relations $\parallel$, $\parallel^{\prime}$, and assume
that whenever $u^{\prime}\parallel^{\prime}v^{\prime}$, and $\funcapptrad{\mu}{q,\sigma,u^{\prime}}=\tuple{u,c_{u}}$,
$\funcapptrad{\mu}{q,\sigma,v^{\prime}}=\tuple{v,c_{v}}$, we have
$u\parallel v$. We can then conclude that if $\parallel$ separates
$q$, then $\parallel^{\prime}$ separates $q^{\prime}$ We call such
strings $\sigma$ ``renaming edges'' from $\tuple{q,\parallel}$
to $\tuple{q^{\prime},\parallel^{\prime}}$.

We then show that if $\parallel$ separates $q$ and $\parallel$
is non-empty, then there is a separation cycle-renaming edge sequence
to $\tuple{q,\parallel}$ from some strictly smaller separation $\tuple{q^{\prime},\parallel^{\prime}}$.
Thus, separation at each node can be demonstrated by a sequence of
separation cycles with renaming edges in between, and thus we reduce
the problem to that of determining reachability in an exponentially
large register separation graph. Finally, we show that each type of
edge can be determined in $\pspace$.

\subsubsection{Register separation graphs \label{ssub:RegMin:Pspace:SepRel}}

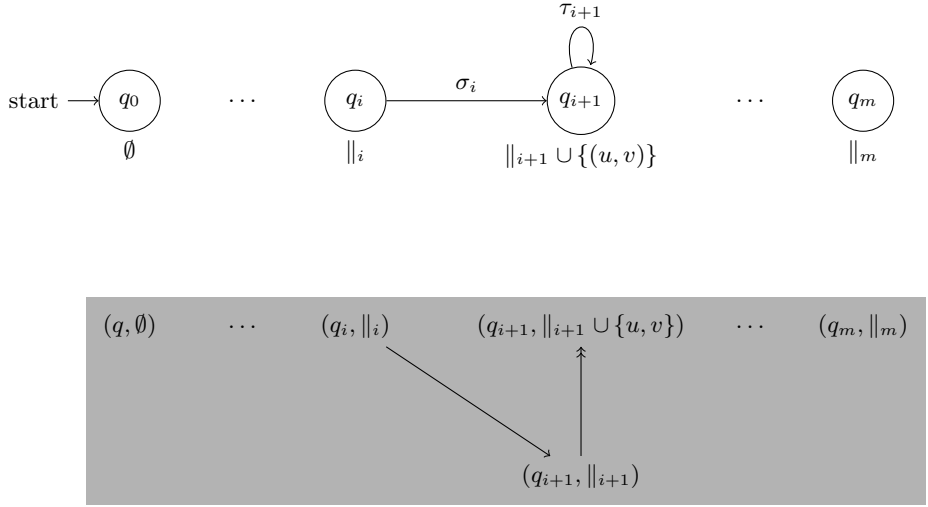
\begin{figure}
\begin{centering}
\begin{tikzpicture}[sloped]

  \node [state, initial, label=below:{$\emptyset$}] (q0) at (0, 0) {$q_{0}$};
  \node (qd1) at (1.5, 0) {$\ldots$};
  \node [state, label=below:{$\parallel_{i}$}] (qi) at (3, 0) {$q_{i}$};
  \node [state, label=below:{$\union{\parallel_{i + 1}}{\roset{\tuple{u,v}}}$}] (qi1) at (6, 0) {$q_{i + 1}$};
  \node (qd2) at (8.25, 0) {$\ldots$};
  \node [state, label=below:{$\parallel_{m}$}] (qm) at (9.75, 0) {$q_{m}$};

  \path [->] (qi) edge node [above] {$\sigma_{i}$} (qi1);
  \path [->] (qi1) edge [loop above] node {$\tau_{i + 1}$} (qi1);

  \node (qphi) at (0, -3) {$\tuple{q, \emptyset}$};
  \node (qd3) at (1.5, -3) {$\ldots$};
  \node (qipi) at (3, -3) {$\tuple{q_{i}, \parallel_{i}}$};
  \node (qi1pi1u) at (6, -3) {$\tuple{q_{i + 1}, \union{\parallel_{i + 1}}{\roset{u,v}}}$};
  \node (qi1pi1) at (6, -5) {$\tuple{q_{i + 1}, \parallel_{i + 1}}$};
  \node (qd4) at (8.25, -3) {$\ldots$};
  \node (qmpm) at (9.75, -3) {$\tuple{q_{m}, \parallel_{m}}$};

  \path [->] (qipi) edge (qi1pi1);
  \path [->>] (qi1pi1) edge (qi1pi1u);

  \begin{pgfonlayer}{background}
    \node [fill=black!30, fit=(qphi) (qd3) (qipi) (qi1pi1u) (qi1pi1) (qd4) (qmpm)] (RegSep) {};
  \end{pgfonlayer}

\end{tikzpicture}
\par\end{centering}

\caption{\label{fig:RegMin:RegSepGraph} The register separation graph. String
$\sigma_{i}$ ``renames'' the separation $\parallel_{i}$ into $\parallel_{i+1}$,
and cycle $\tau_{i+1}$ creates a separation between $u$ and $v$,
while preserving all previously created separations. The goal is to
reach a separation $\parallel_{m}$ which has a $k$-clique of live
registers.}
\end{figure}

\begin{definition}
\label{defn:RegMin:Pspace:SepRel}  Consider some $\acra$ $M$,
and let $\parallel\in\powerset{\cart VV}$ be a relation over $V$.
We say that $\parallel$ separates $q$ if for every constant $c\in\N$,
there exists a string $\sigma$ so $\funcapptrad{\delta}{q_{0},\sigma}=q$
and for all $u,v\in V$, if $u\parallel v$, then $\left|\funcapptrad{\autobox{val}}{u,\sigma}-\funcapptrad{\autobox{val}}{v,\sigma}\right|\geq c$.
Also, we say that a string $\sigma$ $c$-separates $\tuple{q,\parallel}$,
if $\funcapptrad{\delta}{q_{0},\sigma}=q$, and for every $\tuple{u,v}\in\parallel$,
$\left|\funcapptrad{\autobox{val}}{u,\sigma}-\funcapptrad{\autobox{val}}{v,\sigma}\right|\geq c$.
\end{definition}

\begin{definition}
\label{defn:RegMin:Pspace:RegSepGraph}  Consider the set $\powerset{\cart VV}$
of relations over $V$. The \emph{register separation graph} has nodes
$\union{\cart Q{\powerset{\cart VV}}}{\roset t}$, and the following
edges (figure \ref{fig:RegMin:RegSepGraph}):
\begin{enumerate}
\item (Separation edges). From $\tuple{q,\parallel}$ to $\tuple{q,\union{\parallel}{\roset{\tuple{u,v}}}}$
if there is a cycle $\sigma$ at $q$ so that $\funcapptrad{\mu}{q,\sigma,u}=\tuple{u,c}$,
$\funcapptrad{\mu}{q,\sigma,v}=\tuple{v,c^{\prime}}$, $c\neq c^{\prime}$,
and for each $w$ in the domain or range of $\parallel$, $\funcapptrad{\mu}{q,\sigma,w}=\tuple{w,c_{w}}$,
for appropriate $c_{w}\in\Z$.
\item (Renaming edges). From $\tuple{q,\parallel}$ to $\tuple{q^{\prime},\parallel^{\prime}}$
if for some string $\sigma$ that leads $q$ to $q^{\prime}$, whenever
$\tuple{u,v}\in\parallel^{\prime}$, $\funcapptrad{\mu}{q,\sigma,u}=\tuple{u^{\prime},c}$
and $\funcapptrad{\mu}{q,\sigma,v}=\tuple{v^{\prime},c^{\prime}}$,
and $u^{\prime}\parallel v^{\prime}$.
\item (Final edges). From $\tuple{q,\parallel}$ to $t$, if there is a
collection $U\subseteq V$ of $k$ registers, $\left|U\right|=k$,
so for each distinct pair $u,v\in U$, $u\parallel v$.
\end{enumerate}
\end{definition}
Informally, a separation edge identifies a cycle $\tau$ which increments
a pair of registers $u$, $v$ differently, while all other relevant
registers flow into themselves. Renaming edges effect a ``renaming''
of the separation $\parallel$ at $q$ into a separation $\parallel^{\prime}$
at $q^{\prime}$. Final edges to the sink node $t$ exist simply to
identify a uniform target vertex. They are triggered only from vertices
where $k$-separation has already been achieved.

The algorithm is to find a path through the register separation graph
from $\tuple{q_{0},\emptyset}$ to $t$. We first show that a path
exists in the register separation graph from $\tuple{q_{0},\emptyset}$
to $\tuple{q,\parallel}$ iff $\parallel$ separates $q$. But since
the presence of a single edge in this graph can be determined in polynomial
space, and the ``current node'' can be stored in $\funcapptrad O{\left|V\right|^{2}\log\left|Q\right|}$
space, the presence of such a path can also be determined in polynomial
space. Lemmas \ref{lem:RegMin:Pspace:SeparationPossible}, \ref{lem:RegMin:Pspace:PathExists},
and \ref{lem:RegMin:Pspace:Edge} are the three steps to show the
correctness of this approach.

\subsubsection{Connecting $k$-separability to register separation graphs \label{ssub:RegMin:Pspace:Connection}}
\begin{lemma}
\label{lem:RegMin:Pspace:SeparationPossible} If there is a path $\pi$
from $\tuple{q_{0},\emptyset}$ to $\tuple{q,\parallel}$ in the register
separation graph, then $\parallel$ separates $q$.\end{lemma}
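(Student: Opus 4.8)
The plan is to induct on the number of edges in the path $\pi$. Since final edges lead only to the sink $t$, any path terminating at a node of the form $\tuple{q,\parallel}$ uses only separation and renaming edges, so these are the only two inductive cases. For the base case, the empty path sits at $\tuple{q_{0},\emptyset}$, and $\emptyset$ separates $q_{0}$ vacuously: for every $c$ the empty string $\strempty$ reaches $q_{0}$, and the separation condition of Definition~\ref{defn:RegMin:Pspace:SepRel} quantifies over no pairs. The inductive hypothesis will be that whenever a shorter path reaches $\tuple{q',\parallel'}$, the relation $\parallel'$ separates $q'$.

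The renaming-edge step is routine. Suppose the last edge renames $\tuple{q,\parallel}$ into $\tuple{q',\parallel'}$ via a string $\sigma$ leading $q$ to $q'$, so that each $\tuple{u,v}\in\parallel'$ has $\funcapptrad{\mu}{q,\sigma,u}=\tuple{u',c_{u}}$ and $\funcapptrad{\mu}{q,\sigma,v}=\tuple{v',c_{v}}$ with $u'\parallel v'$. Given a target $c^{*}$, let $C$ bound the update constants along $\sigma$; by the inductive hypothesis I would pick a prefix $\rho$ that $(c^{*}+2C)$-separates $\tuple{q,\parallel}$, and append $\sigma$. For each $\tuple{u,v}\in\parallel'$ the difference $\funcapptrad{\autobox{val}}{u,\strcat{\rho}{\sigma}}-\funcapptrad{\autobox{val}}{v,\strcat{\rho}{\sigma}}$ equals $\bigl(\funcapptrad{\autobox{val}}{u',\rho}-\funcapptrad{\autobox{val}}{v',\rho}\bigr)+\tuple{c_{u}-c_{v}}$, whose magnitude is at least $(c^{*}+2C)-2C=c^{*}$, so $\strcat{\rho}{\sigma}$ witnesses that $\parallel'$ separates $q'$.

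The separation-edge step is the crux. Here the last edge goes from $\tuple{q,\parallel}$ to $\tuple{q,\union{\parallel}{\roset{\tuple{u,v}}}}$ via a cycle $\sigma$ at $q$ that self-increments every relevant register $w$ by a constant $c_{w}$, with $c_{u}\neq c_{v}$. After any prefix $\rho$ reaching $q$ I have $\funcapptrad{\autobox{val}}{w,\strcat{\rho}{\sigma^{n}}}=\funcapptrad{\autobox{val}}{w,\rho}+n\,c_{w}$, so every pairwise difference is affine in $n$: for a pair $\tuple{w_{1},w_{2}}$ it is $A_{w_{1}w_{2}}+n\,B_{w_{1}w_{2}}$ with $B_{w_{1}w_{2}}=c_{w_{1}}-c_{w_{2}}$. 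Given $c^{*}$, set $C=\max_{w}|c_{w}|$, choose $c_{0}$ large (of order $c^{*}C$), and use the inductive hypothesis to pick $\rho$ that $c_{0}$-separates $\tuple{q,\parallel}$. For an old pair with $B\neq 0$, the bad values of $n$ (those violating $|A+nB|\geq c^{*}$) cluster near the crossing point $-A/B$, whose magnitude is at least $c_{0}/(2C)$; hence no old pair is bad inside the window $W=\ruset{n}{0\leq n< c_{0}/(2C)-c^{*}}$, while old pairs with $B=0$ stay $c_{0}$-separated for all $n$. For the new pair $B_{uv}\neq0$, the $n$ violating $\bigl|A_{uv}+n\,B_{uv}\bigr|<c^{*}$ form a set of at most $2c^{*}+1$ integers regardless of $A_{uv}$. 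Choosing $c_{0}$ so that $W$ contains more than $2c^{*}+1$ integers, some $n\in W$ is good for all pairs at once, and $\strcat{\rho}{\sigma^{n}}$ witnesses that $\union{\parallel}{\roset{\tuple{u,v}}}$ separates $q$.

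The hard part is exactly this last step, and the delicacy is worth flagging: the initial offset $A_{uv}=\funcapptrad{\autobox{val}}{u,\rho}-\funcapptrad{\autobox{val}}{v,\rho}$ of the freshly separated pair is \emph{not} under our control and may be huge, since it depends on $\rho$ and hence on $c_{0}$. A naive ``pump until $u$ and $v$ are far apart'' argument makes the required number of iterations depend on $A_{uv}$, while the surviving separation of the old pairs forces $c_{0}$ to dominate that same number, producing a circularity. The counting argument above breaks it because the $n$ that are bad for the new pair occupy a set of \emph{bounded size} no matter where $A_{uv}$ positions it, whereas the $n$ bad for old pairs are pushed out to distance $\approx c_{0}/(2C)$; enlarging $c_{0}$ therefore widens a usable window without feeding back into the new pair's requirement.
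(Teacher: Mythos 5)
Your proof is correct, but it is organized quite differently from the paper's. The paper gives a single global argument: it parameterizes the entire path by the vector $\mathbf{x}\in\N^{m}$ of pump counts of its $m$ separation cycles, observes that every difference of $\parallel$-related registers at $q$ is an affine form $c_{uv}+\sum_{i}d_{i}^{uv}x_{i}$ with some $d_{i}^{uv}\neq0$, chooses the $x_{i}$ greedily so that each $x_{i+1}$ dominates the accumulated contribution $\sum_{j\leq i}\left|d_{j}^{uv}\right|x_{j}$ of the earlier coordinates (making every linear form simultaneously non-zero), and then scales $\mathbf{x}$ uniformly to push all differences past any threshold. You instead induct on the length of the path, carrying the full ``for all $c$ there exists a witness string'' statement as the inductive invariant, strengthening the constant demanded of the prefix at each edge, and resolving the separation-edge case by a pigeonhole/window count: the pump values $n$ that are bad for the new pair form a set of size at most $2c^{*}+1$ wherever its unknown offset places them, while the bad $n$ for old pairs are exiled beyond $c_{0}/(2C)-c^{*}$. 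Both arguments exploit the same affine structure, but the decomposition is genuinely different, even down to which cycles get pumped hardest (in the paper, later cycles dominate earlier ones; in yours, the recursion forces earlier cycles to be pumped far more than later ones). The paper's formulation buys an explicit closed-form parameterization of the witness strings, which it reuses essentially verbatim to prove the pumping-lemma characterization (Theorem \ref{thm:RegMin:Loop}); your version is more self-contained, makes the quantifier bookkeeping explicit, and honestly confronts the circularity (the fresh pair's offset depending on the prefix, which depends on the target constant) that the paper's phrase ``by appropriately scaling $\mathbf{x}$'' quietly sidesteps. One cosmetic point: in the renaming step, the constant $C$ should bound the \emph{accumulated} offsets $c_{u}$ in $\funcapptrad{\mu}{q,\sigma,u}=\tuple{u^{\prime},c_{u}}$ over the fixed witness string $\sigma$, not the per-symbol update constants; since $\sigma$ is fixed by the edge, this is harmless, but the wording should say so.
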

\begin{proof}
Informally, since every register pair $\tuple{u,v}\in\opname{\parallel}$
are separated by some separation edge in $\pi$, and no subsequent
edge results in the resetting of this difference (though they might
increase or decrease the difference), the cycle can be passed enough
times to create a sufficiently large separation.

Say there are $m$ separation edges in $\pi$. Then by definition,
for every vector $\mathbf{x}\in\N^{m}$, there is a string $\sigma$
to $q$ so that for all $u\parallel v$, 
\begin{align*}
\funcapptrad{\autobox{val}}{u,\sigma}-\funcapptrad{\autobox{val}}{v,\sigma} & =c_{uv}+\sum_{i}d_{i}^{uv}x_{i},
\end{align*}
where $d_{i}^{uv}$ is the difference created between $u$ and $v$
by the $i^{\mbox{th}}$ separation edge in $\pi$. Also, by construction,
for each $u\parallel v$, there is an $i$ so that $d_{i}^{uv}\neq0$.

If we construct a vector $\mathbf{x}$ so that $\sum_{i}d_{i}^{uv}x_{i}$
are simultaneously non-zero for all $u$, $v$, we are done, for then
by appropriately scaling $\mathbf{x}$, $\funcapptrad{\autobox{val}}{u,\sigma}-\funcapptrad{\autobox{val}}{v,\sigma}$
can be made arbitrarily large in magnitude. Choose $x_{1}=1$, and
once $x_{1}$, \ldots{}, $x_{i}$ are defined, define 
\begin{align}
x_{i+1} & =1+\max_{u\parallel v,d_{i+1}^{uv}\neq0}\left\lceil \frac{\sum_{j\leq i}\left|d_{j}^{uv}\right|x_{j}}{\left|d_{i+1}^{uv}\right|}\right\rceil .
\end{align}
(In the degenerate case when $d_{i+1}^{uv}=0$ for all $u$, $v$,
choose an arbitrary value for $x_{i+1}$) This has the property that
$\left|d_{i+1}^{uv}x_{i+1}\right|>\sum_{j\leq i}d_{j}^{uv}x_{j}$
(if $d_{i+1}^{uv}$ is non-zero), and so $\sum_{i}d_{i}^{uv}x_{i}$
is non-zero for all $u\parallel v$. This completes the proof.\end{proof}
\begin{lemma}
\label{lem:RegMin:Pspace:PathExists} If $\parallel$ separates $q$,
then there is a path through the register separation graph from $\tuple{q_{0},\emptyset}$
to $\tuple{q,\parallel}$.\end{lemma}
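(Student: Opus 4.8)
The plan is to induct on the number of pairs $|\parallel|$, building the separating relation one pair at a time and thereby walking the desired path backwards. In the base case $\parallel=\emptyset$: since $M$ is trimmed (Remark \ref{rem:ACRA:Simpl:Trim}) the state $q$ is reachable, so there is a string from $q_0$ to $q$; the defining condition of a renaming edge is vacuous when the target relation is empty, so there is a renaming edge $\tuple{q_0,\emptyset}\to\tuple{q,\emptyset}$, which is already the required path. Note that $\emptyset$ separates every reachable state, which is consistent with this.

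For the inductive step, suppose $|\parallel|=m\geq 1$ and the lemma holds for every relation with fewer than $m$ pairs. The goal is to exhibit a predecessor node $\tuple{q',\parallel'}$ such that (i) $\parallel'$ separates $q'$, (ii) $|\parallel'|<m$, and (iii) there is a path $\tuple{q',\parallel'}\to\tuple{q,\parallel}$ consisting of a single separation edge followed by a single renaming edge. Granting this, the inductive hypothesis supplies a path from $\tuple{q_0,\emptyset}$ to $\tuple{q',\parallel'}$, and concatenation finishes the argument. This matches the informal claim of section \ref{ssub:RegMin:Pspace:Intuition} that separation by a non-empty $\parallel$ arises from a separation-cycle/renaming-edge step down to a strictly smaller separation.

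To produce this predecessor I would extract a separation cycle from a long witness. Every register value is a sum of at most $|\sigma|$ constants (initial values are $0$ and each update ``$u:=w+c$'' adds a single constant along a provenance chain), so $|\mathrm{val}(u,\sigma)|\leq|\sigma|\,c_{\max}$; hence $c$-separating $q$ forces $|\sigma|\geq c/(2c_{\max})$, growing without bound in $c$. Fix a large $c$ and let $\sigma$ be a witness that $c$-separates $\tuple{q,\parallel}$, with run $q_0=p_0\to\cdots\to p_\ell=q$. Over any factor of $\sigma$ the updates compose to a register-flow map $V\to V$ together with offsets, and these maps form a finite monoid. Crucially, every register in the domain or range of $\parallel$ keeps pairwise differences $\geq c$ over a long final stretch, so there no two of their provenance chains may merge and the flow is injective on them. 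By an idempotent-factorization (Ramsey) argument over the finite flow monoid, $\sigma$ contains a factor $\tau$, from positions $i<j$ with $p_i=p_j=q'$, whose flow map is idempotent and, being injective on the separated registers, acts there as the identity with constant offsets, i.e. $\mu(q',\tau,w)=\tuple{w,c_w}$ for each such $w$. If $\tau$ assigned equal offsets to all registers it could be excised without shrinking any difference, so for $c$ large enough $\tau$ must increment some pair $\tuple{x,y}$ differently. Then $\tau$ is a separation cycle, giving a separation edge $\tuple{q',\parallel'}\to\tuple{q',\union{\parallel'}{\roset{\tuple{x,y}}}}$; the suffix of $\sigma$ after $\tau$, which carries $q'$ to $q$ and relabels registers, is a renaming edge to $\tuple{q,\parallel}$; and the string obtained by excising $\tau$ witnesses separation of the strictly smaller relation $\parallel'$ at $q'$, so the inductive hypothesis applies.

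The main obstacle is extracting a single cycle that simultaneously (a) leaves every register relevant to $\parallel$ flowing into itself and (b) differentiates a fresh pair, because copy-assignments can merge or reset differences and a cycle read off from the state sequence alone need not preserve the existing separations. The remedy is exactly the flow/provenance analysis above: tracking all register sources at once, using injectivity on the separated registers (forced by their large differences) together with an idempotent factor of the finite flow monoid to obtain identity-with-offset there. A secondary point is that separation demands witnesses for every constant, whereas the extraction starts from a single large $c$; I would resolve this by performing the extraction uniformly as $c\to\infty$, so that the excised strings witness $c'$-separation of $\parallel'$ at $q'$ for all $c'$, establishing requirement (i).
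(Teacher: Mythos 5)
Your proposal follows the same skeleton as the paper's proof: induct on $\left|\parallel\right|$, extract from a long separating witness a cycle at a repeated state whose flow fixes the relevant registers, use it as a separation edge followed by a renaming edge along the remaining suffix, and recurse on a smaller relation obtained by deleting the newly separated pair. Your flow-monoid/Ramsey mechanism for finding that cycle is legitimate, and on one point it is actually more careful than the paper: idempotence of the factor plus injectivity of provenance on the separated registers (taking a monochromatic triple so that the source sets at the two endpoints coincide) really does force the identity-with-offsets flow that the definition of a separation edge demands, a requirement the paper's own pigeonhole argument treats loosely.

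However, there is a genuine gap where you produce the pair $(x,y)$. Write $P$ for the weakest precondition of $\parallel$ through the suffix $\sigma''$, i.e.\ the set of pairs $(\mathrm{src}(a),\mathrm{src}(b))$ with $(a,b)\in\parallel$. The renaming edge from $(q',\parallel'\cup\{(x,y)\})$ to $(q,\parallel)$ exists only if $P\subseteq\parallel'\cup\{(x,y)\}$, and the induction closes only if $\left|\parallel'\right|<\left|\parallel\right|$; since $\left|P\right|$ can equal $\left|\parallel\right|$, this forces $(x,y)$ to be a pair \emph{of} $P$. Your dichotomy --- ``$\tau$ gives equal offsets to all registers, hence is excisable'' versus ``some pair of registers is incremented differently'' --- does not deliver this: the unequal pair may be a cross pair. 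Concretely, if $\parallel=\{(a,b),(c,d)\}$, $P=\{(a',b'),(c',d')\}$, and $\tau$'s offsets on $a',b',c',d'$ are $0,0,5,5$, then no pair of $P$ changes (so $\tau$ is excisable and creates no new separation), yet your condition fires on $(a',c')\notin P$; then $\parallel'$ must contain all of $P$, the relation never shrinks, and the recursion stalls. The paper runs the dichotomy at the level of pairs of $P$: either every pair of $P$ has its difference unchanged across $\tau$ --- in which case excising $\tau$ preserves the witness, contradicting \emph{minimality} of $\sigma$ --- or some pair of $P$ changes, and that is the pair deleted. This points to a second omission: you never take $\sigma$ to be a shortest $c$-separating witness, and without minimality the claim ``for $c$ large enough $\tau$ must increment some pair differently'' is false, since an arbitrary witness can contain idle cycles and the Ramsey argument may hand you exactly one of them. (Minor: the string witnessing separation of $\parallel'$ at $q'$ is the prefix $\sigma'$, not the string obtained by excising $\tau$, which ends at $q$.)
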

\begin{proof}
Consider some pair $\tuple{u,v}\in\opname{\parallel}$ -- since $u$
and $v$ are separable at $q$, intuitively it has to be the case
that there is a cycle $\tau$ resulting in different increments to
$u$ and $v$ (or a path from some other state $q^{\prime}$ where
$u^{\prime}$ and $v^{\prime}$ were differently incremented on $\tau^{\prime}$,
and then the values of these registers flowed into $u$ and $v$ respectively).
We now formalize this intuition

By induction on the number of elements in $\parallel$. There is a
path from $\tuple{q_{0},\emptyset}$ to $\tuple{q,\emptyset}$, for
every reachable state $q$. Say $\parallel$ has $m+1$ elements,
and the proposition holds at every $q$ for every $\parallel^{\prime}$
with at most $m$ elements each. We now show the existence of a reachable
vertex $\tuple{q_{l},\parallel_{l}}$, where $\parallel_{l}$ has
$m$ elements, and there is a path from $\tuple{q_{l},\parallel_{l}}$
to $\tuple{q,\parallel}$.

Consider some state $q^{\prime}$, which on reading symbol $a$ transitions
to $q$. We define the weakest precondition of $\parallel$ with respect
to this transition as the smallest relation $\opname{\parallel^{\prime}}\subseteq\cart VV$
so that whenever $u\parallel v$ then $u^{\prime}\parallel^{\prime}v^{\prime}$,
where $\funcapptrad{\mu}{q^{\prime},a,u}=\tuple{u^{\prime},c_{u}}$
and $\funcapptrad{\mu}{q^{\prime},a,v}=\tuple{v^{\prime},c_{v}}$.
Observe that whenever $\parallel$ separates $q$, there must be a
predecessor state $q^{\prime}$ transitioning to $q$ on some symbol
$a$ so that the weakest precondition of $\parallel$ with respect
to this transition, $\parallel^{\prime}$ separates $q^{\prime}$
(for otherwise, along every path to $q$, because of the unreachability
of the predecessor separation, some registers $u\parallel v$ have
to be close).

Specifically, let $N\subseteq\cart Q{\powerset{\cart VV}}$ be a set
of vertices in the register separation graph. Then, for sufficiently
large $c$, there is a constant $c^{\prime}$ and an $N^{\prime}\subseteq\cart Q{\powerset{\cart VV}}$
of weakest precondition separations so that all strings $\sigma$
that $c$-separate some element of $N$ must be at least one symbol
long, and $\strcat[\ldots]{\sigma_{1}}{\sigma_{\left|\sigma\right|-1}}$
must $c^{\prime}$-separate some element of $N^{\prime}$. If we start
with $N=\roset{\tuple{q,\parallel}}$, and repeat this $n=\left(p+1\right)2^{p}$
times (where $p$ is the number of vertices in the register separation
graph), then some subset $N^{\prime}$ must be repeated at least $p+1$
times, let these positions be $i_{1}$, \ldots{}, $i_{p+1}$, indexed
from the end. Let $c_{n}$ be the separation at $N=\roset{\tuple{q,\parallel}}$
so this process can be repeated $n$ times. Choose the shortest string
$\sigma$ that $c_{n}$-separates $\tuple{q,\parallel}$. (Indexing
$\sigma$ from the end) At least two of $\sigma_{i_{1}}$, \ldots{},
$\sigma_{i_{p+1}}$ must pass through the same state $q_{l}$, and
separate the same subset of registers $\parallel_{l}^{\prime}$. Let
the cycle between these occurrences be $\tau$, so $\sigma=\strcat{\sigma^{\prime}}{\strcat{\tau}{\sigma^{\prime\prime}}}$,
and $\tau\neq\strempty$. For each pair $\tuple{u,v}\in\parallel_{l}^{\prime}$,
consider the register separations after processing $\sigma^{\prime}$
and $\strcat{\sigma^{\prime}}{\tau}$. If no difference changes, then
$\strcat{\sigma^{\prime}}{\sigma^{\prime\prime}}$ also $c_{n}$-separates
$\tuple{q,\parallel}$, contradicting the assumption that $\sigma$
was the shortest such string. Thus, some pair of registers $\tuple{u,v}\in\parallel_{l}^{\prime}$,
must have been incremented differently through this cycle. Define
$\opname{\parallel_{l}}=\opname{\parallel_{l}^{\prime}}\setminus\roset{\tuple{u,v}}$,
so that both edges $\tuple{q_{l},\parallel_{l}}\to\tuple{q_{l},\parallel_{l}^{\prime}}\to\tuple{q,\parallel}$
are present in the register separation graph. $\parallel_{l}$ separates
$q_{l}$, and possesses only $m$ elements. Hence the proof.
\end{proof}

\subsubsection{Putting it all together}
\begin{lemma}
\label{lem:RegMin:Pspace:Edge} Let $\tuple{q,\parallel}$ and $\tuple{q^{\prime},\parallel^{\prime}}$
be nodes in the register separation graph. The problem of determining
whether an edge exists between $\tuple{q,\parallel}$ and $\tuple{q^{\prime},\parallel^{\prime}}$
can be answered in polynomial space.\end{lemma}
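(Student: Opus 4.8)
The plan is to give, for each of the three kinds of edges in Definition \ref{defn:RegMin:Pspace:RegSepGraph}, a separate decision procedure running in nondeterministic polynomial space; the lemma then follows by Savitch's theorem ($\npspace = \pspace$). The unifying idea is that each edge condition is witnessed by a string $\sigma$, and the only effect of $\sigma$ we must track is a \emph{register flow function} $\phi\colon V \to V$ --- defined by $\funcapptrad{\mu}{q,\sigma,w} = \tuple{\phi(w),c_{w}}$ --- together with, in the case of separation edges, the offsets $c_{w}$. A single flow function has a description of size $\bigoh{|V|\log|V|}$ and composes along transitions, so we can search over strings one symbol at a time while maintaining only the current state and the accumulated flow function, using polynomial space. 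Liveness, which is needed for final edges, is a static property (Definition \ref{defn:Suffix-Summary}) that we precompute once.

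Final and renaming edges are the easy cases. For a final edge to $t$ we must decide whether $\parallel$ contains a clique of $k$ live registers; since liveness is precomputed and there are at most $2^{|V|}$ candidate $k$-subsets of $V$, each representable in $|V|$ bits and checkable for the clique property in polynomial time, we may guess such a subset and verify it, placing this case in $\np \subseteq \pspace$. For a renaming edge from $\tuple{q,\parallel}$ to $\tuple{q^{\prime},\parallel^{\prime}}$, the defining condition depends only on the flow $\phi$ induced by $\sigma$ and not on the offsets: we need a string from $q$ to $q^{\prime}$ whose flow satisfies $\tuple{\phi(u),\phi(v)} \in \parallel$ for every $\tuple{u,v} \in \parallel^{\prime}$. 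We therefore search the product graph whose nodes are pairs $\tuple{q^{\prime\prime},\phi}$, starting from $\tuple{q,\mathrm{id}}$ and seeking a reachable node $\tuple{q^{\prime},\phi}$ meeting this condition. Each node has a polynomial-size description and its successors under a symbol are computable in polynomial time, so this reachability question lies in $\npspace$, hence $\pspace$.

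The separation edges are the main obstacle, because their condition constrains the offsets and not merely the flow: we must find a cycle $\sigma$ at $q$ under which every register in $S = \union{\roset{u,v}}{(\mathrm{dom}\,\parallel \cup \mathrm{ran}\,\parallel)}$ flows into itself, while the self-offsets of $u$ and $v$ differ, i.e. $c \neq c^{\prime}$. The flow requirement is once more a reachability condition --- return to a node whose flow restricts to the identity on $S$ --- but the offsets grow without bound along the cycle, so they cannot be stored verbatim. The key observation is that, restricted to cycles that fix $S$ pointwise, the quantity $c - c^{\prime}$ is additive under concatenation, so tracking it amounts to maintaining a single integer counter; moreover the per-step increment can be made a function of the current node, provided the node records the flow of the $S$-registers (this is cleanest when the cycle is traced backwards, so that the increment contributed to $c$ and $c^{\prime}$ is read off directly from the transition and the two currently tracked registers). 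Thus the existence of a separation edge reduces to a one-counter \emph{nonzero-reachability} question --- the complement of $0$-reachability --- on the integer-weighted product graph with nodes $\tuple{q^{\prime\prime},\phi}$. The remaining and crucial point is to bound the counter: a shortest witnessing closed walk has length at most a constant multiple of the (exponentially many but polynomially describable) nodes, so the counter value, being at most this length times the largest constant $c$ of $M$, fits in polynomially many bits. A nondeterministic search maintaining the current node, a step counter, and the running difference then witnesses a separation edge in polynomial space, and Savitch's theorem completes the proof.
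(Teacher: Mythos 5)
Your proposal is correct and follows essentially the same route as the paper: the same three-way case split, nondeterministic symbol-by-symbol guessing (with Savitch's theorem) for renaming and final edges, and a reduction of separation edges to one-counter nonzero reachability over a product graph whose nodes pair machine states with register-flow functions, with the counter tracked along the cycle read backwards. The only divergence is that the paper dispatches the one-counter subproblem by citing a known nondeterministic-logspace bound for 1-counter 0-reachability, whereas you argue it directly via a length bound on shortest nonzero-weight walks --- a claim you assert rather than prove, but which is true: a shortest nonzero-weight witness longer than twice the number of product-graph nodes would contain two disjoint excisable cycles, each of weight equal to the total (by minimality), and excising both would yield a shorter witness of nonzero weight, a contradiction.
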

\begin{proof}
An edge between two nodes in the register separation graph is either
a cycle edge or a renaming edge. We treat the three cases separately:
\begin{enumerate}
\item Whether a renaming edge exists between $\tuple{q,\parallel}$ and
$\tuple{q^{\prime},\parallel^{\prime}}$ can be done in non-deterministic
polynomial space. We simply guess the witness string $\sigma\in\kstar{\Sigma}$
from $q$ to $q^{\prime}$, one symbol at a time, and update the current
register $q_{t}$ and separation $\parallel_{t}$. We accept if $q_{t}=q^{\prime}$
and $\opname{\parallel^{\prime}}\subseteq\opname{\parallel_{t}}$.
This is essentially a graph-reachability query which is solvable in
$\bigoh{\log\left|Q\right|2^{\left|V\right|^{2}}}$ non-deterministic
space.
\item To determine the presence of a cycle edge, we first observe that it
is an instance of a $1$-counter non-zero reachability problem. A
$1$-counter machine is a tuple $A=\tuple{Q_{A},\delta,q_{0}}$, where
$\delta\subseteq\cart{Q_{A}}{\cart{Q_{A}}{\Z}}$, and $q_{0}\in Q_{A}$.
The semantics are non-deterministic: we start in state $q_{0}$, with
the counter initialized to $0$. If we are currently in a state $q\in Q_{A}$,
then we can transition to any state $q^{\prime}$ so that $\tuple{q,q^{\prime},c}\in\delta$.
During this transition, the counter is incremented by $c$. Given
a final state $q\in Q_{A}$, the non-zero reachability problem asks:
is there a path from $q_{0}$ to $q$ so that the counter value is
non-zero? In our case, the counter encodes the difference between
two registers $u^{\prime}$ and $v^{\prime}$, whose values have been
influenced by the initial values of $u$ and $v$ respectively. The
states $\tuple{q,f}\in Q_{A}$ encode the current state $q\in Q$,
and the current register renaming $\func fVV$, i.e. for each register
$v$, $\funcapptrad fv$ tells us the name of the initial register
whose value has flowed into $v$. Observe that $Q_{A}$ is large:
it has $\funcapptrad O{\left|Q\right|\left|V\right|^{\left|V\right|}}$
states, and thus we never explicitly construct $A$. We recall from
\cite{SST-POPL} that the $1$-counter $0$-reachability problem is
in $\nlogspace$, and can be answered in $\funcapptrad O{\log c\left|Q_{A}\right|}$
non-deterministic space, where $c$ is the largest constant appearing
in the definition of $A$. From this, it follows that the non-zero
reachability problem can also be solved in $\funcapptrad O{\log c\left|Q_{A}\right|}$
non-deterministic space. Thus, the presence of a cycle edge can be
determined in $\funcapptrad O{\log c\left|Q\right|\left|V\right|^{\left|V\right|}}=\funcapptrad O{\log c\left|Q\right|+\left|V\right|\log\left|V\right|}$
non-deterministic space.
\item To determine the presence of a final edge from $\tuple{q,\parallel}$
to $t$, we simply guess the $k$-clique $U$ of separated registers.
This can be done in $\bigoh{\left|V\right|}$ non-deterministic space. 
\end{enumerate}
\end{proof}
We now have the main result of this section:
\begin{theorem}
\label{thm:RegMin:Pspace} Given an $\acra$ $M$ and a number $k$,
there is a $\pspace$ procedure to determine whether its register
complexity is at least $k$.
\end{theorem}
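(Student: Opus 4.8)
The plan is to reduce the question to a single reachability query in the register separation graph and then argue that this query is answerable in polynomial space. By Theorem~\ref{thm:SepReg}, the register complexity of $\bbracket M$ is at least $k$ exactly when the registers of $M$ are $k$-separable. Unwinding Definition~\ref{defn:SepReg} together with the reformulation in terms of separation relations, this holds iff there is a state $q$ and a relation $\parallel$ that separates $q$ such that $\parallel$ contains a $k$-clique of (live) registers $U=\ruset{\tuple{u,v}}{u,v\in U,\ u\neq v}$ --- which is precisely the precondition for a final edge from $\tuple{q,\parallel}$ to the sink $t$. So the first step is to combine Lemmas~\ref{lem:RegMin:Pspace:SeparationPossible} and~\ref{lem:RegMin:Pspace:PathExists}, which together establish that $\parallel$ separates $q$ iff $\tuple{q,\parallel}$ is reachable from $\tuple{q_{0},\emptyset}$. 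Threading this equivalence through the final edge, I conclude that the registers of $M$ are $k$-separable iff the sink $t$ is reachable from $\tuple{q_{0},\emptyset}$ in the register separation graph, so the decision problem is exactly a $t$-reachability query.

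It then remains to show that this reachability query is decidable in polynomial space. The register separation graph has $\bigoh{\left|Q\right|2^{\left|V\right|^{2}}}$ nodes, and each node $\tuple{q,\parallel}$ is storable in $\bigoh{\left|V\right|^{2}+\log\left|Q\right|}$ bits, since $\parallel\subseteq\cart VV$. I would solve reachability by the standard nondeterministic guessing procedure: begin at $\tuple{q_{0},\emptyset}$, then repeatedly guess a successor node together with the type of the connecting edge, verifying edge existence by invoking Lemma~\ref{lem:RegMin:Pspace:Edge} (which runs in polynomial space), and accept upon reaching $t$. To keep this a genuine decider and rule out nonterminating guesses, I maintain a step counter that caps the path length at the number of nodes $\bigoh{\left|Q\right|2^{\left|V\right|^{2}}}$; this counter also fits in $\bigoh{\left|V\right|^{2}+\log\left|Q\right|}$ bits. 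Since at any moment only the current node, a candidate successor, the step counter, and the polynomial working space of the edge test are held, the whole procedure runs in nondeterministic polynomial space. Applying Savitch's theorem ($\npspace=\pspace$) yields the claimed $\pspace$ bound.

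The one correctness subtlety worth flagging is the bridge in the first paragraph: one must check that the liveness side-condition on the $k$ separated registers demanded by Definition~\ref{defn:SepReg} is exactly what the final edge is meant to encode, and that liveness is a static, precomputable property of each state by Remark~\ref{rem:ACRA:Simpl:Live}, so it imposes no additional cost. Beyond that, with Lemmas~\ref{lem:RegMin:Pspace:SeparationPossible}, \ref{lem:RegMin:Pspace:PathExists}, and~\ref{lem:RegMin:Pspace:Edge} already in hand, the proof is pure assembly; the genuine difficulty lives in those lemmas --- in particular the cycle-pumping and shortest-witness argument behind Lemma~\ref{lem:RegMin:Pspace:PathExists} --- rather than in the packaging here.
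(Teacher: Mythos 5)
Your proposal is correct and follows essentially the same route as the paper: both reduce the question, via Theorem~\ref{thm:SepReg} and Lemmas~\ref{lem:RegMin:Pspace:SeparationPossible} and~\ref{lem:RegMin:Pspace:PathExists}, to reachability of $t$ from $\tuple{q_{0},\emptyset}$ in the register separation graph, and then solve that reachability query nondeterministically in polynomial space using Lemma~\ref{lem:RegMin:Pspace:Edge} for edge detection (the paper tracks the nondeterministic space bound $\bigoh{\log c\left|Q\right|+\left|V\right|^{2}}$ explicitly, while you invoke Savitch's theorem, but these amount to the same argument). Your flagging of the liveness side-condition on the $k$-clique is a fair observation, since the paper folds it into the final-edge check without comment.
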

\begin{proof}
We know that the registers of $M$ are $k$-separable iff there is
a path through the register separation graph from $\tuple{q_{0},\emptyset}$
to $t$.

Observe that the register separation graph has $\bigoh{\left|Q\right|2^{\left|V\right|^{2}}}$
nodes. Since graph reachability can be solved in $\nlogspace$, this
problem can be solved in $\bigoh{\log\left|Q\right|+\left|V\right|^{2}}$
non-deterministic space. Putting the procedures together -- separating
loop detection requires $\bigoh{\log c\left|Q\right|+\left|V\right|\log\left|V\right|}$,
renaming edge detection needs $\bigoh{\log\left|Q\right|+\left|V\right|^{2}}$,
and final edge detection needs $\bigoh{\left|V\right|}$ non-deterministic
space. It follows that the register complexity can be determined using
$\bigoh{\log c\left|Q\right|+\left|V\right|^{2}}$ non-deterministic
space.
\end{proof}
An alternative in the above procedure is to use fast polynomial time
algorithms as subroutines: Reachability in a graph with $n$ vertices
can be determined in $\bigoh n$ time, and $1$-counter $0$-reachability
of an $n$ state machine can be decided in $\bigoh{\left(cn\right)^{3}}$
time. With this assumption, the procedure runs in $\bigoh{n\left(n+\left(cn\right)^{3}+2^{\left|V\right|}\left|V\right|^{2}\right)}$
time with $n=\left|Q\right|2^{\left|V\right|^{2}}$, and $c$ is the
largest constant in $M$, giving the final time complexity of the
algorithm as $\bigoh{c^{3}\left|Q\right|^{4}2^{4\left|V\right|^{2}}}$.

\subsection{\label{sub:RegMin:Loop} Pumping lemma for $\acra$s}

The following theorem is the interpretation of a path through the
register separation graph. Given a regular function $f$ of register
complexity at least $k$, it guarantees the existence of $m$ cycles
$\tau_{1}$, \ldots{}, $\tau_{m}$, serially connected by strings
$\sigma_{0}$, \ldots{}, $\sigma_{m}$, so that based on one of $k$
suffixes $w_{1}$, \ldots{}, $w_{k}$, the cost paid on one of the
cycles must differ. These cycles are actually the separation cycles
discussed earlier, and intermediate strings $\sigma_{i}$ correspond
to the renaming edges. Consider for example, the function $f_{2}$
from figure \ref{fig:Prelim:CRA:Examples}, and let $\sigma_{0}=\strempty$,
$\tau_{1}=\strcat a{\strcat ab}$, and $\sigma_{1}=\strempty$. We
can increase the difference between the registers $x$ and $y$ to
arbitrary amounts by pumping cycle $\tau_{1}$. Now if the suffixes
are $w_{1}=a$, and $w_{2}=b$, then the choice of suffix determines
the ``cost'' paid on each iteration of the cycle.
\begin{theorem}
\label{thm:RegMin:Loop} A regular function $\func f{\kstar{\Sigma}}{\Z_{\bot}}$
has register complexity at least $k$ iff there exist strings $\sigma_{0}$,
\ldots{}, $\sigma_{m}$, $\tau_{1}$, \ldots{}, $\tau_{m}$, and
suffixes $w_{1}$, \ldots{}, $w_{k}$, and $k$ distinct coefficient
vectors $\mathbf{c}_{1},\ldots,\mathbf{c}_{k}\in\Z^{m}$ so that for
all vectors $\mathbf{x}\in\N^{m}$, 
\begin{align*}
\funcapptrad f{\strcat{\strcat{\sigma_{0}}{\strcat{\tau_{1}^{x_{1}}}{\strcat{\sigma_{1}}{\strcat[\ldots]{\tau_{2}^{x_{2}}}{\sigma_{m}}}}}}{w_{i}}} & =\sum_{j}c_{ij}x_{j}+d_{i}.
\end{align*}

\end{theorem}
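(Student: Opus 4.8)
The plan is to read the theorem off the register separation graph of Section \ref{sec:RegMin}, using Theorem \ref{thm:SepReg} to translate between register complexity and $k$-separability. For the forward direction, register complexity at least $k$ gives an $\acra$ $M$ whose registers are $k$-separable, hence, by Lemmas \ref{lem:RegMin:Pspace:SeparationPossible} and \ref{lem:RegMin:Pspace:PathExists}, a path from $\tuple{q_{0},\emptyset}$ to the sink $t$ in the register separation graph. The separation edges along this path supply the cycles $\tau_{1},\ldots,\tau_{m}$, the renaming edges supply the connecting strings $\sigma_{0},\ldots,\sigma_{m}$, and the $k$-clique witnessing the final edge supplies a set $U=\roset{u_{1},\ldots,u_{k}}$ of pairwise-separated live registers at the terminal state $q$. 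For the converse, the pumping data directly witnesses that $f$ cannot be computed with fewer than $k$ registers, by a pigeonhole argument in the style of Lemma \ref{lem:SepReg:101}.

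For the forward direction I would fix, for $\mathbf{x}\in\N^{m}$, the string $\sigma(\mathbf{x})=\sigma_{0}\tau_{1}^{x_{1}}\sigma_{1}\cdots\tau_{m}^{x_{m}}\sigma_{m}$, which is a legal run to $q$. Each live register $u_{i}$ has a suffix $w_{i}$ whose suffix summary in $q$ is $\tuple{u_{i},c_{i}}$ (Definition \ref{defn:Suffix-Summary}), so that $\funcapptrad f{\strcat{\sigma(\mathbf{x})}{w_{i}}}=\funcapptrad{\autobox{val}}{u_{i},\sigma(\mathbf{x})}+c_{i}$. Because every update of $M$ has unit coefficient and a constant offset, and because the separation-edge condition forces the relevant registers to flow into themselves on each $\tau_{j}$, the value of a single register at the end of $\sigma(\mathbf{x})$ is affine: pumping $\tau_{j}$ once more adds a fixed increment that then flows unchanged through the later renamings, giving $\funcapptrad{\autobox{val}}{u_{i},\sigma(\mathbf{x})}=\sum_{j}c_{ij}x_{j}+e_{i}$. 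Setting $d_{i}=e_{i}+c_{i}$ yields the claimed identity. The vectors $\mathbf{c}_{1},\ldots,\mathbf{c}_{k}$ are distinct because $U$ is pairwise separated: if $\mathbf{c}_{i}=\mathbf{c}_{i'}$, then $\funcapptrad{\autobox{val}}{u_{i},\sigma(\mathbf{x})}-\funcapptrad{\autobox{val}}{u_{i'},\sigma(\mathbf{x})}=e_{i}-e_{i'}$ would be constant in $\mathbf{x}$, contradicting that this difference can be driven past any bound (Lemma \ref{lem:RegMin:Pspace:SeparationPossible}).

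For the converse, suppose the pumping data exists and some $\acra$ $M'$ with register set $V'$ computes $f$; I would show $\left|V'\right|\ge k$. Let $q'(\mathbf{x})$ be the state of $M'$ after $\sigma(\mathbf{x})$, and let $\tuple{v_{i}',c_{i}'}$ be the suffix summary of $w_{i}$ in $q'(\mathbf{x})$, so $\funcapptrad f{\strcat{\sigma(\mathbf{x})}{w_{i}}}=\funcapptrad{\autobox{val}}{v_{i}',\sigma(\mathbf{x})}+c_{i}'$. Let $n$ be the largest $\left|c_{i}'\right|$ over all states of $M'$ and all $i$, which is finite. If $v_{i}'=v_{i'}'$ for distinct $i,i'$ at $q'(\mathbf{x})$, then $\left|\funcapptrad f{\strcat{\sigma(\mathbf{x})}{w_{i}}}-\funcapptrad f{\strcat{\sigma(\mathbf{x})}{w_{i'}}}\right|\le 2n$; but this difference equals $\left|\sum_{j}(c_{ij}-c_{i'j})x_{j}+(d_{i}-d_{i'})\right|$, a non-constant affine form since $\mathbf{c}_{i}\neq\mathbf{c}_{i'}$. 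The key is to pick one $\mathbf{x}$ making all $\binom{k}{2}$ such forms exceed $2n$ simultaneously; this is possible by the scaling construction of Lemma \ref{lem:RegMin:Pspace:SeparationPossible}, or by choosing $\mathbf{x}$ off the finitely many hyperplanes on which some form vanishes and then scaling it up. For such $\mathbf{x}$ no two suffixes can share a register at $q'(\mathbf{x})$, forcing $\left|V'\right|\ge k$.

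I expect the main obstacle to be the bookkeeping in the forward direction: verifying that the cycles $\tau_{j}$ can be pumped independently while keeping $\sigma(\mathbf{x})$ a legal run to the fixed terminal state, and that each individual register's value is genuinely affine in $\mathbf{x}$ after the renaming edges permute register identities. This is essentially the content already isolated in Lemma \ref{lem:RegMin:Pspace:SeparationPossible}, so the real work is to repackage that lemma's affine expression for \emph{differences} of separated registers as an affine expression for \emph{individual} register values, and to read the distinctness of $\mathbf{c}_{1},\ldots,\mathbf{c}_{k}$ directly off the $k$-clique of pairwise-separated registers.
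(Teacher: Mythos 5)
Your proof is correct, and the two directions deserve separate verdicts. The forward direction is essentially the paper's own argument: collapse a path in the register separation graph into alternating renaming edges and separation cycles, take the clique of live registers at the endpoint, extract each with a suffix $w_i$, and check that each clique register's value is affine in $\mathbf{x}$ with pairwise-distinct coefficient vectors. (The paper leaves the last two checks implicit; your bookkeeping --- clique registers always lie in the domain/range of the current relation, flow into themselves with constant increments on separation cycles, and are renamed with constant offsets across renaming edges --- is exactly the missing detail, and the distinctness of the $\mathbf{c}_i$ is really the ``by construction, for each $u\parallel v$ some $d_i^{uv}\neq 0$'' fact inside the proof of Lemma \ref{lem:RegMin:Pspace:SeparationPossible}, not just its statement.) The converse is where you genuinely diverge. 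The paper pumps each $\tau_j$ inside the given machine $M$, using a pigeonhole on \emph{states} to find exponents $i_j$, $j_j$ making $\tau_j^{j_j}$ an actual cycle of $M$ at a fixed state $q_{j}$; reaching $q_{m+1}$ this way, it concludes that $k$ registers of $M$ are separable there, and the lower bound then follows from Lemma \ref{lem:SepReg:101}. You instead bypass separability entirely: for an arbitrary machine $M'$ computing $f$, you bound all suffix-summary constants over all states by one $n$, pick a single $\mathbf{x}$ driving all $\binom{k}{2}$ difference forms past $2n$, and pigeonhole on \emph{registers} at $q'(\mathbf{x})$ to force $\left|V'\right|\geq k$. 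This is a real simplification --- it never needs the pumped string to return to a fixed state, and it makes that direction self-contained rather than routed through Theorem \ref{thm:SepReg} --- while the paper's route buys the stronger intermediate conclusion that any machine for $f$ is itself $k$-separable at a concrete state, tying the pumping lemma back to the separation-graph machinery. One small imprecision in your converse: ``choosing $\mathbf{x}$ off the finitely many hyperplanes on which some form vanishes and then scaling'' must refer to the \emph{homogeneous} hyperplanes $\sum_j(c_{ij}-c_{i'j})x_j=0$ (the linear parts), not the affine forms themselves; otherwise scaling need not grow anything (e.g.\ $x_1-x_2+5$ at $\mathbf{x}=(1,1)$ stays at $5$ under scaling). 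The inductive domination construction of Lemma \ref{lem:RegMin:Pspace:SeparationPossible}, which you cite first, handles this correctly.
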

\begin{proof}
We deal with the two cases separately:\end{proof}
\begin{enumerate}
\item If $f$ has register complexity at least $k$, then there is a path
$\pi$ through the register separation graph to a vertex $\tuple{q,\parallel}$
with a $k$-clique of live registers in $\parallel$. Every such path
can be collapsed into one where this is exactly one renaming edge
(possibly corresponding to $\strempty$) between any two cycle edges.
Let $\sigma_{i}$ be the $\left(i+1\right)^{\mbox{th}}$ renaming
edge, and let $\tau_{i}$ be the $i^{\mbox{th}}$ cycle edge. Since
$k$ mutually divergent registers are live, for each such register
$v$, there exists a suffix $w_{v}$ to extract its value. By the
definition of the register separation graph, the claim follows.
\item Say there exist strings $\sigma_{0}$, \ldots{}, $\sigma_{m}$, $\tau_{1}$,
\ldots{}, $\tau_{m}$, $w_{1}$, \ldots{}, $w_{k}$ so that this
holds. Since there are only finitely many states in any given machine
$M$ implementing $f$, there must exist $i_{1}$, $j_{1}$ so that
$\funcapptrad{\delta}{q_{0},\strcat{\sigma_{0}}{\tau_{1}^{i_{1}}}}=\funcapptrad{\delta}{q_{0},\strcat{\sigma_{0}}{\strcat{\tau_{1}^{i_{1}}}{\tau_{1}^{j_{1}}}}}=q_{1}$,
for some $q_{1}\in Q$. Similarly, there must be $i_{2}$, $j_{2}$
so that $\funcapptrad{\delta}{q_{1},\strcat{\sigma_{1}}{\tau_{2}^{i_{2}}}}=\funcapptrad{\delta}{q_{1},\strcat{\sigma_{1}}{\strcat{\tau_{2}^{i_{2}}}{\tau_{2}^{j_{2}}}}}=q_{2}$,
for appropriate $q_{2}$. Repeat this process to reach state $q_{m+1}$.
It now follows that there must exist at least $k$ separable registers
in $q_{m+1}$, since a divergent value is extracted by each $w_{i}$.
Thus, the register complexity of $f$ is at least $k$.\end{enumerate}

\subsection{\label{sub:RegMin:PspaceHard} Computing the register complexity
is $\pspaceh$}

We reduce the DFA intersection non-emptiness checking problem  to
the problem of computing the register complexity. Let $A=\tuple{Q,\Sigma,\delta,q_{0},\roset{q_{f}}}$
be a DFA. Consider a single-state $\acra$ $M$ with input alphabet
$\Sigma$. For each state $q\in Q$, $M$ maintains a register $v_{q}$.
On reading a symbol $a\in\Sigma$, $M$ updates $v_{q}:=v_{\funcapptrad{\delta}{q,a}}$,
for each $q$. Observe that this is simulating the DFA in reverse:
if we start with a special tagged value in $v_{q_{f}}$, then after
processing $\sigma$, that tag is in $v_{q_{0}}$ iff $\strrev{\sigma}$
is accepted by $A$. Also observe that doing this in parallel for
all the DFAs no longer requires an exponential product construction,
but only as many registers as a linear function of the input size.
We use this idea to construct in polynomial time an $\acra$ $M$
whose registers are $\left(k+2\right)$-separable iff there is a string
$\sigma\in\kstar{\Sigma}$ which is simultaneously accepted by all
the DFAs. 

\begin{lemma}
\label{lem:RegMin:PspaceHard:Kozen} The following problem is $\pspacec$
\cite{Kozen1977}: Given a set of DFAs, $\mathcal{A}=\roset{A_{1},\ldots,A_{k}}$
over a common input alphabet $\Sigma$, is the intersection of their
languages non-empty?
\end{lemma}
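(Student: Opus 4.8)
The plan is to establish membership in \pspace{} and \pspaceh{}ness separately, as is standard for a completeness result of this kind. The hardness direction is the substantive one; membership is a routine reachability argument.

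For membership, I would appeal to Savitch's theorem and exhibit an \npspace{} procedure that explores the (exponentially large) product automaton $A_1 \times \cdots \times A_k$ on the fly, without ever materializing it. A configuration of this search is a tuple $\tuple{q_1, \ldots, q_k}$ recording the current state of each $A_i$, which takes only $\sum_i \log \strlen{A_i}$ bits, polynomial in the input size. Starting from the tuple of initial states, the procedure guesses input symbols $a \in \Sigma$ one at a time, advances every $A_i$ according to its transition function, and accepts as soon as all components are simultaneously accepting. Since an all-accepting tuple, if reachable, is reachable by a path of length at most $\prod_i \strlen{A_i}$, a counter of polynomially many bits bounds the guessing and guarantees termination. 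Hence the problem lies in \npspace{} $=$ \pspace{}.

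For \pspaceh{}ness I would give a polynomial-time reduction from the acceptance problem of a deterministic Turing machine $T$ running in space $p(n)$ on an input $w$ with $\strlen{w} = n$, which is \pspacec{}. A halting computation is encoded as a string $\# C_0 \# C_1 \# \cdots \# C_t \#$ over a suitable alphabet, where each $C_i$ is a configuration padded to length exactly $p(n)$. Such a string encodes a valid accepting computation exactly when it satisfies (a) a global format condition, (b) a correct initial configuration $C_0$, (c) an accepting configuration somewhere, and (d) \emph{local} tableau consistency: for each tape position $j$, the cell at position $j$ of $C_{i+1}$ is the value dictated by the transition rule of $T$ applied to the window at positions $j-1, j, j+1$ of $C_i$. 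Conditions (a), (b), (c) are each checked by a single DFA of $\bigoh{p(n)}$ states. For (d) I would use one DFA $D_j$ per position $j \in \roset{1, \ldots, p(n)}$: scanning left to right, $D_j$ uses a modular counter to locate position $j$ in each block, records the constant-size window at $j-1, j, j+1$ of $C_i$, counts the $\bigoh{p(n)}$ symbols over to position $j$ of $C_{i+1}$, and verifies that the recorded window predicts that cell correctly. Each $D_j$ thus needs only $\bigoh{p(n)}$ states, and there are $k = p(n) + \bigoh{1}$ DFAs in all, constructible in polynomial time; their intersection is non-empty iff $T$ accepts $w$.

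The main obstacle is exactly condition (d): the cells that must be compared lie roughly $p(n)$ symbols apart in the encoding, so no single DFA can buffer a whole configuration while scanning past it. The decomposition \emph{by tape position} is what resolves this, since each $D_j$ carries only a constant-size window together with a position counter bounded by $p(n)$, keeping every automaton polynomially sized while the union over $j$ enforces full consistency. The delicate part of the argument is verifying that this family of local checks is jointly equivalent to global correctness of the tableau, and that the boundary cases at the configuration separators and tape ends are handled correctly; the format, initial, and accepting checks are then routine.
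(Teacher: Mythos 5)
Your proof is correct, but note that the paper does not actually prove this lemma: it is invoked as a known result, with a citation to Kozen (1977), and serves purely as the starting point for the paper's own reduction (the separation gadget of Section~\ref{sub:RegMin:PspaceHard}). What you have written is a self-contained reconstruction of Kozen's classical argument, and it matches that original proof in both halves: membership via an on-the-fly simulation of the product automaton (a tuple of component states plus a polynomially-bounded step counter, then Savitch's theorem), and hardness via accepting computation histories of a polynomial-space machine, with the key decomposition being one $\bigoh{p(n)}$-state DFA per tape position so that no single automaton ever has to buffer an entire configuration. The only detail worth flagging is the usual one about divergence: if the space-bounded machine loops, your conditions (a)--(d) still behave correctly, because determinism plus local consistency force any candidate string to trace a prefix of the one true computation, which never reaches an accepting configuration, so the intersection is empty exactly as required; it is worth saying this explicitly, or alternatively assuming without loss of generality that the machine halts on all inputs. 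With that remark added, your argument is complete and would serve as a correct proof of the lemma the paper leaves to the literature.
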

In particular, the problem remains hard if we restrict the DFAs to
have a single accepting state each, for a DFA over any alphabet could
be extended with a new end-of-string symbol, and made to possess a
single accepting state (incurring only a constant size increase).
\begin{claim}
\label{clm:RegMin:PspaceHard:Kozen2} The following problem is $\pspacec$:
Given a set of DFAs, $\mathcal{A}=\roset{A_{1},\ldots,A_{k}}$ over
a common input alphabet $\Sigma$, and each with a single accepting
state, is the intersection of their languages non-empty?
\end{claim}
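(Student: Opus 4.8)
The plan is to prove the two directions, membership and hardness, separately. Membership in $\pspace$ is immediate: the single-accepting-state version is a special case of the general DFA intersection non-emptiness problem, which lies in $\pspace$ by Lemma \ref{lem:RegMin:PspaceHard:Kozen}. All the content is therefore in the $\pspaceh$ direction, which I would obtain by a polynomial-time (indeed logspace) reduction from the general problem of Lemma \ref{lem:RegMin:PspaceHard:Kozen}.

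Given DFAs $A_{1},\ldots,A_{k}$ over $\Sigma$ with arbitrary accepting sets $F_{i}$, I would introduce a fresh end-of-string symbol $\#\notin\Sigma$ and work over $\Sigma'=\union{\Sigma}{\roset{\#}}$. For each $A_{i}$ I construct $A_{i}'$ by adding two new states: a single accepting state $q_{f}^{i}$ and a non-accepting sink $\bot_{i}$. On symbols $a\in\Sigma$, the machine $A_{i}'$ behaves exactly as $A_{i}$ on the original states; from any original state in $F_{i}$ the symbol $\#$ leads to $q_{f}^{i}$, while from any original state outside $F_{i}$ it leads to $\bot_{i}$; and from $q_{f}^{i}$ and $\bot_{i}$ every symbol leads to $\bot_{i}$. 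The only accepting state of $A_{i}'$ is $q_{f}^{i}$. Each $A_{i}'$ is a complete DFA with at most two more states than $A_{i}$, so the reduction is computable in logarithmic space, and in particular in polynomial time.

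The key correctness observation — and the role played by the shared symbol $\#$ — is the synchronization of the automata. By construction $A_{i}'$ reaches its accepting state $q_{f}^{i}$ only by reading a single $\#$ from an original state in $F_{i}$, and any further symbol from $q_{f}^{i}$ drops irrecoverably into $\bot_{i}$; hence $A_{i}'$ accepts a word $w$ iff $w=\strcat{\sigma}{\#}$ for some $\#$-free $\sigma\in\kstar{\Sigma}$ with $\sigma\in L(A_{i})$. Since $\#$ is the same fresh symbol for all $i$, a word accepted by every $A_{i}'$ must place its unique $\#$ in the final position common to all of them, so that $\bigcap_{i}L(A_{i}')=\ruset{\strcat{\sigma}{\#}}{\sigma\in\bigcap_{i}L(A_{i})}$. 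Consequently $\bigcap_{i}L(A_{i}')$ is non-empty iff $\bigcap_{i}L(A_{i})$ is, completing the reduction.

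There is no genuine technical obstacle here; the statement merely formalizes the one-line remark preceding it, and the main result then follows by combining membership with this hardness argument. The only subtlety worth spelling out is that a word accepted by all $A_{i}'$ cannot contain $\#$ anywhere except as its last symbol — otherwise some automaton would already be trapped in $\bot_{i}$ — and it is precisely this that forces a single common prefix $\sigma$ lying in every $L(A_{i})$ simultaneously.
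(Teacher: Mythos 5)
Your proof is correct and follows essentially the same route as the paper, which justifies the claim by the remark that each DFA can be ``extended with a new end-of-string symbol, and made to possess a single accepting state'' -- your construction with the fresh symbol $\#$, the collapsed accepting state, and the rejecting sink is precisely a careful formalization of that remark. The synchronization observation (that $\#$ can only occur as the final symbol of a commonly accepted word) is the right detail to spell out, and membership in $\pspace$ as a special case of the general problem is exactly how the paper treats it.
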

In figure \ref{fig:RegMin:PspaceHard}, we describe the reduction
informally. Unlabelled transitions are triggered by special control
symbols not in $\Sigma$. For each state $q$ of each DFA $A_{i}$,
the $\acra$ maintains a register $v_{q}$. Consider the self-loop
in state $q_{1}$ of the separation gadget: on reading symbol $a\in\Sigma$,
each register $v_{q}$ is assigned the value of $v_{\funcapptrad{\delta}{q,a}}$.
Thus, after reading a string $\sigma\in\kstar{\Sigma}$, $v_{q}$
contains the value initially in $v_{\funcapptrad{\delta}{q,\strrev{\sigma}}}$,
where $\strrev{\sigma}$ is the reverse string of $\sigma$. The initial
loop at $q_{0}$ sets up large distinct values in all the final states.
Thus, any string $\sigma$ that is simultaneously accepted by all
DFAs corresponds to a way of reaching $q_{f}$ with large values in
$v_{q_{0i}}$, the registers corresponding to the initial states.
The self-loop at $q_{f}$ sets up a large value in a special register
$u$. Therefore, if the DFAs accept a common string, then $q_{f}$
is $\left(k+2\right)$-separable. If no string is accepted by all
DFAs, then on each path to $q_{f}$, $v_{q_{0i}}=0$, for some $i$,
and hence $q_{f}$ is not $\left(k+2\right)$-separable. Furthermore,
along each path to $q_{0}$ or $q_{1}$, all registers contain one
of at most $k+1$ distinct values, and there is exactly one live register
in each $q_{\autobox{out}i}$. Therefore no state other than $q_{f}$
is $\left(k+2\right)$-separable. Thus, the registers of the separation
gadget are $\left(k+1\right)$-separable iff all the DFAs simultaneously
accept some string.

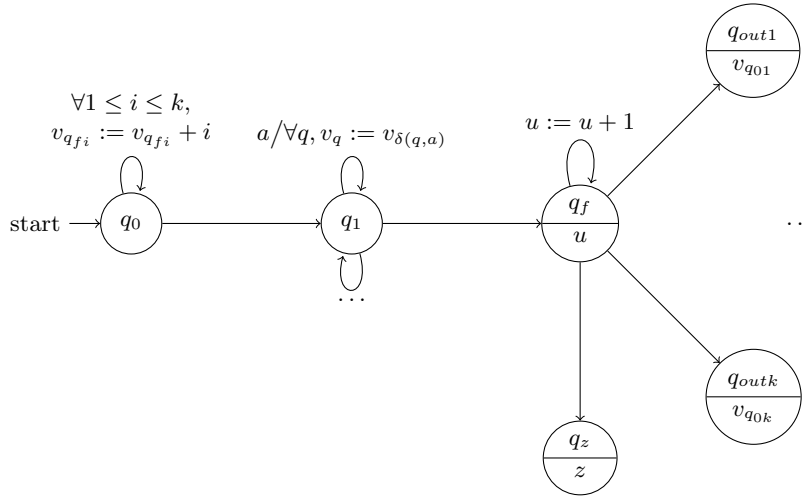
\begin{figure}
\begin{centering}
\pgfmathwidth{"$\quoset{A_{i}}{v_{q_{fi}} := v_{q_{fi}} + 1}$"}
\def\stringswidth{\pgfmathresult}
\pgfmathparse{0.75 * \stringswidth}
\global\edef\stringswidth{\pgfmathresult}

\begin{tikzpicture}[sloped, node distance=\stringswidth pt]

  \node [state, initial] (q0) {$q_{0}$};
  \node [state, right=of q0] (q1) {$q_{1}$};
  \node [state with output, right=of q1] (qf) {$q_{f}$ \nodepart{lower} $u$};
  \node [state with output, below=of qf] (qz) {$q_{z}$ \nodepart{lower} $z$};

  \node [state with output, above right=of qf] (qo1) {$q_{\autobox{out}1}$ \nodepart{lower} $v_{q_{01}}$};
  \node [right=of qf] (qd) {$\ldots$};
  \node [state with output, below right=of qf] (qok) {$q_{\autobox{out}k}$ \nodepart{lower} $v_{q_{0k}}$};

  \path [->] (q0) edge [loop above] node {$\begin{array}{c} \forall 1 \leq i \leq k,\\ v_{q_{fi}} := v_{q_{fi}} + i \end{array}$} (q0);
  \path [->] (q0) edge node {} (q1);
  \path [->] (q1) edge [loop above] node {$\quoset a {\forall q, v_{q} := v_{\funcapptrad{\delta}{q, a}}}$} (q1);
  \path [->] (q1) edge [loop below] node {$\ldots$} (q1);
  \path [->] (q1) edge node {} (qf);
  \path [->] (qf) edge [loop above] node {$u := u + 1$} (qf);
  \path [->] (qf) edge node {} (qz);
  \path [->] (qf) edge node {} (qo1);
  \path [->] (qf) edge node {} (qok);

\end{tikzpicture}
\par\end{centering}

\caption{\label{fig:RegMin:PspaceHard} The separation gadget. In the self-loop
at $q_{1}$, $\delta$ refers to the transition function of the appropriate
DFA.}
\end{figure}

\begin{definition}
\label{defn:RegMin:PspaceHard:SeparationGadget}  Let $\mathcal{A}=\roset{A_{1},\ldots,A_{k}}$
be a set of $k$ DFAs, each with a single accepting state. The \emph{separation
gadget of $\mathcal{A}$} is the following $\acra$ $M=\tuple{Q,\Sigma^{\prime},V,\delta,\mu,q_{0},\nu}$:\end{definition}
\begin{enumerate}
\item $Q=\union{\roset{q_{0},q_{1},q_{f},q_{z}}}{\ruset{q_{\autobox{out}i}}{1\leq i\leq k}}$,
\item $\Sigma^{\prime}=\union{\Sigma}{\union{\roset{\#}}{\ruset{a_{i}}{1\leq i\leq k}}}$,
and
\item $V=\union{\roset{u,z}}{\ruset{v_{q}}{q\in Q_{i},1\leq i\leq k}}$.
\item $\delta$ is defined by the following rules:

\begin{enumerate}
\item $\funcapptrad{\delta}{q_{0},\#}=q_{1}$. For all other $a\in\Sigma^{\prime}$,
$\funcapptrad{\delta}{q_{0},a}=q_{0}$.
\item For each $a\in\Sigma$, $\funcapptrad{\delta}{q_{1},a}=q_{1}$. For
all other $a\in\Sigma^{\prime}$, $\funcapptrad{\delta}{q_{1},a}=q_{f}$.
\item For all $a\in\Sigma$, $\funcapptrad{\delta}{q_{f},a}=q_{f}$. $\funcapptrad{\delta}{q_{f},\#}=q_{z}$.
For each $a_{i}$, $1\leq i\leq k$, $\funcapptrad{\delta}{q_{f},a_{i}}=q_{\autobox{out}i}$.
\item For each $i$, $1\leq i\leq k$, and $a\in\Sigma^{\prime}$, $\funcapptrad{\delta}{q_{\autobox{out}i},a}=q_{\autobox{out}i}$.
\end{enumerate}
\item $\mu$ is defined by the following rules:

\begin{enumerate}
\item For all $a\in\Sigma^{\prime}$ so $\funcapptrad{\delta}{q_{0},a}=q_{0}$,
and $1\leq i\leq k$, $\funcapptrad{\mu}{q_{0},a,v_{q_{fi}}}=\tuple{v_{q_{fi}},i}$.
\item For all $q$, $a\in\Sigma$, $\funcapptrad{\mu}{q_{1},a,v_{q}}=\tuple{v_{\funcapptrad{\delta^{\prime}}{q,a}},0}$.
Here $\delta^{\prime}$ is the transition function of the DFA containing
$q^{\prime}$.
\item For all $a\in\Sigma$, $\funcapptrad{\mu}{q_{f},a,u}=\tuple{u,1}$.
\item For all other $q$, $a$, $v$, $\funcapptrad{\mu}{q,a,v}=\tuple{v,0}$.
\end{enumerate}
\item $\funcapptrad{\nu}{q_{f}}=\tuple{u,0}$, $\funcapptrad{\nu}{q_{z}}=\tuple{z,0}$,
and $\funcapptrad{\nu}{q_{outi}}=\tuple{v_{q_{0i}},0}$, for all $i$.
In all other states, $\funcapptrad{\nu}q=\bot$.\end{enumerate}
\begin{proposition}
\label{clm:RegMin:PspaceHard:Prop} Let $\mathcal{A}$ be a set of
$k$ DFAs, and $M$ be the separation gadget of $\mathcal{A}$.
\begin{enumerate}
\item Let $\sigma\in\kstar{\left(\Sigma^{\prime}\right)}$ so $\funcapptrad{\delta}{q_{0},\sigma}\notin\roset{q_{f},q_{z},q_{\autobox{out}i}}$.
Then there is a collection $P\subseteq\Z$ with $\left|P\right|\leq k+1$,
so for each register $v\in V$, $\funcapptrad{\autobox{val}}{v,\sigma}\in P$.
\item If the intersection language of the DFAs is empty, then for each $\sigma\in\kstar{\left(\Sigma^{\prime}\right)}$,
if $\funcapptrad{\delta}{q_{0},\sigma}=q_{f}$, there is some $i$
so that $\funcapptrad{\autobox{val}}{q_{0i},\sigma}=0=\funcapptrad{\autobox{val}}{z,\sigma}$.
\item If the intersection language of the DFAs is non-empty, then for each
$c\in\Z$, there is a $\sigma\in\kstar{\left(\Sigma^{\prime}\right)}$
so that $\funcapptrad{\delta}{q_{0},\sigma}=q_{f}$, and for each
$v,v^{\prime}\in\union{\roset{u,z}}{\ruset{q_{0i}}{1\leq i\leq k}}$,
$\left|\funcapptrad{\autobox{val}}{v,\sigma}-\funcapptrad{\autobox{val}}{v^{\prime},\sigma}\right|\geq c$.
\end{enumerate}
\end{proposition}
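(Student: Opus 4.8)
The plan is to exploit the fact that $M$ moves through its states in one direction---once it leaves $q_0$ it never returns, and once it leaves $q_1$ it never returns---so I can analyse the registers phase by phase. The common engine for all three parts is an understanding of the two self-loops that actually move data. While $M$ sits in $q_0$, every symbol other than $\#$ self-loops and adds $i$ to $v_{q_{fi}}$ while leaving every other register fixed; hence after a $\#$-free prefix $\alpha$ of length $n$ we have $v_{q_{fi}} = ni$ for each $i$, the registers $u$ and $z$ and every non-accepting $v_q$ are $0$, and reading $\#$ moves $M$ to $q_1$ without touching any register. In particular the set of register values at this point is contained in $\roset{0, n, 2n, \ldots, kn}$, a set of at most $k+1$ integers.

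The key lemma concerns the $q_1$ self-loop, whose update is the pure copy $v_q := v_{\delta'(q,a)}$, where $\delta'$ is the transition function of the DFA owning $q$. I would prove by induction on $|w|$ that after reading $w \in \kstar{\Sigma}$ starting from $q_1$, the register $v_q$ holds the value that $v_{\delta'(q, \strrev{w})}$ held on entry to $q_1$: each additional symbol is prepended to the state argument, so the accumulated argument after $w$ is exactly the reversal $\strrev{w}$. This direction-of-reversal bookkeeping is the one place I expect to have to be careful; the rest is arithmetic. Feeding in the entry values $v_{q_{fi}} = ni$, we conclude that on leaving $q_1$ each $v_q$ equals $ni$ if $q$ lies in $A_i$ and $\delta'(q,\strrev{w}) = q_{fi}$, and $0$ otherwise; in particular $v_{q_{0i}} = ni$ exactly when $\strrev{w} \in L(A_i)$.

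Part 1 is then immediate: the hypothesis $\delta(q_0, \sigma) \notin \roset{q_f, q_z} \cup \ruset{q_{\autobox{out}i}}{1 \le i \le k}$ forces the run of $\sigma$ to remain inside $\roset{q_0, q_1}$, where $u$ and $z$ stay $0$ and the only active update is the value-copying $q_1$ self-loop, so no value outside $\roset{0, n, 2n, \ldots, kn}$ is ever produced and $P = \ruset{in}{0 \le i \le k}$ works. For Parts 2 and 3 I first note that any $\sigma$ with $\delta(q_0,\sigma) = q_f$ has the form $\alpha\,\#\,w\,c\,w'$, with $\alpha$ containing no $\#$, with $w, w' \in \kstar{\Sigma}$, and with $c \in \roset{\#} \cup \ruset{a_i}{1 \le i \le k}$; the transition on $c$ and the $q_f$ self-loop on $w'$ leave every $v_{q_{0i}}$ and $z$ unchanged and only increment $u$, so after $\sigma$ we have $v_{q_{0i}} = ni$ iff $\strrev{w} \in L(A_i)$, while $z = 0$ and $u = |w'|$.

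For Part 2, emptiness of the intersection guarantees that for every choice of $w$ there is an index $i$ with $\strrev{w} \notin L(A_i)$, whence $v_{q_{0i}} = 0 = z$ after $\sigma$, as required. For Part 3, non-emptiness lets me fix a word $s \in \bigcap_i L(A_i)$ and take $w = \strrev{s}$, so that $\strrev{w} = s$ lies in every $L(A_i)$ and therefore $v_{q_{0i}} = ni$ for all $i$. Choosing the prefix length $n \ge c$ and the loop length $|w'| = \ell > kn + c$ then makes the $k+2$ quantities $z = 0$, the $v_{q_{0i}} = ni$ for $1 \le i \le k$, and $u = \ell$ pairwise differ by at least $c$: the values $0, n, 2n, \ldots, kn$ are multiples of $n$ spaced by at least $n \ge c$, while $u = \ell$ exceeds all of them by more than $c$. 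The string $\sigma = \alpha\,\#\,w\,c\,w'$ realizing these lengths is the desired witness.
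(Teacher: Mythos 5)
Your proof is correct and takes essentially the same route as the paper's own justification (which the paper gives only as an informal sketch preceding the proposition): the reverse-simulation property of the $q_{1}$ self-loop ($v_{q}$ ends up holding the entry value of $v_{\delta^{\prime}(q,\sigma^{rev})}$), the arithmetic-progression values $n\cdot i$ seeded at $q_{0}$, the decomposition $\sigma=\alpha\,\#\,w\,c\,w^{\prime}$, and the pumping of $u$ at $q_{f}$. Your write-up simply makes the induction on $|w|$ and the final spacing arithmetic explicit, which the paper leaves implicit.
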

We now conclude the hardness argument:
\begin{theorem}
\label{thm:RegMin:PspaceHard} Given an $\acra$ $M$ and a number
$k$, deciding whether the register complexity of $\bbracket M$ is
at least $k$ is $\pspaceh$.
\end{theorem}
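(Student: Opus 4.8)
The plan is to reduce the $\pspacec$ problem of Claim~\ref{clm:RegMin:PspaceHard:Kozen2}---deciding whether a family of DFAs $\mathcal{A}=\roset{A_{1},\ldots,A_{k}}$, each with a single accepting state, admits a commonly accepted string---to the register-complexity problem. Given such a family, I would build the separation gadget $M$ of Definition~\ref{defn:RegMin:PspaceHard:SeparationGadget} and output the instance ``is the register complexity of $\bbracket M$ at least $k+2$?''. The gadget has $k+4$ states, $2+\sum_{i}\left|Q_{i}\right|$ registers, and every constant occurring in its updates is at most $k$, so it is constructed in time polynomial in the total size of $\mathcal{A}$. By Theorem~\ref{thm:SepReg} the register complexity of $\bbracket M$ is at least $k+2$ iff the registers of $M$ are $\left(k+2\right)$-separable, so it suffices to prove that $M$ is $\left(k+2\right)$-separable iff the intersection language of $\mathcal{A}$ is non-empty.

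For the ``if'' direction, when the intersection is non-empty, part~3 of Proposition~\ref{clm:RegMin:PspaceHard:Prop} directly supplies, for each $c$, a string reaching $q_{f}$ along which the $k+2$ registers in $\union{\roset{u,z}}{\ruset{v_{q_{0i}}}{1\leq i\leq k}}$ are pairwise $c$-separated. These are precisely the live registers at $q_{f}$: the output is $u$, the transition on $\#$ exposes $z$, and the transition on each $a_{i}$ exposes $v_{q_{0i}}$. Hence this witnesses $\left(k+2\right)$-separation at $q_{f}$, and so $M$ is $\left(k+2\right)$-separable.

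The ``only if'' direction is the crux, and amounts to ruling out $\left(k+2\right)$-separation at \emph{every} state once the intersection is empty. The states split into three cases. At $q_{0}$ and $q_{1}$, part~1 of Proposition~\ref{clm:RegMin:PspaceHard:Prop} shows that along any reaching string all register values lie in a set of size at most $k+1$; two registers sharing a value can never be pushed apart, so at most $k+1$ registers are pairwise separable. The sink-like states $q_{z}$ and each $q_{\autobox{out}i}$ carry a single live register, so they separate no pair at all. The remaining state is $q_{f}$, whose live registers are exactly the $k+2$ listed above; but when the intersection is empty, part~2 of Proposition~\ref{clm:RegMin:PspaceHard:Prop} guarantees that along every string reaching $q_{f}$ some $v_{q_{0i}}$ equals $z$ (both being $0$), so this pair stays at distance $0$ and no $\left(k+2\right)$-subset of live registers at $q_{f}$ is pairwise separable. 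Thus no state is $\left(k+2\right)$-separable. Combining the two directions yields the equivalence, completing the polynomial reduction and establishing $\pspaceh$-ness.

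I expect the main obstacle to be the bookkeeping in the ``only if'' direction: one must pin down the exact liveness structure at $q_{f}$ (that there are neither fewer nor more than $k+2$ live registers there) and invoke parts~1 and~2 of Proposition~\ref{clm:RegMin:PspaceHard:Prop} at the remaining states. All the gadget-specific reasoning is concentrated here, even though each step is routine once the Proposition is available. A minor technical point is that Theorem~\ref{thm:SepReg} presupposes the trimming and live-register normalization of Remarks~\ref{rem:ACRA:Simpl:Trim} and~\ref{rem:ACRA:Simpl:Live}; since every state of $M$ is reachable from $q_{0}$ and the normalization alters neither $\bbracket M$ nor the separability of live registers, applying it first incurs no loss.
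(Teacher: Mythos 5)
Your proposal is correct and follows essentially the same route as the paper: a polynomial-time reduction from DFA intersection non-emptiness (Claim~\ref{clm:RegMin:PspaceHard:Kozen2}) via the separation gadget of Definition~\ref{defn:RegMin:PspaceHard:SeparationGadget}, with Proposition~\ref{clm:RegMin:PspaceHard:Prop} and Theorem~\ref{thm:SepReg} yielding that the register complexity is at least $k+2$ iff the intersection language is non-empty. Your state-by-state analysis (pigeonhole at $q_{0},q_{1}$, single live register at $q_{z}$ and $q_{\autobox{out}i}$, exactly $k+2$ live registers at $q_{f}$) is exactly the argument the paper sketches informally before the theorem, merely written out in more detail.
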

\begin{proof}
Given a set of $k$ DFAs $\mathcal{A}$, the separation gadget $M$
of $\mathcal{A}$ can be constructed in polynomial time ($M$ has
$k+3$ states, $2+\sum_{i}\left|Q_{i}\right|$ registers, and operates
over an alphabet of $k+\left|\Sigma\right|+1$ symbols). From proposition
\ref{clm:RegMin:PspaceHard:Prop}, it follows that an equivalent $\acra$
with $k+1$ registers exists iff the intersection language is empty.
Thus, the problem is $\pspaceh$.\end{proof}

\section{\label{sec:Games} Games over $\acra$s}

We now study games played over $\acra$s. We extend the model of $\acra$s
to allow alternation -- in each state, a particular input symbol may
be associated with multiple transitions. The system picks the input
symbol to process, while the environment picks the specific transition
associated with this input symbol. Accepting states are associated
with output functions, and the system may choose to end the game in
any accepting state. Given a budget $k$, we wish to decide whether
the system has a winning strategy with worst-case cost no more than
$k$.  We show that $\acra$ games are undecidable when the registers
are integer-valued, and $\exptimec$ when the domain is $\D=\N$.
\begin{definition}
\label{defn:Games}  An $\dacra{\D}$ reachability game is played
over a structure $G=\tuple{Q,\Sigma,V,\delta,\mu,q_{0},F,\nu}$, where
$Q$, $\Sigma$, and $V$ are finite non-empty sets of states, input
symbols and registers respectively, $\delta\subseteq\cart Q{\cart{\Sigma}Q}$
is the transition relation, $\func{\mu}{\cart{\delta}V}{\cart V{\D}}$
is the register update function, $q_{0}\in Q$ is the start state,
$F\subseteq Q$ is the set of accepting states, and $\func{\nu}F{\cart V{\D}}$
is the output function.

The game configuration is a tuple $\gamma=\tuple{q,\autobox{val}}$,
where $q\in Q$ is the current state, and $\func{\autobox{val}}V{\D}$
is the current register valuation. A run $\pi$ is a (possibly infinite)
sequence of game configurations $\tuple{q_{1},\autobox{val}_{1}}\to^{a_{1}}\tuple{q_{2},\autobox{val}_{2}}\to^{a_{2}}\cdots$
with the property that
\begin{enumerate}
\item the transition $q_{i}\to^{a_{i}}q_{i+1}\in\delta$ for each $i$,
and
\item $\funcapptrad{\autobox{val}_{i+1}}u=\funcapptrad{\autobox{val}_{i}}v+c$,
where $\funcapptrad{\mu}{q_{i}\to^{a_{i}}q_{i+1},u}=\tuple{v,c}$,
for each register $u$ and transition $i$.
\end{enumerate}
A strategy is a function $\func{\strat}{\cart{\kstar Q}Q}{\Sigma}$
that maps a finite history $\strcat{q_{1}}{\strcat[\ldots]{q_{2}}{q_{n}}}$
to the next symbol $\funcapptrad{\strat}{\strcat{q_{1}}{\strcat[\ldots]{q_{2}}{q_{n}}}}$.
A run $\pi$ is consistent with $\strat$ if for each $i$, $\funcapptrad{\strat}{\strcat{q_{1}}{\strcat[\ldots]{q_{2}}{q_{i}}}}=a_{i}$.
$\strat$ is winning starting from a state $q$ if for every run $\pi$
consistent with $\strat$ and starting from $q_{1}=q$, there is some
$i$ so that $q_{i}\in F$. It is winning from a configuration $\tuple{q,\autobox{val}}$
with a budget of $\bud\in\D$ if for every consistent run $\pi$ starting
from $\tuple{q_{1},\autobox{val}_{1}}=\tuple{q,\autobox{val}}$, for
some $i$, $q_{i}\in F$ and $\funcapptrad{\nu}{q_{i},\autobox{val}_{i}}\leq\bud$.
\end{definition}
For greater readability, we write tuples $\tuple{q,a,q^{\prime}}\in\delta$
as $q\to^{a}q^{\prime}$. If $q\in F$, and $\autobox{val}$ is a
register valuation, we write $\funcapptrad{\nu}{q,\autobox{val}}$
for the result $\funcapptrad{\autobox{val}}v+c$, where $\funcapptrad{\nu}q=\tuple{v,c}$.
When we omit the starting configuration for winning strategies it
is understood to mean the initial configuration $\tuple{q_{0},\autobox{val}_{0}}$
of the $\acra$.

Consider the natural partial order $\preceq$ over register valuations:
$\autobox{val}\preceq\autobox{val}^{\prime}$ iff for all registers
$v$, $\funcapptrad{\autobox{val}}v\leq\funcapptrad{\autobox{val}^{\prime}}v$.
Then, any winning strategy for large valuations is also a winning
strategy for small valuations:
\begin{claim}
\label{clm:Games:ValuationPoset} For each $q$, $\bud$, $\autobox{val}$,
$\autobox{val}^{\prime}$, if $\autobox{val}\preceq\autobox{val}^{\prime}$,
then every strategy $\strat$ which is $\bud$-winning starting from
$\tuple{q,\autobox{val}^{\prime}}$ is also $\bud$-winning starting
from $\tuple{q,\autobox{val}}$.\end{claim}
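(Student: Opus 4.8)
The plan is to exploit two structural facts about the game. First, a strategy $\strat$ reads only the sequence of states visited, never the register valuation, so the runs consistent with $\strat$ from $\tuple{q,\autobox{val}}$ and from $\tuple{q,\autobox{val}^{\prime}}$ are driven by exactly the same state histories and environment choices. Second, every register update ``$u:=v+c$'' and every output expression ``$v+c$'' is monotone in the register values, since adding a fixed constant preserves the order $\leq$. Combining these lets me transport the budget-$\bud$ winning condition from the larger valuation $\autobox{val}^{\prime}$ down to the smaller valuation $\autobox{val}$.

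First I would prove a monotonicity lemma for runs: if $\pi$ and $\pi^{\prime}$ start from $\tuple{q,\autobox{val}}$ and $\tuple{q,\autobox{val}^{\prime}}$ and follow the \emph{same} transition sequence $q_{1}\to^{a_{1}}q_{2}\to^{a_{2}}\cdots$, then $\autobox{val}_{i}\preceq\autobox{val}_{i}^{\prime}$ for all $i$. This is an easy induction on $i$: the base case is the hypothesis $\autobox{val}\preceq\autobox{val}^{\prime}$, and for the step, fixing a register $u$ with $\funcapptrad{\mu}{q_{i}\to^{a_{i}}q_{i+1},u}=\tuple{v,c}$ gives $\funcapptrad{\autobox{val}_{i+1}}{u}=\funcapptrad{\autobox{val}_{i}}{v}+c\leq\funcapptrad{\autobox{val}_{i}^{\prime}}{v}+c=\funcapptrad{\autobox{val}_{i+1}^{\prime}}{u}$.

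Next I would set up the correspondence between runs and transfer the winning condition. Take any run $\pi$ consistent with $\strat$ from $\tuple{q,\autobox{val}}$, with transition sequence $q_{1}\to^{a_{1}}q_{2}\to^{a_{2}}\cdots$. Reading the same transitions from $\tuple{q,\autobox{val}^{\prime}}$ produces a run $\pi^{\prime}$ that visits the same states, is therefore still consistent with $\strat$, and is a legal run (the transitions lie in $\delta$ and the updates are total). Since $\strat$ is $\bud$-winning from $\tuple{q,\autobox{val}^{\prime}}$, some index $i$ has $q_{i}\in F$ and $\funcapptrad{\nu}{q_{i},\autobox{val}_{i}^{\prime}}\leq\bud$. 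Writing $\funcapptrad{\nu}{q_{i}}=\tuple{v,c}$ and applying the monotonicity lemma,
\begin{align*}
\funcapptrad{\nu}{q_{i},\autobox{val}_{i}} & =\funcapptrad{\autobox{val}_{i}}{v}+c\leq\funcapptrad{\autobox{val}_{i}^{\prime}}{v}+c=\funcapptrad{\nu}{q_{i},\autobox{val}_{i}^{\prime}}\leq\bud,
\end{align*}
so the same index $i$ witnesses reaching an accepting state within budget along $\pi$. As $\pi$ was arbitrary, $\strat$ is $\bud$-winning from $\tuple{q,\autobox{val}}$.

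I expect the only delicate point to be the run pairing in the third step, and it is exactly here that the state-only dependence of $\strat$ matters: because both the strategy's responses and the set of available transitions are functions of the state history alone, the environment's choices and the strategy's moves transfer verbatim between the two starting configurations, leaving only the register contents to differ --- and those differ monotonically. The domain ($\D=\N$ or $\D=\Z$) plays no role, since $x\mapsto x+c$ is monotone in either case.
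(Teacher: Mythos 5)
Your proof is correct, and in fact the paper states this claim without any proof at all, treating it as immediate. Your argument --- pairing runs with identical transition sequences (valid because strategies and the transition relation depend only on state histories, not valuations) and then inducting to get $\autobox{val}_{i}\preceq\autobox{val}_{i}^{\prime}$ pointwise, so the output $\funcapptrad{\nu}{q_{i},\autobox{val}_{i}}\leq\funcapptrad{\nu}{q_{i},\autobox{val}_{i}^{\prime}}\leq\bud$ --- is exactly the justification the paper implicitly relies on, with the details filled in properly.
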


\subsection{\label{sub:Games:Exptime-N} $\dacra{\N}$ reachability games can
be solved in $\exptime$}

Consider the simpler class of (unweighted) graph reachability games.
These are played over a structure $G^{f}=\tuple{Q,\Sigma,\delta,q_{0},F}$,
where $Q$ is the finite state space, and $\Sigma$ is the input alphabet.
$\delta\subseteq\cart Q{\cart{\Sigma}Q}$ is the state transition
relation, $q_{0}\in Q$ is the start state, and $F\subseteq Q$ is
the set of accepting states. If the input symbol $a\in\Sigma$ is
played in a state $q$, then the play may adversarially proceed to
any state $q^{\prime}$ so that $\tuple{q,a,q^{\prime}}\in\delta$.
The system can force a win if every run compatible with some strategy
$\func{\strat^{f}}{\cart{\kstar Q}Q}{\Sigma}$ eventually reaches
a state $q_{f}\in F$. Such games can be solved by a recursive back-propagation
algorithm -- corresponding to model checking the formula $\mu X\cdot\left(F\lor\bigvee_{a\in\Sigma}\left[a\right]X\right)$
-- in time $\bigoh{\left|Q\right|\left|\Sigma\right|}$. Observe that
these games obey the ``small strategy'' property: if there is a
winning strategy $\strat$, then there is a winning strategy $\strat_{\autobox{small}}$
which guarantees that no state is visited twice.

From every $\dacra{\N}$ reachability game $G=\tuple{Q,\Sigma,V,\delta,\mu,q_{0},F,\nu}$,
we can project out an unweighted graph reachability game $G^{f}=\tuple{Q,\Sigma,\delta,q_{0},F}$.
Also, $G^{f}$ has a winning strategy iff for some $\bud\in\N$, $G$
has a $k$-winning strategy. Consider the cost of $\strat_{\autobox{small}}$
(computed for $G^{f}$) when used with $G$. Since no run ever visits
the same state twice, $\strat_{\autobox{small}}$ is $c_{0}\left|Q\right|$-winning,
where $c_{0}$ is the largest constant appearing in $G$. We have
thus established an upper-bound on the optimal reachability strategy,
if it exists.

Now assume that we are given an upper-bound $\bud$, and asked to
determine whether a winning strategy $\strat$ exists within this
budget. Because the register increments are non-negative, once a register
$v$ achieves a value larger than $\bud$, it cannot contribute to
the final output, on any suffix $\sigma$ permitted by the winning
strategy. We thus convert $G$ into an unweighted graph reachability
$G_{\bud}^{f}$, where the value of each register is explicitly tracked
in the state, until it is larger than $\bud$. After this, its value
is clamped down to $\bud+$1. 

\begin{definition}
\label{defn:Games:Exptime-N:Gadget} Let $G=\tuple{Q,\Sigma,V,\delta,\mu,q_{0},F,\nu}$
be an $\dacra{\N}$ reachability game. Then, for $\bud\in\N$, define
the corresponding graph reachability game 
\begin{align*}
G_{\bud}^{f} & =\tuple{Q^{\prime}=\cart Q{\left[\bud+1\right]^{\left|V\right|}},\Sigma,\delta^{\prime},\tuple{q_{0},\vector 0},F}
\end{align*}
as follows. Here $\left[\bud+1\right]=\roset{0,1,2,\ldots,\bud+1}$.
Consider some state $\tuple{q,\vector x}\in Q^{\prime}$, and $a\in\Sigma$.
Define $\func{\autobox{val}_{\vector x}}V{\N}$ as $\funcapptrad{\autobox{val}_{\vector x}}u=x_{u}$.
Say also that $\tuple{q,\autobox{val}}\to^{a}\tuple{q^{\prime},\autobox{val}^{\prime}}$,
for some $q^{\prime}$, $\autobox{val}^{\prime}$ is a valid transition
of the game configuration of $G$ on playing symbol $a$. Define $\vector x_{\autobox{val}}^{\prime}$
as $x_{\autobox{val},u}^{\prime}=\funcapptrad{\autobox{val}^{\prime}}u$,
if $\funcapptrad{\autobox{val}^{\prime}}u\leq\bud$. Otherwise $x_{\autobox{val},u}^{\prime}=\bud+1$.
Then $\left(\tuple{q,\vector x}\to^{a}\tuple{q^{\prime},\vector x_{\autobox{val}}^{\prime}}\right)\in\delta^{\prime}$.
Define $\tuple{q,\vector x}\in F\iff\funcapptrad{\nu}{q,\autobox{val}_{\vector x}}\leq\bud$.
\end{definition}
We claim that $G$ has a $\bud$-winning strategy $\strat$ iff the
player can force a win in $G_{\bud}^{f}$. Consider any state $\tuple{q,\vector x}\in Q^{\prime}$
from which the player can force a win. By induction on the assertion
that $\tuple{q,\vector x}$ is winning, we can show there is a $\bud$-winning
strategy from every configuration $\tuple{q,\autobox{val}}$ in $G$
where $\vector x=\vector x_{\autobox{val}}$. Conversely, pick a configuration
$\tuple{q,\autobox{val}}$ of $G$ from which a $\bud$-winning strategy
$\strat$ exists. It follows that $\strat$ is also a winning strategy
in $G_{\bud}^{f}$.

Furthermore, the decision procedure for this problem can be translated
into an optimization procedure: given an upper bound on the budget
$\bud$, determine the smallest $\bud^{\prime}\leq\bud$, if exists,
so that $G$ has a $\bud^{\prime}$-winning strategy. From our discussion
in the main paper, we know that if a winning strategy exists, then
there is a winning strategy $\strat$ with budget at most $c_{0}\left|Q\right|$.
Hence we have:
\begin{theorem}
\label{thm:Games:Exptime-N} The optimal strategy $\strat$ for an
$\dacra{\N}$ reachability game $G$ can be computed in time $\bigoh{\left|Q\right|\left|\Sigma\right|2^{\left|V\right|\log c_{0}\left|Q\right|}}$,
where $c_{0}$ is the largest constant appearing in the description
of $G$.
\end{theorem}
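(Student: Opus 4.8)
The plan is to run the construction of Definition~\ref{defn:Games:Exptime-N:Gadget} at the single budget value already identified in the discussion preceding the theorem, namely $\bud = c_0 |Q|$, and to solve the resulting finite reachability game. First I would justify fixing this budget: projecting $G$ onto the unweighted game $G^{f}$ and taking the no-repeat strategy $\strat_{\autobox{small}}$ shows that whenever \emph{any} winning strategy exists there is one that never revisits a state, hence one whose worst-case cost is at most $c_0 |Q|$, so no larger threshold ever has to be examined. With $\bud$ fixed I would build $G_{\bud}^{f} = \tuple{\cart Q{[\bud+1]^{|V|}}, \Sigma, \delta^{\prime}, \tuple{q_0, \vector 0}, F}$ and solve ordinary reachability on it by the backward fixpoint $\mu X \cdot (F \lor \bigvee_{a \in \Sigma} [a] X)$, which runs in time linear in the number of edges, i.e.\ $\bigoh{|Q^{\prime}| |\Sigma|}$.

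The correctness rests on the equivalence ``$G$ has a $\bud$-winning strategy iff the player forces a win in $G_{\bud}^{f}$'', whose two directions are sketched above the theorem; the step I would take care to nail down is the \emph{soundness of clamping}. Because $\D = \N$, every register increment is non-negative, so once a register exceeds $\bud$ its value only grows and can therefore never again supply a term $\le \bud$ to any output $\funcapptrad{\nu}{q, \autobox{val}}$. Consequently replacing such a value by $\bud + 1$ changes the truth of no test $\funcapptrad{\nu}{q, \autobox{val}} \le \bud$ met along any play, so the state of $G_{\bud}^{f}$ records exactly the part of $\autobox{val}$ relevant to winning within budget $\bud$. I would then establish the two simulations by induction on the back-propagation depth: a forcing certificate for $\tuple{q, \vector x}$ transfers, \emph{unchanged}, to a $\bud$-winning strategy from every $\tuple{q, \autobox{val}}$ whose clamped image is $\vector x$ (the coordinates exceeding $\bud$ being irrelevant), and conversely a $\bud$-winning strategy in $G$ projects to a forcing strategy in $G_{\bud}^{f}$. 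The monotonicity recorded in Claim~\ref{clm:Games:ValuationPoset} is exactly what makes clamping safe here: enlarging a register beyond the budget can never convert a losing position into a winning one.

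Finally I would read off the running time and upgrade the decision procedure to optimization. The state set has $|Q^{\prime}| = |Q| (\bud+1)^{|V|} = |Q| (c_0|Q|+1)^{|V|} = \bigoh{|Q|\, 2^{|V| \log c_0 |Q|}}$ elements, so the reachability solve costs $\bigoh{|Q| |\Sigma|\, 2^{|V| \log c_0 |Q|}}$, which is the claimed bound. To return the \emph{optimal} strategy rather than a yes/no verdict, instead of thresholding the accepting condition I would compute, by the same backward fixpoint over $Q^{\prime}$, the least cost $V(q, \vector x)$ the player can guarantee from $\tuple{q, \vector x}$: an accepting configuration can be cashed out for $\funcapptrad{\nu}{q, \autobox{val}_{\vector x}}$, and otherwise the player chooses a symbol minimising the adversary-maximised successor value. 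Since every finite such value lies in $\roset{0, 1, \ldots, \bud}$, the min--max iteration converges over the same graph without changing the asymptotic cost, and $V(\tuple{q_0, \vector 0})$ is the optimal budget. I expect the main obstacle to be precisely the clamping-soundness lemma and its interaction with the two-way simulation -- in particular verifying that a register clamped while momentarily irrelevant can never be ``re-activated'' to affect an output within budget, which is where non-negativity of $\N$ is indispensable and where the integer-valued case breaks.
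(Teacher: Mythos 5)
Your proposal follows essentially the same route as the paper: the small-strategy argument bounds the optimal budget by $c_{0}\left|Q\right|$, the clamped product game $G_{\bud}^{f}$ of Definition~\ref{defn:Games:Exptime-N:Gadget} is solved by a backward reachability fixpoint, and correctness rests on the same two-way simulation whose soundness (clamping) hinges on the non-negativity of the increments. The only point where you go beyond the paper is the final step: where the paper merely asserts that the decision procedure ``can be translated into an optimization procedure,'' you make this concrete via a single min--max value iteration over the same clamped graph, which is precisely what is needed to reach the stated time bound without an extra search over candidate budgets.
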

Note that the optimal strategy in $\dacra{\N}$ games need not be
memoryless: we might want to return to a state with a different register
valuation. However, the strategy $\strat$ constructed in the proof
of the above theorem is memoryless given the pair $\tuple{q,\autobox{val}}$
of the current state and register valuation.

\subsection{\label{sub:Games:Hardness} Hardness of solving $\dacra{\D}$ reachability
games}

\subsubsection{\label{ssub:Games:ExptimeHard-N} $\dacra{\N}$ games are $\exptimeh$}

We reduce the halting problem for linearly bounded alternating Turing
machines to the problem of determining a winning strategy in an $\dacra{\N}$
reachability game.
\begin{definition}
\label{defn:Games:ExptimeHard-N:LBAltTM} A linearly bounded alternating
Turing machine is a tuple $M=\tuple{Q=\union{Q_{\lor}}{Q_{\land}},\Gamma,\delta,q_{0},F,n}$.
$Q$ is the state space which is partitioned into ``or''-states
$Q_{\lor}$ and and-states $Q_{\land}$. $\Gamma=\roset{0,1}$ is
a binary tape alphabet, and $\func{\delta}{\cart Q{\cart{\Gamma}{\roset{1,2}}}}{\cart Q{\cart{\Gamma}{\roset{L,R}}}}$
is the transition function. $q_{0}\in Q$ is the initial state, and
$F\subseteq Q$ is the set of accepting states. $n\in\N$ is the length
of the tape, specified in unary.

The configuration is a tuple $\gamma=\tuple{q,\sigma,\autobox{pos}}$,
where $q\in Q$ is the current state, $\sigma\in\Gamma^{n}$ is the
tape string, and $\autobox{pos}\in\roset{1,2,\ldots,n}$ is the position
of the tape head. The initial configuration is $\tuple{q_{0},0^{n},1}$.
In each configuration $\tuple{q,\sigma,\autobox{pos}}$, $\delta$
identifies two successors, corresponding to $\funcapptrad{\delta}{q,\sigma_{\autobox{pos}},1}$
and $\funcapptrad{\delta}{q,\sigma_{\autobox{pos}},2}$ respectively.
Starting from a configuration $\tuple{q,\sigma,\autobox{pos}}$, the
machine $M$ eventually halts if either:
\begin{enumerate}
\item $q\in F$ is an accepting state, or
\item $q\in Q_{\lor}$ is an or-state and at least one of its successor
configurations eventually halts, or
\item $q\in Q_{\land}$ is an and-state and both its successor configurations
eventually halt.
\end{enumerate}
\end{definition}
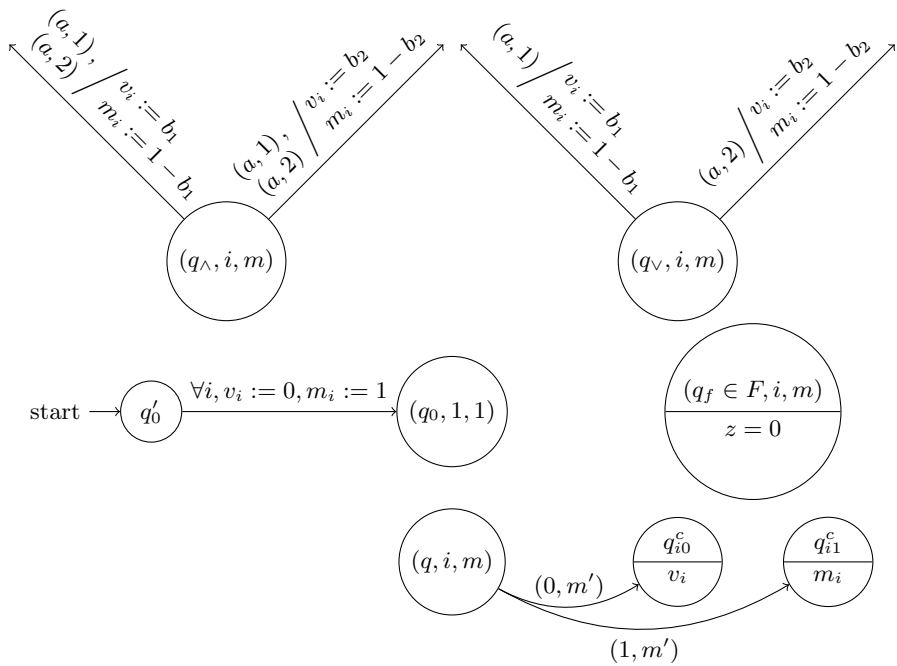
\begin{figure}
\begin{centering}
\begin{tikzpicture}

  \node [state, initial] (q0p) at (0, 0) {$q_{0}^{\prime}$};
  \node [state] (q0_1_1) at (4, 0) {$\tuple{q_{0}, 1, 1}$};
  \node [state with output] (qf_i_m) at (8, 0) {$\tuple{q_f \in F, i, m}$ \nodepart{lower} $z = 0$};
  \node [state with output] (qc_i_0) at (7, -2) {$q^{c}_{i0}$ \nodepart{lower} $v_{i}$};
  \node [state with output] (qc_i_1) at (9, -2) {$q^{c}_{i1}$ \nodepart{lower} $m_{i}$};

  \node [state] (q_i_m) at (4, -2) {$\tuple{q, i, m}$};
  \node [state] (qn_i_m) at (1, 2) {$\tuple{q_{\land}, i, m}$};
  \node [state] (qu_i_m) at (7, 2) {$\tuple{q_{\lor}, i, m}$};

  \node (qn_1) at (-2, 5) {};
  \node (qn_2) at (4, 5) {};
  \node (qu_1) at (4, 5) {};
  \node (qu_2) at (10, 5) {};

  \path [->] (q0p) edge [above] node {$\forall i, v_{i} := 0, m_{i} := 1$} (q0_1_1);
  \path [->] (q_i_m) edge [bend right, above] node {$\tuple{0, m^{\prime}}$} (qc_i_0);
  \path [->] (q_i_m) edge [bend right, below] node {$\tuple{1, m^{\prime}}$} (qc_i_1);

  \path [->] (qn_i_m) edge [above, sloped] node
      {$\quoset {\begin{array}{c}\tuple{a,1},\\\tuple{a,2}\end{array}} {\begin{array}{l}v_{i}:=b_{1}\\m_{i}:=1-b_{1}\end{array}}$}
      (qn_1);
  \path [->] (qn_i_m) edge [above, sloped] node
      {$\quoset {\begin{array}{c}\tuple{a,1},\\\tuple{a,2}\end{array}} {\begin{array}{l}v_{i}:=b_{2}\\m_{i}:=1-b_{2}\end{array}}$}
      (qn_2);
  \path [->] (qu_i_m) edge [above, sloped] node {$\quoset {\tuple{a, 1}} {\begin{array}{l}v_{i}:=b_{1}\\m_{i}:=1-b_{1}\end{array}}$} (qu_1);
  \path [->] (qu_i_m) edge [above, sloped] node {$\quoset {\tuple{a, 2}} {\begin{array}{l}v_{i}:=b_{2}\\m_{i}:=1-b_{2}\end{array}}$} (qu_2);

\end{tikzpicture}
\par\end{centering}

\caption{\label{fig:Games:ExptimeHard-N} Halting gadget $G_{M}$ for a linearly
bounded alternating Turing machine $M$.}
\end{figure}

We construct the gadget shown in figure \ref{fig:Games:ExptimeHard-N}.
There are two types of states: configuration states of the form $\tuple{q,i,m}$
indicating that the TM is in state $q$, the tape head is in position
$i$, and the last choice was move $m\in\roset{1,2}$, and challenge
states of the form $q_{ia}^{c}$ challenging the system to show that
the symbol at position $i$ of the tape is $a$. For each position
$i$ of the tape, we maintain two registers $v_{i}$, $m_{i}$. We
maintain the invariant that $v_{i}=a=1-m_{i}$. Observe that to each
state $\tuple{q,i,m}$ and input symbol $a\in\Gamma$ indicating the
current symbol under the head, there are two successors. If $q=q_{\land}$
is an and-state, then regardless of $m^{\prime}$, on processing $\tuple{a,m^{\prime}}$,
either transition may be taken. If $q=q_{\lor}$ is an or-state, then
on processing $\tuple{a,m^{\prime}}$, we transition to state $\tuple{q^{\prime},j,m^{\prime}}$.
Here $q^{\prime}$, $j$ are respectively the next state and next
tape head position. On each transition, the tape symbol registers
$v_{i}$, $m_{i}$ are appropriately updated. We now formalize:
\begin{definition}
\label{defn:Games:ExptimeHard-N:Gadget} Let $M=\tuple{Q=\union{Q_{\lor}}{Q_{\land}},\Gamma,\delta,q_{0},F,n}$
be a linearly bounded alternating Turing machine. Construct the following
$\dacra{\N}$ reachability game $G_{M}=\tuple{Q^{\prime},\Sigma,V,\delta^{\prime},\mu,q_{0}^{\prime},F^{\prime},\nu}$.
\begin{enumerate}
\item $Q^{\prime}=\union{\roset{q_{0}^{\prime}}}{\union{\left(\cart Q{\cart{\left[n\right]}{\roset{1,2}}}\right)}{\ruset{q_{i,a}^{c}}{\forall i\in\left[n\right],a\in\Gamma}}}$.
\item $\Sigma=\cart{\Gamma}{\roset{1,2}}$.
\item $V=\union{\ruset{v_{i},m_{i}}{\forall i\in\left[n\right]}}{\roset z}$.
\item Define $\delta^{\prime}$ as follows. For all symbols $a\in\Sigma$,
$\left(q_{0}^{\prime}\to^{a}\left(q_{0},1,1\right)\right)\in\delta^{\prime}$.

\begin{enumerate}
\item Let $q\in Q_{\lor}$, $a\in\Gamma$, $m,m^{\prime}\in\roset{1,2}$,
and $i\in\left[n\right]$. Say that $\funcapptrad{\delta}{q,a,m}=\tuple{q^{\prime},b,d}$.
If executing this transition with the tape head at $i$ leads to it
being at position $j$, then $\tuple{q,i,m^{\prime}}\to^{\left(a,m\right)}\tuple{q^{\prime},j,m}\in\delta^{\prime}$
and $\tuple{q,i,m^{\prime}}\to^{\tuple{a,m}}q_{i,a}^{c}\in\delta^{\prime}$.
\item Let $q\in Q_{\land}$, $a\in\Gamma$, $m,m^{\prime},m^{\prime\prime}\in\roset{1,2}$,
and $i\in\left[n\right]$. Say that $\funcapptrad{\delta}{q,a,m}=\tuple{q^{\prime},b,d}$.
Let $j$ be the new head position, then $\tuple{q,i,m^{\prime}}\to^{\left(a,m^{\prime\prime}\right)}\tuple{q^{\prime},j,m}\in\delta^{\prime}$
and $\tuple{q,i,m^{\prime}}\to^{\tuple{a,m^{\prime\prime}}}q_{i,a}^{c}\in\delta^{\prime}$.
\end{enumerate}
\item Define $\mu$ as follows. For all transitions $\tau=\tuple{q_{0}^{\prime}\to^{a}q}\in\delta$,
$\funcapptrad{\mu}{\tau,v_{i}}=\tuple{z,0}$, and $\funcapptrad{\mu}{\tau,m_{i}}=\tuple{z,1}$,
for all $i$. Let $\tau=\tuple{q,i,m}\to^{\tuple{a,m^{\prime}}}\tuple{q^{\prime},j,m^{\prime\prime}}$
be some transition in $\delta^{\prime}$. Let $b$ be the tape symbol
left behind by $\funcapptrad{\delta}{q,a,m^{\prime\prime}}$. Then
define $\funcapptrad{\mu}{\tau,v_{i}}=\tuple{z,b}$ and $\funcapptrad{\mu}{\tau,m_{i}}=\tuple{z,1-b}$.
For all other transitions $\tau\in\delta^{\prime}$ and registers
$v\in V$, define $\funcapptrad{\mu}{\tau,v}=\tuple{v,0}$.
\item Define $F^{\prime}=\union{\ruset{\tuple{q,i,m}}{q\in F}}{\ruset{q_{i,0}^{c},q_{i,1}^{c}}{1\leq i\leq n}}$.
For all $i$, $m$, $\funcapptrad{\nu}{q,i,m}=\tuple{z,0}$. $\funcapptrad{\nu}{q_{i,0}^{c}}=v_{i}$
and $\funcapptrad{\nu}{q_{i,1}^{c}}=m_{i}$, for all $i$.
\end{enumerate}
\end{definition}
Here $\left[n\right]=\roset{1,2,\ldots,n}$, and $Q_{c}=\ruset{q_{i,a}^{c}}{\forall i,a}$
are the challenge states. $z$ is the constant register, always holding
the value $0$. By induction on the assertion that the starting configuration
$\tuple{q,\sigma,i}$ of the TM eventually halts, we have:
\begin{claim}
\label{clm:Games:ExptimeHard-N:Halting--Strategy} Let $\tuple{q,\sigma,i}$
be a configuration starting from which $M$ eventually halts. Then,
for each $m$, there is a $0$-winning strategy $\strat$ in $G_{M}$
starting from $\tuple{\tuple{q,i,m},\autobox{val}}$, where $\autobox{val}$
encodes $\sigma$.
\end{claim}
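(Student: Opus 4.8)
The plan is to induct on the well-founded definition of ``$\tuple{q,\sigma,i}$ eventually halts'': since this predicate is a least fixed point, every halting configuration has a finite derivation built from the three rules of Definition~\ref{defn:Games:ExptimeHard-N:LBAltTM}, and I will construct the strategy $\strat$ by recursion on that derivation. The structural fact I maintain throughout is that whenever the game sits in a configuration-state $\tuple{q,i,m}$ with a valuation $\autobox{val}$ \emph{encoding} a tape string $\sigma$ (that is, $v_j=\sigma_j$ and $m_j=1-\sigma_j$ at every position $j$, with $z=0$), the system always plays the \emph{truthful} symbol $a=\sigma_i$ as the first component of its input letter. The payoff of truth-telling is a one-line register computation which I dispose of first, as it settles the challenge branch uniformly in every inductive case: a challenge transition is among the ``all other'' transitions of $\mu$ and therefore leaves every register unchanged, so reaching $q^{c}_{i,a}$ costs $\funcapptrad{\nu}{q^{c}_{i,0}}=v_i=0$ when $a=0$, and $\funcapptrad{\nu}{q^{c}_{i,1}}=m_i=1-\sigma_i=0$ when $a=1$. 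Either way the adversary's challenge lands in an accepting state at cost exactly $0$.

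For the base case $q\in F$, the game-state $\tuple{q,i,m}$ is itself accepting with $\funcapptrad{\nu}{q,i,m}=\tuple{z,0}$, so the initial configuration already meets the winning condition at cost $0$. For the or-state case ($q\in Q_{\lor}$), the halting derivation supplies a move $m^{*}$ whose successor $\tuple{q',\sigma',j}$ again eventually halts, where $\sigma'=\sigma$ with position $i$ overwritten by the symbol $b$ that $\funcapptrad{\delta}{q,\sigma_i,m^{*}}$ writes; the strategy plays the letter $\tuple{\sigma_i,m^{*}}$. The adversary either challenges -- answered at cost $0$ above -- or continues to $\tuple{q',j,m^{*}}$, and I check that the update $v_i:=b$, $m_i:=1-b$ makes the new valuation encode exactly $\sigma'$, so the induction hypothesis yields a $0$-winning continuation.

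For the and-state case ($q\in Q_{\land}$) the derivation guarantees that \emph{both} successors halt; the strategy plays $\tuple{\sigma_i,m''}$ for arbitrary $m''$, and the adversary now additionally picks the TM-move $m\in\roset{1,2}$. For each such $m$ the continuation state is $\tuple{q',j,m}$ with valuation encoding the corresponding (halting) successor, so the induction hypothesis applies to every branch, while any challenge is again answered at cost $0$. To assemble the global strategy I use that the valuation along a play is a deterministic function of the sequence of states visited; hence a history-dependent $\func{\strat}{\cart{\kstar Q}Q}{\Sigma}$ can splice the truthful first moves together with the inductively obtained sub-strategies, prescribing after a prefix ending in $\tuple{q,i,m}$ the letter above and thereafter deferring to the sub-strategy for whichever continuation the adversary selected. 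Every consistent run then reaches an accepting state at cost $0$, so $\strat$ is $0$-winning; and since the construction never consulted the stored ``last move'' $m$ of the starting state, the conclusion holds for each $m$.

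The main obstacle I anticipate is not any single computation but the bookkeeping that makes the recursion airtight against \emph{every} adversarial response. I must verify that in the and-state case each TM-move $m$ the adversary can choose leads to a genuinely halting successor -- which is precisely why the alternating ``and'' rule demands that \emph{both} successors halt -- and that the challenge option is present, and answered at cost $0$, at \emph{every} configuration-state rather than only at accepting ones, so the system can never gain by misreporting the tape. Lining up the quantifier structure ``for every consistent run, for some index, an accepting state of cost $\le 0$ is reached'' with the recursion on the halting derivation is the delicate part.
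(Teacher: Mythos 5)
Your proof is correct and follows essentially the same route as the paper, which disposes of this claim in one line as ``by induction on the assertion that the starting configuration $\tuple{q,\sigma,i}$ eventually halts.'' Your write-up simply fills in the details that induction implicitly requires -- the tape-encoding invariant, truthful play making every challenge state cost exactly $0$, and the case split over accepting, or-, and and-states -- all of which match the gadget's definition.
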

Let $\strat$ be a $0$-winning strategy in $G_{M}$, and consider
its strategy tree. At some internal node, let it issue input symbol
$\tuple{a_{k},m_{k}}$, and let 
\begin{align*}
\pi & =q_{0}^{\prime}\to^{\tuple{a_{0},m_{0}}}\tuple{q_{1},i_{1},n_{1}}\to^{\tuple{a_{1},m_{1}}}\ldots\to^{\tuple{a_{k-1},m_{k-1}}}\tuple{q_{k},i_{k},m_{k}}
\end{align*}
be the prefix of the run leading up to this node. It follows by induction
on $\pi$ that $a_{k}$ is the current symbol under the tape head
on the appropriate run of $M$ (otherwise the adversary can lead the
player to the challenge state $q_{i_{k},a_{k}}^{c}$, but we assumed
that $\strat$ was a $0$-winning strategy). Since $\strat$ is $0$-winning,
every leaf of its decision tree must point to an accepting state.
Furthermore, any winning strategy has to be associated with a finite
decision tree, it follows that every run of $M$ is accepting. Thus,
\begin{claim}
\label{clm:Games:ExptimeHard-N:Strategy--Halting} If there is a $0$-winning
strategy $\strat$ in $G_{M}$, then $M$ eventually halts.
\end{claim}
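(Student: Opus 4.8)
The plan is to read a $0$-winning strategy $\strat$ as a finite computation tree of $M$, using the challenge states to force the system to declare tape symbols truthfully. First I would establish, by induction on a run prefix consistent with $\strat$ in which the adversary takes only non-challenge transitions, an \emph{encoding invariant}: whenever the play sits at a configuration state $\tuple{q,i,m}$ meant to represent the $M$-configuration $\tuple{q,\sigma,i}$, the register valuation satisfies $\funcapptrad{\autobox{val}}{v_{l}}=\sigma_{l}$ and $\funcapptrad{\autobox{val}}{m_{l}}=1-\sigma_{l}$ for every tape cell $l$. The base case is the edge out of $q_{0}^{\prime}$, whose updates set each $v_{l}$ to $0$ and each $m_{l}$ to $1$, encoding the initial tape $0^{n}$; the inductive step uses $\funcapptrad{\mu}{\tau,v_{i}}=\tuple{z,b}$ and $\funcapptrad{\mu}{\tau,m_{i}}=\tuple{z,1-b}$ at the current head position $i$ together with the fact that $z$ is pinned to $0$, so that the symbol $b$ written by $M$ is recorded exactly.

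Given this invariant, honesty is immediate: if at $\tuple{q,i,m}$ the strategy played a symbol $a\neq\sigma_{i}$, the adversary could challenge to $q_{i,a}^{c}$, whose output is $v_{i}$ when $a=0$ and $m_{i}$ when $a=1$; by the invariant this value equals $1$, which exceeds the budget $0$ in the non-negative domain, contradicting that $\strat$ is $0$-winning. Hence every symbol declared by $\strat$ along such a play is the genuine head symbol, and any challenge transition terminates the play immediately in an accepting state of output $0$. I would then define the tree $T$ of all plays consistent with $\strat$ in which the adversary never challenges. By the invariant each node of $T$ is a genuine configuration of $M$. At an or-node $q\in Q_{\lor}$ the symbol $\tuple{a,m}$ chosen by $\strat$ forces the single non-challenge successor $\tuple{q^{\prime},j,m}$, giving one child, namely the branch the system commits to; at an and-node $q\in Q_{\land}$ both successors $\tuple{q^{\prime},j,1}$ and $\tuple{q^{\prime},j,2}$ remain available to the adversary, giving two children; and a node with $q\in F$ is a leaf.

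Finally, since $\strat$ is winning and the game is finitely branching, every branch of $T$ reaches $F^{\prime}$, and because challenges have been pruned this must be an accepting $M$-configuration ($q\in F$). K\"{o}nig's lemma then forces $T$ to be finite, and an induction on its height finishes the argument: leaves are accepting configurations (clause~1 of the halting definition), or-nodes carry a single halting child (clause~2), and and-nodes carry two halting children (clause~3), so by definition the root configuration $\tuple{q_{0},0^{n},1}$ halts, i.e. $M$ halts. The step I expect to be most delicate is combining honesty with finiteness: one must check that the budget-$0$ challenge really pins down the declared symbol in the $\dacra{\N}$ setting, and that a winning strategy in a finitely-branching reachability game does yield a well-founded, hence by K\"{o}nig finite, decision tree, which is exactly what legitimizes the concluding induction on tree height.
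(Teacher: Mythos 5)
Your proof is correct and takes essentially the same approach as the paper's: an induction along run prefixes showing that the challenge states (via the register encoding $v_{l}=\sigma_{l}$, $m_{l}=1-\sigma_{l}$) force the declared symbols to be truthful, finiteness of the winning strategy's decision tree, and reading that tree as an accepting alternating computation of $M$. The only difference is one of detail rather than substance---the paper asserts the honesty invariant, the finiteness of the decision tree, and the final correspondence in a few terse sentences, whereas you make them explicit (the K\"{o}nig's lemma argument and the induction on tree height matching clauses 1--3 of the halting definition), which fills in rather than departs from the paper's argument.
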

Note that $Q^{\prime}$ has $\bigtheta{\left|Q\right|n}$ elements,
$\Sigma$ has $\bigtheta 1$ elements, and $V$ has $\bigtheta n$
registers, where the tape size $n$ was specified in unary. So $G_{M}$
can be constructed in polynomial time given $M$. We thus conclude
our argument:
\begin{theorem}
\label{thm:Games:ExptimeHard-N} Determining whether there is a winning
strategy with budget $\bud$ in an $\dacra{\N}$ reachability game
is $\exptimeh$.

\begin{remark}
Note that $G_{M}$ never really needs to increment any register, since
during all transitions, the values are either maintained unchanged,
or reset from the constant register $z$. This suggests that the hardness
comes from the combinatorial structure of the game rather than the
specific grammar that allows increments.\end{remark}

\end{theorem}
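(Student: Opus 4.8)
The plan is to complete the construction that the preceding gadget has set up into a clean polynomial-time many-one reduction and then invoke a known $\exptime$-complete source problem. The source is the acceptance (``eventually halts'') problem for linearly bounded alternating Turing machines, as in Definition~\ref{defn:Games:ExptimeHard-N:LBAltTM}: since alternating space coincides with deterministic time one exponential higher (Chandra--Kozen--Stockmeyer), alternating linear space is exactly $\exptime$, so deciding whether such an $M$ halts from its initial configuration $\tuple{q_{0},0^{n},1}$ is $\exptime$-complete. Given $M$, I would build the game $G_{M}$ of Definition~\ref{defn:Games:ExptimeHard-N:Gadget}. The first thing to verify is that this is a genuine polynomial-time reduction: because the tape length $n$ is presented in unary, $G_{M}$ has $\bigtheta{|Q|n}$ states, $\bigtheta{n}$ registers, and $\bigtheta{1}$ input symbols, and every transition and output label is computable locally from $\delta$, so $G_{M}$ is produced in time polynomial in $|M|$.

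Next I would establish correctness, which is where the two already-stated claims do the work. The forced opening transition out of $q_{0}^{\prime}$ installs $v_{i}:=0$ and $m_{i}:=1$ for every $i$, so the start configuration $\tuple{q_{0}^{\prime},\vector 0}$ advances deterministically to $\tuple{\tuple{q_{0},1,1},\autobox{val}}$ with $\autobox{val}$ encoding the all-zero tape $0^{n}$ under the invariant $v_{i}=\sigma_{i}=1-m_{i}$. Applying Claim~\ref{clm:Games:ExptimeHard-N:Halting--Strategy} at the initial configuration, if $M$ halts then there is a $0$-winning strategy from $\tuple{\tuple{q_{0},1,1},\autobox{val}}$, and prepending the (forced) first move yields a $0$-winning strategy in $G_{M}$; conversely, Claim~\ref{clm:Games:ExptimeHard-N:Strategy--Halting} gives that any $0$-winning strategy in $G_{M}$ forces $M$ to halt. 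Hence $G_{M}$ admits a $0$-winning strategy iff $M$ halts, so the decision problem ``does a $\bud$-winning strategy exist?'' is $\exptimeh$ already for the fixed budget $\bud=0$, which is exactly the conclusion of Theorem~\ref{thm:Games:ExptimeHard-N}.

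The technical heart of the argument is isolated in those two claims rather than in the stitching above, and the single feature that makes them go through is the \emph{challenge gadget}. At every configuration state $\tuple{q,i,m}$ the system must announce, as part of its input symbol, the tape symbol $a$ currently under the head; the environment may either accept this and advance to the successor configuration state, or divert to the challenge state $q^{c}_{i,a}$, whose output is $v_{i}$ (for $a=0$) or $m_{i}$ (for $a=1$). Because the invariant $v_{i}=\sigma_{i}=1-m_{i}$ is maintained by resetting these registers from the constant register $z$ whenever cell $i$ is written, the challenged output is $0$ exactly when the announcement is truthful and $1$ otherwise; thus honest play keeps every reachable output at $0$, while any lie is punishable within the $0$ budget. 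This is precisely what forces the decision tree of a $0$-winning strategy to coincide with an accepting computation tree of $M$. The main obstacle I would watch for is therefore not the $\exptime$ bookkeeping but the faithful maintenance of this invariant along both branches of an and-state (where the adversary chooses the successor) and along the or-state branch (where the system chooses), checking that these choices are matched exactly to the alternating semantics of $M$; I would also confirm, as the accompanying remark notes, that all register values stay in $\roset{0,1}$, so no actual increment is ever needed and the hardness is entirely combinatorial.
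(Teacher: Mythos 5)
Your proposal is correct and follows essentially the same route as the paper: reduction from the halting problem for linearly bounded alternating Turing machines via the gadget $G_{M}$ of Definition~\ref{defn:Games:ExptimeHard-N:Gadget}, with correctness split into the same two claims (Claim~\ref{clm:Games:ExptimeHard-N:Halting--Strategy} and Claim~\ref{clm:Games:ExptimeHard-N:Strategy--Halting}), the same polynomial-size accounting, and the same observation that hardness holds already at budget $\bud=0$. Your explicit appeal to Chandra--Kozen--Stockmeyer for $\exptime$-completeness of the source problem, and your spelled-out check that the challenge outputs are $0$ exactly on truthful announcements, merely make explicit what the paper leaves implicit.
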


\subsubsection{\label{ssub:Games:Undecidable-Z} Undecidability of $\dacra{\Z}$
reachability games}

\global\long\def\twocminc{\autobox{{\tt inc}}}

\global\long\def\twocmdec{\autobox{{\tt dec}}}

\global\long\def\twocmgoto#1#2#3{{\tt if}\mbox{ }#1\geq0\mbox{ }{\tt goto}\mbox{ }#2\mbox{ }{\tt else}\mbox{ }{\tt goto}\mbox{ }#3}

\global\long\def\twocmhalt{\autobox{{\tt halt}}}

We reduce the halting problem for two-counter machines to the problem
of solving a $\dacra{\Z}$ reachability game. A two-counter machine
$M$ is a sequence of commands $L=\roset{l_{1},l_{2},\ldots,l_{n}}$,
where each command is of the form $\funcapptrad{\twocminc}c$, $\funcapptrad{\twocmdec}c$,
$\twocmgoto c{l_{1}}{l_{2}}$, or $\twocmhalt$, where $c$ refers
to one of the counters $\roset{c_{1},c_{2}}$, and $l_{1},l_{2}\in L$
is the next location. Both counters are integer-valued and initialized
to $0$, and machine execution proceeds sequentially starting from
location $l_{1}$. The semantics of these machines are standard, and
we will not formally define them.

As with our earlier $\exptimeh$ness proof, the gadget $G_{M}$ we
construct has $4$ registers $v_{1}$, $m_{1}$, $v_{2}$, $m_{2}$.
Registers $v_{1}=-m_{1}$ maintain the value of counter $c_{1}$,
while registers $v_{2}=-m_{2}$ maintain the value of counter $c_{2}$.
The challenge states $q_{c<0}$, $q_{c\geq0}$ for $c\in\roset{c_{1},c_{2}}$
force the system to prove the appropriate assertion about the counter
value. The rest of the states are simply the locations $L$ of the
two-counter machine. In location $l\in L$, the system proposes the
input symbol $\tuple{a,b}\in\cart{\roset{c_{1}<0,c_{1}\geq0}}{\roset{c_{2}<0,c_{2}\geq0}}$.
Each component of the tuple is an assertion about the value of the
respective counter. Control proceeds to the next location $l^{\prime}$
depending on the location at $l$ and the input symbol just received.
The counters are incremented / decremented appropriately. We show
that $G_{M}$ has a $0$-winning strategy $\strat$ iff $M$ eventually
halts.
\begin{definition}
\label{defn:Games:Undecidable-Z:Gadget} Let $M$ be a two-counter
machine. Then, the halting gadget $G_{M}=\tuple{Q,\Sigma,V,\delta,\mu,l_{1},\nu}$
is the following $\dacra{\Z}$ reachability game.
\begin{enumerate}
\item $Q=\union L{\roset{q_{c_{1}<0},q_{c_{1}\geq0},q_{c_{2}<0},q_{c_{2}\geq0}}}$.
We refer to the special states $Q_{c}=\roset{q_{c_{1}<0},q_{c_{1}\geq0},q_{c_{2}<0},q_{c_{2}\geq0}}$
as the challenge states.
\item $\Sigma=\cart{\roset{c_{1}<0,c_{1}\geq0}}{\roset{c_{2}<0,c_{2}\geq0}}$.
\item $V=\roset{v_{1},m_{1},v_{2},m_{2},z}$.
\item Define the transition relation $\delta$ as follows. Let $l_{i}$,
$l_{j}$ be arbitrary program locations so $l_{j}$ can follow $l_{i}$
in execution. Then 

\begin{enumerate}
\item Let $l_{i}$ be either an increment or decrement instruction. Then
$\left(l_{i}\to^{a}l_{i+1}\right)\in\delta$, for each $a\in\Sigma$.
\item Let $l_{i}$ be the instruction $\twocmgoto{c_{1}}l{l^{\prime}}$.
Then, the transitions $l_{i}\to^{\tuple{c_{1}\geq0,a}}l$, $l_{i}\to^{\tuple{c_{1}\geq0,a}}q_{c_{1}\geq0}$,
$l_{i}\to^{\tuple{c_{1}<0,a}}l^{\prime}$, $l_{i}\to^{\tuple{c_{1}<0,a}}q_{c_{1}<0}$
occur in $\delta$. And similarly for the conditional jumps on $c_{2}$.
\end{enumerate}
\item Define the register update function $\mu$ as follows. Let $l_{i}$
be instruction $\funcapptrad{\twocminc}{c_{1}}$. Then $\funcapptrad{\mu}{l_{i}\to^{a}l_{i+1},v_{1}}=\tuple{v_{1},1}$,
and $\funcapptrad{\mu}{l_{i}\to^{a}l_{i+1},m_{1}}=\tuple{m_{1},-1}$.
On $\funcapptrad{\twocmdec}{c_{1}}$, $v_{1}$ is decremented and
$m_{1}$ is incremented. Similarly for $\funcapptrad{\twocminc}{c_{2}}$
and $\funcapptrad{\twocmdec}{c_{2}}$. In all other transitions $\tau$,
$\funcapptrad{\mu}{\tau,v}=\tuple{v,0}$, for each register $v$,
i.e. the register is left unchanged.
\item For all halting locations $l=\twocmhalt\in L$, define $\funcapptrad{\nu}l=\tuple{z,0}$.
For all non-halting locations $l\in L$, define $\funcapptrad{\nu}l=\bot$.
For the challenge states, define $\funcapptrad{\nu}{q_{c_{i}<0}}=v_{i}+1$,
$\funcapptrad{\nu}{q_{c_{i}\geq0}}=m_{i}$.\end{enumerate}
\end{definition}

\begin{theorem}
\label{thm:Games:Undecidable-Z} Determining whether there is a winning
strategy with budget $\bud$ in an $\dacra{\Z}$ reachability game
is undecidable.
\end{theorem}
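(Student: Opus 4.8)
The plan is to reduce the halting problem for two-counter machines — undecidable by Minsky's theorem (and the integer/sign-test variant used here is equivalent, since a zero test is simulated by a decrement followed by a sign test and a restoring increment) — to the existence of a $\bud$-winning strategy, instantiating the gadget $G_{M}$ of Definition~\ref{defn:Games:Undecidable-Z:Gadget} with the fixed budget $\bud=0$. The reduction is plainly computable ($G_{M}$ has size linear in $M$), so it suffices to establish that $G_{M}$ admits a $0$-winning strategy if and only if $M$ halts. Throughout, I maintain the invariant that along any faithful simulation of $M$ the register $v_{i}$ holds the current value of counter $c_{i}$ and $m_{i}$ holds $-c_{i}$; this follows by induction from the update rules (an increment of $c_{i}$ adds $1$ to $v_{i}$ and $-1$ to $m_{i}$, and symmetrically for a decrement, with both registers starting at $0$). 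Note that the location history determines the register valuation, so a strategy, though formally a function of the state history, effectively has access to the counter signs.

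The first and central step is to show that the challenge states force \emph{honest} play. Suppose at a conditional jump on $c_{1}$ the system issues the symbol asserting $c_{1}\geq 0$; the adversary may follow the asserted branch, or divert to $q_{c_{1}\geq 0}$, a terminal accepting state whose output is $\nu(q_{c_{1}\geq 0})=m_{1}=-c_{1}$. If the assertion is false ($c_{1}<0$) this output is strictly positive, hence exceeds the budget $0$, and this run witnesses that the strategy is \emph{not} $0$-winning. A symmetric computation using $\nu(q_{c_{1}<0})=v_{1}+1=c_{1}+1$ handles the assertion $c_{1}<0$, and the registers $v_{2},m_{2}$ handle $c_{2}$. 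Hence every $0$-winning strategy must assert only true sign conditions at each conditional jump — and a true assertion is exactly the branch that the deterministic machine $M$ takes, so the play in which the adversary never challenges faithfully reproduces $M$'s computation. This is precisely where integer-valued registers are essential: the sign test is read off $m_{i}=-c_{i}$, which must be free to take both signs; indeed the $\dacra{\N}$ case is decidable by Theorem~\ref{thm:Games:Exptime-N}.

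With honesty in hand, both directions follow. If $M$ halts, its unique computation reaches a $\twocmhalt$ location in finitely many steps, and the system plays the strategy that issues the true sign of the tested counter at each conditional jump; every consistent run then either terminates at a challenge state with output $\leq 0$ (by honesty) or follows the simulation to the $\twocmhalt$ location whose output is $0$, so in all cases an accepting state with output $\leq\bud=0$ is reached and the strategy is $0$-winning. Conversely, given a $0$-winning strategy $\strat$, the previous step forces $\strat$ to play honestly, so the run in which the adversary always follows the asserted branch coincides step-for-step with $M$'s execution; since $\strat$ is winning this run must reach an accepting state with output $\leq 0$, and along a non-challenging run the only accepting states are the $\twocmhalt$ locations, whence $M$ reaches $\twocmhalt$, i.e.\ $M$ halts.

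I expect the main obstacle to be pinning down this termination direction ($0$-winning $\Rightarrow$ $M$ halts) cleanly: one must argue not merely that every consistent run is finite and accepting, but that the specific non-challenging run reaches a $\twocmhalt$ location rather than some other accepting vertex, and that honesty forces this run to track $M$'s deterministic computation exactly. The honesty lemma — that a single false assertion is always punishable by a diversion to the matching challenge state — is the key lever; once it is established, the remaining work (the register/counter invariant and the finite-run case analysis) is routine bookkeeping.
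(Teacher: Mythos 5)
Your proposal is correct and follows essentially the same route as the paper: the same reduction via the gadget $G_{M}$ with budget $\bud=0$, the same register invariant $v_{i}=c_{i}=-m_{i}$, and the same use of the challenge states to force truthful sign assertions, with both directions argued as in the paper (your phrasing of the converse via the single non-challenging run is just a streamlined version of the paper's finite-strategy-tree leaf analysis). Your explicit ``honesty lemma'' computations and the remark reconciling the sign-test machine model with Minsky's zero-test model are details the paper leaves implicit, not a different approach.
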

\begin{proof}
We establish the claim that $M$ halts iff $G_{M}$ permits a winning
strategy $\gamma$ with budget $0$. The undecidability of solving
$\dacra{\Z}$ reachability games follows from the undecidability of
the halting problem for two-counter machines \cite{Minsky-2CM}.

First, assume that $M$ halts. We want to construct a winning strategy
$\gamma$ with budget $0$. Consider the finite execution of $M$.
After executing the first $k$ steps, the player issues the symbol
$\tuple{s_{1},s_{2}}$, where $s_{1}$, $s_{2}$ are respectively
the signs of the values in $c_{1}$, $c_{2}$ after the two-counter
machine executes for $k$ steps. That this is a $0$-budget strategy
follows from the invariant that after the first $k$ steps, there
is only one run that does not end in a challenge state, and in that
run, $v_{i}$ holds the value of $c_{i}$, and $m_{i}=-v_{i}$.

Conversely, assume that a winning strategy $\gamma$ exists. Then
the decision tree encoding the strategy has to be finite. In any such
strategy tree, challenge states may appear only at the leaves. Observe
that any challenge state $q_{c<0}$ or $q_{c\geq0}$ has a sibling
state $l\in L$. Furthermore, for all non-halting locations $l\neq\twocmhalt$,
$\funcapptrad{\nu}l=\bot$, and so no non-halting location can be
at the leaf of the strategy tree. Thus, some leaf of the strategy
tree has to be in a location $l=\twocmhalt$. Consider the finite
sequence of input symbols leading to this location. Because it is
a winning strategy, at each node of the tree, if the next input symbol
is $\tuple{s_{1},s_{2}}$, then $s_{1}$, $s_{2}$ are respectively
the signs of the values in $c_{1}$, $c_{2}$ after the machine executes
for the appropriate number of steps. Thus, this trace encodes a halting
run of the machine.\end{proof}

\section{\label{sec:Conclusion} Conclusion}

In this paper, we studied two decision problems for additive regular
functions: determining the register complexity, and alternating reachability
in $\acra$s. The register complexity is the largest number $k$ so
that every $\acra$ implementing $f$ has at least $k$ registers.
We developed an abstract characterization of register complexity as
separability and showed that computing it is $\pspacec$. We then
studied the reachability problem in alternating $\acra$s, and showed
that it is undecidable for $\dacra{\Z}$ and $\exptimec$ for $\dacra{\N}$
games. Future work includes proving similar characterizations and
providing algorithms for register minimization in more general models
such as streaming string transducers. String concatenation does not
form a commutative monoid, and the present paper is restricted to
unary operators (increment by constant), and so the technique does
not immediately carry over. Another interesting question is to find
a machine-independent characterization of regular functions $\func f{\kstar{\Sigma}}{\Z_{\bot}}$.
A third direction of work would be extending these ideas to trees
and studying their connection to alternating $\acra$s.

\bibliographystyle{plain}
\bibliography{canon-lncs-camera-Full}

\appendix

\end{document}